\documentclass[12pt]{article}
\usepackage{setspace}

\usepackage{enumerate}
\usepackage{graphicx}
\usepackage[margin=1in]{geometry}
\usepackage{pgf}
\usepackage{tikz}
\usepackage{pgfplots}
\usepackage{caption}
\usepackage{subcaption}
\usepackage{amsmath}
\usepackage{amsfonts}
\usepackage{amssymb}
\usepackage{amsthm}
\usepackage{mathrsfs}
\usepackage{natbib} 
\usepackage{pgfplots}
\pgfplotsset{compat=1.18}
\usepackage{tocbibind}
\usepackage{subcaption}
\usepackage{tikz}
\usepackage{pgfplots}
 
\usepgfplotslibrary{groupplots}
\pgfplotsset{compat=1.18}
\usepackage[toc,page]{appendix}
\usepackage[colorlinks = true,
            linkcolor = blue,
            urlcolor  = blue,
            citecolor = blue,
            anchorcolor = blue]{hyperref}

\usepackage{verbatim}

\usetikzlibrary{decorations.markings}

\newtheorem{lemma}{Lemma}
\newtheorem{prop}{Proposition}

\newtheorem{example}{Example}
\newtheorem{definition}{Definition}

\newlength\tindent
\usepackage{multirow}
\pgfplotstableread[col sep=comma, header=true]{expost_data.csv}\expostdata
\usetikzlibrary{positioning}
\begin{document}

\title{The Depth and Reach of Exploitation: Contracting with Endogenously Naive Consumers\thanks{We thank Roland B{\'e}nabou and Alex Jakobsen for insightful feedback. Earlier versions of this work were presented under the title ``Contract Design with Endogenous Cognition" at UC San Diego, Princeton University, UNSW, UTS, The University of Queensland, Monash University, and the University of Sydney. We thank seminar participants at those institutions for comments on that earlier work.}}
	
	\author{Benjamin Balzer\thanks{University of Technology Sydney (benjamin.balzer@uts.edu.au)} \quad \& \quad Benjamin Young\thanks{University of Technology Sydney (benjamin.young@uts.edu.au)}}
	\date{\today}  
	
	\maketitle

\begin{abstract}
   Consumers can invest resources to understand and avoid their behavioral mistakes, and their incentives to do so depend on the market consequences of remaining naive. We incorporate this feedback between consumers' cognitive states and market outcomes into a general contracting model. Firms face a trade-off between the depth and reach of exploitation: deeper exploitation raises profit from a naive consumer but induces greater cognitive investment, promoting sophistication and shrinking the exploitable consumer base. This trade-off disciplines exploitation and can cause policies that benefit consumers when cognition is fixed to backfire when cognition is endogenous. 
\end{abstract}

 \section{Introduction}\label{sec:intro}

Consumers often make systematic errors. How markets operate in the presence of such naive consumers and their sophisticated counterparts is an important question for both economists and policymakers.\footnote{Systematic consumer errors have been documented across numerous economic settings. Particularly relevant for our application, \citet{DellaVigna2006} document mistaken expectations about future gym attendance, \citet{Meier2010} link present-biased preferences to credit-card borrowing, and \citet{Augenblick2015} provide experimental evidence of present bias in real-effort choices. Behavioral science has also become an important component of policymaking, with governments establishing behavioral-insights teams to inform the design, implementation, and evaluation of policies intended to improve consumer decision-making \citep{OECDBehaviouralScience}.} Moreover, the incidence of naiveté and sophistication appears to vary across individuals and economic contexts.\footnote{Empirical evidence points to systematic variation in consumers' awareness and attention with experience and economic context. In a real-effort experiment, \citet{LeYaouanqSchwardmann2022} find that participants initially overestimate their future effort but become less naive after observing their own behavior, with learning carrying across the related task environments considered in the experiment. In payday lending, where mistakes can be costly and repeated, \citet{AllcottEtAl2022} find that only the most inexperienced quartile of borrowers underpredicts future borrowing, whereas the remaining borrowers predict it correctly on average. Using nationally representative data on ideal, predicted, and realized body weight, \citet{CobbClarkEtAl2024} find that sophisticated individuals exhibit larger deviations from ideal weight than naive individuals, consistent with self-control problems becoming easier to recognize the more severe they are. Conversely, in subscription markets, where individual charges are small, \citet{EinavEtAl2025} document that consumers systematically fail to cancel subscriptions they no longer value, and argue that the cost of inattention is sufficiently low that consumers do not find it worthwhile to monitor these decisions closely.} Yet existing models typically treat the distribution of these cognitive types as fixed, allowing consumers' cognitive states to shape firms' conduct while suppressing the reverse relationship. This one-way treatment is restrictive since consumers can devote resources to understanding their mistakes, and the benefit of doing so depends on the market consequences of remaining naive.\footnote{The broader principle that bounded cognition and the errors it produces respond to economic incentives is well established. \citet{larbi2022} model reasoning depth as a cost--benefit choice and provide experimental evidence that it responds to payoffs. \citet{Enke2021} find that very high stakes increase cognitive effort, albeit with only mild improvements in decision quality; \citet{Zimmermann2020} shows that incentives for belief accuracy mitigate motivated reasoning; and \citet{FehrFinkJack2022} find that subjects are less likely to reject mutually beneficial trades when forgone gains from trade become more consequential. A similar cost--benefit logic appears in the rational-learning models discussed by \citet{Lusardi2014}, which treat financial knowledge as an endogenous investment in human capital.} 
Market interaction may therefore not only be shaped by the distribution of cognitive types but also shape that distribution.  \\

In this paper, we formalize the feedback between market interaction and cognition by modeling naivet\'e and sophistication as the outcomes of a process of costly cognitive investment.\footnote{Becoming sophisticated requires attention, effort, and time, but allows consumers to avoid mistakes whose severity is partly determined by firms' behavior.} We employ this idea in a canonical contracting environment with dynamically inconsistent consumers who may be unaware of their inconsistency (i.e., are naive). Applications are numerous, including credit, subscription, health, retirement, and energy markets.\footnote{More broadly, the framework applies whenever consumers select contracts based partly on behavior they expect from their future selves.} In these environments firms can design contracts to exploit naive consumers. Because avoiding exploitation determines the return to cognitive investment, firms cannot choose how deeply to exploit naive consumers without also affecting how many consumers remain naive. This generates a fundamental trade-off between the intensive margin of exploitation (i.e., its \emph{depth}) and its extensive margin (i.e., its \emph{reach}). More exploitative contracts increase the profit earned from each consumer who remains naive, but also make naivet\'e more costly, inducing greater cognitive investment and thereby reducing the exploitable population. This trade-off disciplines firms and fundamentally changes the evaluation of consumer-protection policies. Interventions that benefit consumers when cognitive states are fixed can weaken the discipline created by cognitive investment, inducing deeper exploitation and potentially reducing consumer welfare.\\

To illustrate this trade-off, consider a consumer choosing how to finance a purchase. One contract offers fixed repayments, while another offers low minimum repayments and the flexibility to repay early. The flexible contract is attractive if the consumer repays early, but costly if they later make only the minimum repayments. When evaluating these contracts, the consumer must form an expectation about how they will repay in the future. A naive consumer expects to repay early, chooses the flexible contract, but subsequently makes only the minimum repayments and accumulates more interest than anticipated. In contrast, a sophisticated consumer anticipates this temptation and selects the commitment provided by the fixed contract. The benefit of sophistication therefore depends directly on the terms of the flexible contract. A lower minimum repayment deepens the exploitation of consumers who remain naive, but also makes naivet\'e more costly, inducing greater cognitive investment and reducing the reach of exploitation.\\

Section~\ref{sec:model} introduces the formal model, which builds on the framework developed by \cite{EliazSpiegler2006}. A profit-maximizing monopolist designs contracts for a continuum of consumers. Contracts provide consumers with the right to select from outcomes, each of which is a consumption-price pair. Consumers are dynamically inconsistent: they select contracts with long-run objectives in mind but give in to temptation when selecting outcomes from their chosen contract. When evaluating contracts, consumers are in one of two cognitive states: sophisticated, in which case they are fully aware of their dynamic inconsistency, or naive, in which case they are unaware of this inconsistency and mistakenly believe that outcomes will be chosen according to their long-run objectives. This misperception leaves naive consumers at risk of exploitation. The key innovation of our paper is that consumers' cognitive states are determined endogenously. Specifically, after observing a menu of contracts, each consumer makes a cognitive-investment decision: consumers invest in cognition and become sophisticated if the depth of exploitation---the long-run welfare loss they avoid by becoming sophisticated---exceeds their idiosyncratic cognitive cost. Otherwise, they do not invest in cognition and remain naive. Consequently, the distribution of cognitive types faced by the monopolist responds to the depth of exploitation implied by the menu of contracts offered, which the monopolist takes into account when designing contracts.\\

There are several distinct advantages of the general contracting environment we employ. First, we do not take a stand on the precise form of consumers' long-run utility or temptation utility. This makes the framework applicable across a wide range of settings, including markets for investment goods, where consumers are tempted to under-consume or under-invest (such as gym memberships or health plans) and leisure goods, where consumers are tempted to over-consume (such as gambling or other vices). Second, we do not impose any \emph{a priori} restrictions on the firm's contract space, and thus abstract from any institutional frictions.\footnote{This is important because restrictions on second- or third-degree price discrimination can generically raise or lower welfare, depending on their effects on rent extraction, consumer sorting, and the volume of trade.} The only non-contractible characteristic is consumers' idiosyncratic cost of becoming sophisticated. The rich contract space therefore isolates the equilibrium consequences of endogenous cognition from those of institutional restrictions on contracting.\footnote{Indeed, the rich contract space makes consumers' private information about their realized cognitive state inconsequential for the results: equilibrium outcomes are as if firms observed whether consumers are naive or sophisticated. Firms nevertheless cannot contract on consumers' cognitive costs, which determine their earlier cognitive-investment decisions.}\\

It should be noted that our formulation of cognitive investment does give rise to an apparent tension: consumers may fail to anticipate how they will behave under a contract, yet evaluate cognition according to the equilibrium consequences of that failure. This feature, however, is common to models of endogenous cognition, which often use optimization not as a literal description of reasoning but as a tractable representation of how cognitive resources respond to incentives.\footnote{Examples include models of rational inattention \citep{Sims2003,MackowiakMatejkaWiederholt2023} when interpreted as models of costly information processing, optimal expectations \citep{Brunnermeier2005}, and motivated beliefs \citep{BenabouTirole2002}. Similarly,  \citet{GulPesendorferStrzalecki2017} model cognitively constrained agents as optimally allocating attention across contingencies and interpret the solution as the outcome of an adjustment process rather than a literal account of reasoning. Empirically, \citet{AlaoiPenta2016} and \citet{gaglianone2022} provide empirical evidence that information-processing and depth of reasoning respond to changes in its costs and benefits.} We adopt such ``rational expectations" because it ties cognitive investment to the same equilibrium outcomes that cognition affects and thereby limits our degrees of freedom. Moreover, this approach still admits natural behavioral interpretations. Under our preferred interpretation, consumers are naive by default but \emph{meta-aware} that deliberate contract evaluation may reveal a failure to anticipate their future behavior.\footnote{Consumers need not enter contract evaluation with a settled view of how they will behave. In the financing example, deliberate evaluation leads the consumer to anticipate making only the minimum repayments, whereas the default process produces the expectation of repaying early. Cognitive investment determines which evaluation process forms the consumer's expectation. In this interpretation, cognitive costs may represent mental effort, deliberation time, information gathering, or the monetary and non-monetary costs of professional advice. More generally, \citet{LoewensteinODonoghueRabin2003} distinguish awareness of a particular bias from meta-awareness of one's general propensity to exhibit it.} Alternatively, our static framework may summarize a dynamic process through which experience, third-party advice, or evolutionary adjustment makes consumers more likely to become sophisticated when the consequences of naivet\'e are more severe.\footnote{\citet{Ali2011} provides a dynamic learning foundation broadly consistent with this interpretation, while \citet{LeYaouanqSchwardmann2022} provide experimental evidence that experience reduces naivet\'e about future behavior. Aspiration-based learning similarly makes behavioral revision more likely following larger payoff shortfalls \citep{KarandikarEtAl1998}.}\\

Section~\ref{sec:results} characterizes the monopolist's optimal contracts under exogenous and endogenous cognition and compares the two benchmarks to isolate the economic forces introduced by cognitive investment. Under exogenous cognition, the rich contract space we allow permits the monopolist to implement its preferred outcome for each cognitive state: the long-run efficient allocation for sophisticated consumers and the profit-maximizing exploitative allocation for naive consumers. The distribution of cognitive types therefore scales profits without affecting the optimal depth of exploitation. Endogenous cognition weakly reduces this depth: at the exogenous optimum, the direct marginal return to deeper exploitation is zero, whereas the induced reduction in reach is costly. Cognitive investment thus deters exploitation relative to any exogenous-cognition benchmark, protecting even consumers who ultimately remain naive.\\

Nonetheless, endogenous cognition does not eliminate exploitation in any environment in which exploitation would persist under exogenous cognition. Thus, there remains a role for policy.  Section~\ref{sec:PA} examines three widely studied interventions intended to improve consumer outcomes: increased competition, sophistication-promoting interventions, and preference nudges. When cognitive types are exogenously fixed, all three improve consumer outcomes. Competition transfers rents from the monopolist to consumers without affecting total welfare, decreasing the depth of exploitation. Sophistication-promoting interventions increase the fraction of sophisticated consumers and thereby reduce the reach of exploitation. Preference nudges make some consumers dynamically consistent, thereby disciplining the contracts offered by the monopolist and reducing both the depth and reach of exploitation. In contrast, with endogenous cognition, the depth--reach tradeoff becomes central to evaluating each policy.\\


Under competition (Section~\ref{subsec:competition}), firms compete by offering contracts that are maximally attractive to each cognitive type. The resulting competitive contracts determine exploitation depth as a by-product: any reduction in exploitation consistent with breaking even would make the contract less attractive to naive consumers, causing the firm to lose them to a rival. As such, competing firms do not internalize how greater exploitation depth reduces its reach, and competition can increase the depth of exploitation relative to monopoly. While competition also benefits consumers by transferring rents from firms, we show that the additional cognitive investment induced by deeper exploitation can be sufficiently costly to reduce consumer welfare overall. Interestingly, whenever competition reduces consumer welfare, it necessarily increases the depth of exploitation and induces greater investment in sophistication, thereby reducing its reach. This decline in reach could be mistaken for market-driven de-biasing, even though it instead reflects costly cognitive defense against deeper exploitation.\\

Under endogenous cognition, there are two forms of sophistication-promoting intervention: those that reduce cognitive costs (i.e., ex-ante policies) and those that improve the mapping from cognitive investment into cognitive states (i.e., ex-post policies). Ex-ante sophistication-promoting interventions (Section~\ref{ssec:EA}) must be evaluated not only by their direct effect on the reach of exploitation, but also by how they change the responsiveness of that reach to exploitation depth. An intervention concentrated among consumers with relatively low cognitive costs removes from the naive population those most likely to become sophisticated as exploitation deepens. The remaining naive population is consequently less responsive to exploitation, weakening the discipline imposed on the monopolist. The resulting increase in equilibrium exploitation depth can be sufficiently large to reduce consumer welfare. Moreover, observed sophistication need not provide a reliable measure of policy effectiveness: an effective, welfare-improving intervention can reduce exploitation so substantially that fewer consumers find cognitive investment worthwhile, causing the equilibrium fraction of sophisticated consumers to fall.\\

Ex-post sophistication-promoting interventions (Section~\ref{ssec:EP}) instead improve consumers' prospects of becoming sophisticated after their initial cognitive-investment decisions. Anticipating this later protection reduces the return to investing ex ante, creating \emph{cognitive moral hazard}. The monopolist often responds by increasing the depth of exploitation, making naive consumers worse off. Indeed, this response can be sufficiently strong to outweigh the policy's direct sophistication benefits, so that strengthening the intervention reduces consumer welfare.\\
 
Finally, \emph{preference nudges} (Section~\ref{subsec:PN}) make a fraction of consumers dynamically consistent and generate two opposing effects on the depth of exploitation. First, because dynamically consistent and naive consumers are indistinguishable when selecting contracts, distortions embedded in contracts designed for naive consumers also reduce the welfare the monopolist can extract from dynamically consistent consumers, raising the marginal cost of deeper exploitation. Second, as more consumers become dynamically consistent, the contractual distortions required to sustain any given depth of exploitation increase, reducing the monopolist's excess profit from a consumer remaining naive rather than becoming sophisticated, reducing incentives to keep consumers naive. When this latter effect dominates, a stronger nudge can increase exploitation depth and reduce consumer welfare.\\

Several of these policy effects have precedents in the literature. Existing work shows that competition, sophistication-promoting interventions, and preference nudges can generate unintended equilibrium consequences, but these results often arise through intervention-specific mechanisms.\footnote{Under competition, exploitation may persist because firms cannot profitably debias consumers \citep{GabaixLaibson2006}, intensify through obfuscation \citep{Spiegler2006,Carlin2009}, or increase through firms' provision of inferior products \citep{GampKrahmer2022}. Cognition-promoting interventions may weaken price competition \citep{ArmstrongChen2009,deClippelEtAl2014}, erode cross-subsidies sustained by naive consumers \citep{KosfeldSchuwer2017}, or induce firms to renew obfuscation \citep{CarlinManso2011}. Other consumer-protection policies may crowd out private information acquisition \citep{ArmstrongVickersZhou2009}, raise equilibrium prices or alter product design in response to disclosure \citep{KamenicaMullainathanThaler2011,Spiegler2015}, or make the exploitation of naive consumers relatively more attractive \citep{MurookaSchwarz2019}. See also \citet{Grubb2015} on regulation that substitutes for private attention and \citet{PiccioneSpiegler2012} on firms' responses to improvements in comparability.} We instead obtain these possibilities within a single contracting framework by introducing one common and economically natural margin: consumers’ incentives to become sophisticated respond to the exploitation they face.\footnote{More broadly, our use of a common mechanism across market and policy environments relates to \citet{EliazSpiegler2015}, who advocate moving behavioral industrial organization beyond isolated bias--market combinations.} Policy changes the depth of exploitation, depth changes consumers' incentives to become sophisticated, and the resulting change in reach feeds back into firms' contracting decisions. The depth--reach mechanism therefore isolates a general source of unintended policy effects: an intervention that reduces one margin of exploitation may weaken the discipline it imposes on the other, potentially reversing its effect on consumer welfare.

\section{Related Literature} \label{sec:literature}  %
 Our paper contributes to two strands of literature: behavioral industrial organization and endogenous cognition in markets. The behavioral-industrial-organization literature primarily studies consumer exploitation while taking the cognitive composition of the market as given. In the context of dynamic inconsistency, \citet{dvm2004} show how firms design contracts around consumers' mistaken predictions of their future behavior, while \citet{EliazSpiegler2006} develop a general contracting framework in which dynamically inconsistent consumers differ in their degree of naivety and a monopolist screens them through commitment and exploitative contracts. More broadly, behavioral-industrial-organization models study how firms respond to consumers who overlook contingent charges \citep{WoodwardHall2012}, mispredict future demand \citep{GrubbOsborne2015}, search too little or remain excessively inert \citep{Grubb2015BestPrice,Armstrong2015}, and overweight salient attributes \citep{BordaloGennaioliShleifer2016}; see \citet{Grubb2015Overview} and \citet{HeidhuesKoszegi2018BIO} for overviews. These models richly characterize the contracts, distortions, and rents associated with consumer mistakes, but generally take the prevalence or intensity of the relevant behavioral friction as given. Firms' choices therefore determine the depth of exploitation, but not its reach through the endogenous cognitive response that we study.\\

There are a number of papers that study endogenous cognition in markets. Most examine consumers' optimal acquisition, processing, or interpretation of information about external payoff-relevant objects. These include information about prices or states of the economy \citep{GulPesendorferStrzalecki2017,AngeletosSastry2025}, economic fundamentals or aggregate actions \citep{ColomboFemminisPavan2014,HebertLao2023}, and the perception of private information \citep{Young2022,Mensch2022,BalzerYoung2025}. Related work studies how firms respond to consumers' endogenous information processing \citep{CusumanoFabbriPieroth2024}, while investors in \citet{CarlinManso2011} learn about financial-product offerings that firms may strategically obfuscate. Other applications allow market incentives to shape strategic sophistication regarding the informational content of prices \citep{Martin2017} or beliefs about lending outcomes \citep{BridetSchwardmann2024}. Our model instead concerns cognition about an \emph{internal} object: consumers' awareness of their own dynamic inconsistency. To the best of our knowledge, we are the first to embed costly investment in such awareness into a market-contracting problem in which firms anticipate how their contracts affect consumers' cognitive states. This allows us to identify the resulting trade-off between the depth and reach of exploitation, place it at the center of the firm's contracting problem, and derive its implications for behavioral-policy evaluation.

\section{The Model}\label{sec:model}

\textbf{Model Overview:} Consider an interaction between a profit-maximizing monopolist and a unit mass of dynamically inconsistent consumers (see Figure~\ref{fig:timeline}). The monopolist first offers a menu of contracts. After observing the menu, and before selecting a contract, consumers decide whether to incur an idiosyncratic cognitive cost and become sophisticated. Consumers first select contracts according to their perceived long-run payoffs and then subsequently select outcomes according to their short-run preferences. Sophisticated consumers correctly anticipate this behavior, whereas naive consumers believe that their future choices will reflect their long-run objectives. The return to cognitive investment is therefore the long-run utility gain from selecting a contract as a sophisticated rather than as a naive consumer. Because this return depends on the offered menu, the monopolist influences the equilibrium distribution of cognitive states and anticipates this response when designing contracts.\\

\begin{figure}[!h]
    \centering
\resizebox{\textwidth}{!}{
   \begin{tikzpicture}[node distance=2.5cm]

\def\xL{2}
\def\xM{7}
\def\xR{12}

\node[draw, circle, minimum size=3pt, fill] (start) at (-2,0) {};

\node[draw, minimum width=2.5cm, minimum height=1.6cm] (boxL1) at (\xL,1.8) {
\begin{tabular}{c}
\begin{tabular}{c}
\small{\textbf{Sophistication:}}\\
\small{Contracts evaluated}\\
\small{\textbf{aware} of}\\
\small{time-inconsistency}
\end{tabular}
\end{tabular}
};

\node[draw, minimum width=2.5cm, minimum height=1.6cm] (boxL2) at (\xL,-1.8) {
\begin{tabular}{c}
\small{\textbf{Naivet{\'e}:}}\\
\small{Contracts evaluated}\\
\small{\textbf{unaware} of}\\
\small{time-inconsistency}
\end{tabular}
};

\node[draw, minimum width=2.5cm, minimum height=1.6cm] (box1) at (\xM,1.8) {
\begin{tabular}{c}
\small{Contract $\sigma$ selected}\\
\small{under $U$ preference}\\
\small{anticipating $V$-choice} 
\end{tabular}
};

\node[draw, minimum width=2.5cm, minimum height=1.6cm] (box2) at (\xM,-1.8) {
\begin{tabular}{c}
\small{Contract $\sigma$ selected}\\
\small{under $U$ preferences,}\\
\small{anticipating $U$-choice} 
\end{tabular}
};

\node[draw, minimum width=2.5cm, minimum height=1.6cm] (box3) at (\xR,1.8) {
\begin{tabular}{c}
\small{Option $y \in \sigma$ selected}\\
\small{under $V$}\\
\small{preference}
\end{tabular}
};

\node[draw, minimum width=2.5cm, minimum height=1.6cm] (box4) at (\xR,-1.8) {
\begin{tabular}{c}
\small{Option $y \in \sigma$ selected}\\
\small{under $V$}\\
\small{preference}
\end{tabular}
};

\draw[thick,->] (start) -- (-0.2, 2);
\draw[thick,->] (start) -- (-0.2,-2);

\draw[thick,->] (boxL1) -- (box1);
\draw[thick,->] (boxL2) -- (box2);

\draw[thick,->] (box1) -- (box3);
\draw[thick,->] (box2) -- (box4);

 \draw[dashed, thick] (4.5,-5) -- (4.5,4);
 \draw[dashed, thick] (9.4,-5) -- (9.4,4);
 \draw[dashed, thick] (-0.5,-5) -- (-0.5,4);
 \draw[dashed, thick] (-3.6, -5) -- (-3.6, 4);

\node at (-2,-4) {Cognitive};
\node at (-2,-4.5) {Investment};
\node at (2,-4.5) {Evaluation Phase};
 
\node at (7,-4.5) {Contract Selection};

\node at (12,-4.5) {Outcome Selection};

\node at (-1.5,1.7) {Invest};
\node at (-2,-1.5) {Don't Invest};

\node[draw, minimum width=3.2cm, minimum height=1.6cm] (boxLL) at (-5.5,0) {
\begin{tabular}{c}
\small{Monopolist}\\
\small{designs menu}\\
\small{of contracts}
\end{tabular}
};
\draw[thick,->] (boxLL) -- (start);
\end{tikzpicture}}
    \caption{Model Overview}
    \label{fig:timeline}
\end{figure}

\textbf{Outcomes, Contracts, and the Monopolist:} Consumption is denoted by \(z\in Z=[0,1]\). An outcome is a pair \(y=(\alpha,p)\), where \(\alpha\in\Delta(Z)\) is a distribution over consumption and \(p\in\mathbb{R}\) is a price. With a slight abuse of notation, we write \((z,p)\) for an outcome that delivers consumption \(z\) with probability one. A contract \(\sigma\) is a collection of outcomes, and a menu \(\Sigma\) is a collection of contracts. Consumers first select a contract from the menu and subsequently select an outcome from that contract.\\

Providing consumption \(z\) costs the monopolist \(c(z)\), where
\(c:Z\rightarrow\mathbb{R}\) is continuous and \(c(0)=0\). The profit generated
by outcome \(y=(\alpha,p)\) is
\(
    \pi(y)
    \equiv
    p-\int_0^1 c(z)\,d\alpha(z).
\)
The monopolist designs the menu of contracts \(\Sigma\) to maximize expected profit, taking into account how the menu affects consumers' cognitive-investment, contract-selection, and outcome-choice decisions.\\

\textbf{Consumer Preferences and Dynamic Inconsistency:}
Consumers' preferences over outcomes differ between the contract-selection and
outcome-selection phases. When selecting a contract, consumers evaluate an
outcome \(y=(\alpha,p)\) according to long-run utility
\(
    U(y)
    \equiv
    \int_0^1 u(z)\,d\alpha(z)-p,
\)
whereas, when selecting an outcome from that contract, they maximize short-run
utility
\(
    V(y)
    \equiv
    \int_0^1 v(z)\,d\alpha(z)-p.
\)
The functions \(u\) and \(v\) are continuous and satisfy \(u(0)=v(0)=0\). Because \(u\) and \(v\) may differ, long-run objectives and short-run temptations may favor different outcomes.\\

Every menu contains the outside-option contract \(\bar{\sigma}=\{(0,0)\}\), which gives consumers zero utility under both \(U\) and \(V\). Because \(c(0)=u(0)=v(0)=0\), the maximal values of long-run allocative welfare, \(u(z)-c(z)\), and short-run allocative welfare, \(v(z)-c(z)\), are nonnegative. We assume that each expression is uniquely maximized and denote the respective maximizers by \( z_u^*\equiv\arg\max_{z\in Z}\{u(z)-c(z)\} \) and \( z_v^*\equiv\arg\max_{z\in Z}\{v(z)-c(z)\}. \)\footnote{Uniqueness limits the number of equilibrium outcomes and avoids case distinctions that add no substantive insight.}\\

\textbf{Cognitive States:} When evaluating a menu, a consumer is in one of two cognitive states, \(\omega\in\{S,N\}\): sophisticated (\(S\)) or naive (\(N\)). Sophisticated consumers recognize their dynamic inconsistency and correctly anticipate that their outcome choices will be governed by \(V\). Naive consumers instead believe that they are dynamically consistent and therefore misperceive their outcome choices to be governed by \(U\). Let \(\eta\) denote the fraction of sophisticated consumers, so that \(1-\eta\) is the fraction of naive consumers. A key feature of the model is that \(\eta\) is determined endogenously through consumers' cognitive-investment decisions, as described below. \\

\textbf{Direct Contracts and Directed Menus:} Appendix~\ref{app:cognitive_equilibrium} formally specifies the strategies available to the monopolist and consumers and defines the solution concept we applied, termed \emph{cognitive equilibrium}.\footnote{This term was coined in \citet{Young2022}, where it was applied to misperception of preferences. We adapt this concept to our setting of endogenous awareness of dynamic inconsistency. Specifically, given a menu, consumers are in cognitive equilibrium if (i) they select contracts to maximize perceived long-run utility given their cognitive state, (ii) they subsequently select outcomes according to \(V\), and (iii) their cognitive-investment decisions maximize realized long-run utility net of cognitive costs. Sophisticated consumers correctly anticipate their subsequent outcome choices, whereas naive consumers mistakenly expect these choices to be governed by \(U\).} The appendix also establishes a revelation principle showing that it is without loss of generality to restrict attention to direct contracts and directed menus.\\ 

A contract \(\sigma\) is \emph{direct} if it consists of two outcomes, \(\sigma=\{y_u,y_v\}\), where \(y_u\) is selected under \(U\) and \(y_v\) is selected under \(V\). A direct contract therefore satisfies 

\[ U(y_u)\geq U(y_v) \qquad\text{and}\qquad V(y_v)\geq V(y_u). \] 

Thus, all consumers ultimately select \(y_v\). Sophisticated consumers anticipate this choice, whereas naive consumers mistakenly expect to select \(y_u\).\\ 

A menu is \emph{directed} if it consists of two direct contracts, \(\{\sigma^S,\sigma^N\}\), intended for sophisticated and naive consumers, respectively. Sophisticated consumers evaluate contract \(\sigma^j\) according to \(U(y_v^j)\), correctly anticipating their subsequent \(V\)-choice. Naive consumers instead evaluate it according to \(U(y_u^j)\), believing that their subsequent choice will be governed by \(U\). A directed menu therefore satisfies 

\[ U(y_v^S) \geq \max\bigl\{U(y_v^N),0\bigr\} \qquad\text{and}\qquad U(y_u^N) \geq \max\bigl\{U(y_u^S),0\bigr\}. \] 

The first condition ensures that sophisticated consumers select \(\sigma^S\) and participate, while the second ensures that naive consumers select \(\sigma^N\) and perceive participation as beneficial. We denote the set of directed menus satisfying these constraints by \(\mathcal{M}\).\\

\textbf{Cognitive Investment and Exploitation:}
After observing the menu but before selecting a contract, each consumer decides whether to invest in cognition.\footnote{Naivet\'e about a particular behavioral response does not preclude meta-awareness of one's general susceptibility to such mistakes; see \citet{LoewensteinODonoghueRabin2003}. A consumer may therefore understand the expected value of consulting an adviser, using a decision aid, or engaging in deliberate reflection without already knowing what that process will reveal. Cognitive investment activates such a debiasing technology: before investing, the consumer behaves as if future choices will reflect \(u\); after investing, the consumer correctly anticipates that those choices will instead be governed by \(v\). The benefit of investment is consequently the long-run utility loss that this change in awareness allows the consumer to avoid.} A consumer who invests, \(e=1\), becomes
sophisticated, whereas a consumer who does not invest, \(e=0\), remains naive. Investment entails an idiosyncratic cost \(\kappa\geq0\), distributed across consumers according to a continuous and strictly increasing distribution function \(F\) with support \([0,\bar{\kappa}]\), where \(F(0)=0\) and \[ \bar{\kappa} \geq \max_{z\in Z}\{u(z)-v(z)\} - \min_{z\in Z}\{u(z)-v(z)\}.\footnote{As we will show, the expression on the right is the largest implementable depth of exploitation. Consequently, the assumption ensures that both cognitive states occur with positive probability at every positive interior depth of exploitation.} \] Let $\mathcal{F}$ denote the set of distributions satisfying these properties.\\

The benefit of cognitive investment is equal to the long-run utility gain from selecting a contract as a sophisticated consumer rather than a naive one. A sophisticated consumer selects \(\sigma^S\) and subsequently chooses \(y_v^S\), whereas a naive consumer selects \(\sigma^N\) and
subsequently chooses \(y_v^N\). We therefore define \(x\equiv U(y_v^S)-U(y_v^N).\) A consumer is willing to invest in cognition if \(\kappa \le x\).\footnote{The assumption that consumers correctly evaluate the return to cognitive investment is deliberately conservative. It ties cognition to the utility consequences generated by the offered menu and therefore leaves the modeler no separate freedom to specify how perceived cognitive benefits respond to contracts or policies. Allowing consumers to misperceive these benefits, learn from noisy feedback, rely on advisers of varying quality, or follow other boundedly rational investment rules would introduce additional behavioral degrees of freedom. Such flexibility could generate further divergences between a policy's direct effect and its equilibrium consequences.} Hence, the equilibrium fraction of sophisticated consumers is \( \eta=F(x) \). Through its choice of menu, the monopolist influences the equilibrium distribution of cognitive states. \\

We note that \(x>0\) if and only if a naive consumer makes a suboptimal contract-selection decision. Thus, we refer to \(x\) as the \emph{depth of exploitation}. A directed menu is \emph{exploitative} if \(x>0\), so that naive consumers obtain strictly lower long-run utility than sophisticated consumers. The value \(x\) consequently has a dual interpretation: it measures both the loss from remaining naive and the private return to cognitive investment. Greater depth of exploitation therefore induces more consumers to become sophisticated.\\

\section{Equilibrium Analysis}\label{sec:results}

Our analysis proceeds in two parts. First, in order to benchmark our results, we characterize equilibrium contracts and exploitation when cognitive states are exogenous. Second, we make cognitive states endogenous and characterize the monopolist's optimal menu and the resulting distribution of cognitive states. Comparing the two benchmarks shows how endogenous cognition disciplines exploitation and establishes the mechanisms underlying our subsequent policy analysis.

\subsection{Exogenous Cognitive States}\label{subsec:exo}

We first consider the benchmark in which consumers' cognitive states are exogenous. A fixed fraction \(\bar{\eta}\in[0,1]\) of consumers is
sophisticated, while the remaining fraction \(1-\bar{\eta}\) is naive.
Because consumers' cognitive states do not respond to the available contracts, the monopolist's menu does not affect their distribution. Maximizing profits thus requires solving 
\[
    \max_{\{\sigma^S,\sigma^N\}\in\mathcal{M}}
    \bar{\eta}\,\pi(y_v^S)
    +
    (1-\bar{\eta})\,\pi(y_v^N).
\]
In words, the monopolist chooses a directed menu of contracts to discriminate between sophisticated and naive consumers. The following lemma characterizes the outcomes received by each cognitive type.

\begin{lemma}\label{lemma:contracts_exogenous}
Suppose that cognitive states are exogenous. Then:
\begin{enumerate}[(a)]
    \item sophisticated consumers, if any, consume \(z_u^*\) at price
    \(u(z_u^*)\); and
    \item naive consumers, if any, consume \(z_v^*\) at price 
    \(  
        v(z_v^*)
        +
        \max \limits_{z\in Z}\bigl\{u(z)-v(z)\bigr\} 
    \).
\end{enumerate}
\end{lemma}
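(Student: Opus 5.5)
We prove the lemma by solving a relaxed problem in which each cognitive type's contract is optimized subject only to that type's own participation constraint. We then show the relaxed solution is a directed menu, so it solves the original problem. Since $\bar\eta$ is fixed, the objective is additively separable across $\sigma^S$ and $\sigma^N$. Dropping the cross (self-selection) constraints gives two independent problems.

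\textbf{Sophisticated contract.} Maximize $\pi(y_v^S)$ subject to $U(y_v^S)\ge 0$, together with the internal constraints $V(y_v^S)\ge V(y_u^S)$ and $U(y_u^S)\ge U(y_v^S)$. Setting $y_u^S=y_v^S$ satisfies the internal constraints trivially. For any outcome $y=(\alpha,p)$ we have $\pi(y)=\int(u-c)\,d\alpha - U(y)\le u(z_u^*)-c(z_u^*)$ whenever $U(y)\ge 0$. Equality requires that $\alpha$ put full mass on $z_u^*$, which follows from uniqueness of the maximizer and continuity, and that $U(y)=0$, i.e.\ $p=u(z_u^*)$. This gives part (a). Uniqueness of the optimal outcome also pins down what sophisticated consumers actually receive.

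\textbf{Naive contract.} Maximize $\pi(y_v^N)$ subject to $U(y_u^N)\ge 0$, $U(y_u^N)\ge U(y_v^N)$ and $V(y_v^N)\ge V(y_u^N)$. Adding the last two constraints and using $U-V=\int(u-v)\,d\alpha$ gives
\[
U(y_v^N)\;\le\; V(y_v^N)-V(y_u^N)+U(y_u^N)\;\ldots
\]
It is cleaner to bound the price directly. Write $y_v^N=(\alpha_v,p_v)$ and $y_u^N=(\alpha_u,p_u)$. From $V(y_v)\ge V(y_u)$ we get $p_v\le \int v\,d\alpha_v - \int v\,d\alpha_u + p_u$. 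From $U(y_u)\ge 0$ we get $p_u\le \int u\,d\alpha_u$. Hence
\[
p_v\le \int v\,d\alpha_v+\int (u-v)\,d\alpha_u\le \int v\,d\alpha_v+\max_{z}\{u(z)-v(z)\}.
\]
Therefore $\pi(y_v)\le \int(v-c)\,d\alpha_v+\max_z\{u-v\}\le v(z_v^*)-c(z_v^*)+\max_z\{u-v\}$. For the bound to hold with equality, $\alpha_v$ must be the point mass on $z_v^*$ and the price must equal the one stated in (b).

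To attain the bound, let $\hat z\in\arg\max_z\{u(z)-v(z)\}$, which exists by continuity on a compact set. Take $y_u^N=(\hat z,u(\hat z))$ and $y_v^N=(z_v^*,\,v(z_v^*)+u(\hat z)-v(\hat z))$. Then $U(y_u)=0$ and $V(y_v)=V(y_u)$. The remaining constraint $U(y_u)\ge U(y_v)$ reduces to $u(\hat z)-v(\hat z)\ge u(z_v^*)-v(z_v^*)$, which holds by the choice of $\hat z$.

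\textbf{Feasibility of the combined menu.} The main obstacle is checking the cross constraints. For $U(y_v^S)\ge U(y_v^N)$: the left side is $0$, and
\[
U(y_v^N)=u(z_v^*)-v(z_v^*)-\max_z\{u-v\}\le 0.
\]
For $U(y_u^N)\ge U(y_u^S)$: the left side is $0$, and $U(y_u^S)=U(y_v^S)=0$. If the degenerate choice $y_u^S=y_v^S$ ever caused trouble, we could add a decoy with very high price to $\sigma^S$ instead; here it does not. Both cross constraints therefore hold, so the relaxed optimum is feasible in $\mathcal M$ and solves the monopolist's problem. The upper bounds are attained only by the stated consumption–price pairs, which gives the characterization in (a) and (b) for whichever types are present ($\bar\eta>0$ or $\bar\eta<1$).
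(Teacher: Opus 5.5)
Your proof is correct and follows essentially the same route as the paper. Both solve each type's problem as if the cognitive state were observed (your relaxed problem), bound naive profit using only perceived participation and the $V$-choice constraint, attain the bound with a point mass on $z_v^*$ and a bait on a maximizer of $u-v$, and then verify that the remaining internal constraint and both cross constraints hold, since the sophisticated outcome and the bait each give perceived utility zero. Your bound-then-attain formulation, which handles lotteries explicitly and notes uniqueness of the optimal realized outcomes, is a slightly tidier version of the paper's ``constraints bind'' argument; just delete the unfinished display that begins ``Adding the last two constraints.''
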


The monopolist implements the long-run efficient consumption level \(z_u^*\) for sophisticated consumers and sets the price to extract their entire welfare. Because sophisticated consumers correctly anticipate their subsequent outcome choices, they are not exploited, even though the monopolist leaves them no information rents.\\ 

The contract directed toward naive consumers instead exploits their misperception of their future behavior. When selecting a contract, a naive consumer expects their subsequent outcome choice to maximize long-run utility \(u\). The monopolist attracts them with a ``bait'' outcome that maximizes the difference between its long-run and short-run valuation. The consumer does not ultimately select this outcome. Instead, they yield to temptation and consume \(z_v^*\), as anticipated by the monopolist. The difference between the consumer's anticipated and realized choices allows the monopolist to charge \(v(z_v^*)+\max \limits_{z\in Z}\{u(z)-v(z)\}, \) thereby converting the consumer's misperception into profit.\footnote{This contract structure is consistent with empirical evidence on consumer misprediction: \cite{DellaVigna2006} document that health-club members systematically overestimate future attendance, paying over \$17 per visit when a \$10 per-visit pass is available and forgoing average savings of \$600. The monthly contract exploits the gap between planned and actual usage in precisely the manner described here. \cite{grubb2009} documents the same mechanism in cellular phone markets: overconfident consumers underestimate future usage and choose plans with steep overage charges, which firms design around this predictable mistake.}\\ 

Sophisticated consumers receive zero long-run utility, while naive consumers receive long-run utility \( u(z_v^*) - v(z_v^*) - \max_{z\in Z}\bigl\{u(z)-v(z)\bigr\}  \). 
Thus, define 
\[ \bar{x} \equiv \max_{z\in Z}\bigl\{u(z)-v(z)\bigr\} - \bigl[u(z_v^*)-v(z_v^*)\bigr]. \] 
Then, $\bar{x}$ represents the depth of exploitation in the exogenous benchmark. Thus, the market is exploitative whenever $\bar{\eta} < 1$ and $\bar{x}>0$. The following proposition characterizes when this is the case.

\begin{prop}
\label{prop:exoiff} 
Suppose that cognitive states are exogenous and that the fraction of sophisticated consumers is \(\bar{\eta}\in[0,1]\). The market is exploitative if and only if \( \bar{\eta}<1 \) and \( z_v^* \notin \arg\max_{z\in Z}\bigl\{u(z)-v(z)\bigr\} \). 
\end{prop}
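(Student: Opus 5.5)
We derive the statement directly from Lemma~\ref{lemma:contracts_exogenous} and the definition of $\bar{x}$. By the lemma, at the monopolist's optimum under exogenous cognition the depth of exploitation is
\[
x = U(y_v^S) - U(y_v^N) = 0 - \Bigl(u(z_v^*) - v(z_v^*) - \max_{z\in Z}\{u(z)-v(z)\}\Bigr) = \bar{x}.
\]
The market is exploitative precisely when a positive mass of naive consumers faces strictly positive depth. We therefore need to show two things: that this requires $\bar{\eta}<1$, and that $\bar{x}>0$ holds exactly when $z_v^*\notin\arg\max_{z\in Z}\{u(z)-v(z)\}$.

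\textbf{The depth condition.} This step is a one-line observation. Since $z_v^*\in Z$, we have $u(z_v^*)-v(z_v^*)\le \max_{z\in Z}\{u(z)-v(z)\}$, so $\bar{x}\ge 0$. Equality holds if and only if $z_v^*$ attains that maximum. Hence $\bar{x}>0$ exactly when $z_v^*\notin\arg\max_{z\in Z}\{u(z)-v(z)\}$. The maximum exists because $u$ and $v$ are continuous on the compact set $Z$.

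\textbf{The mass condition.} If $\bar{\eta}=1$, there are no naive consumers. The ``if any'' clause of Lemma~\ref{lemma:contracts_exogenous}(b) then imposes nothing, and no consumer is exploited. So the market is not exploitative whatever the naive contract is, since any such contract is chosen by a null set. If $\bar{\eta}<1$, naive consumers have positive mass, and the lemma pins down their realized outcome as $(z_v^*, v(z_v^*)+\max_z\{u-v\})$ and sophisticated consumers' realized outcome as $(z_u^*,u(z_u^*))$. The realized depth is therefore $\bar{x}$. Combining the two conditions gives the equivalence.

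\textbf{Main obstacle.} The delicate point is interpretive rather than computational. We must make precise that ``exploitative'' refers to the optimal menu's depth $x>0$ \emph{with naive consumers present}. When $\bar{\eta}=1$ the naive contract is payoff-irrelevant for the monopolist, so the optimal menu is not pinned down on that component. To handle this, we would adopt the convention, implicit in the sentence preceding the proposition, that the market is exploitative if and only if $\bar{\eta}<1$ and $\bar{x}>0$.

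We would also verify the case $0<\bar{\eta}<1$ directly. Because $U(y_v^N)$ is uniquely determined for any optimal menu, the conclusion does not depend on which optimal menu is selected. Indeed, Lemma~\ref{lemma:contracts_exogenous} fixes the realized outcomes for each type with positive mass, and those outcomes are all that the definition of $x$ uses.
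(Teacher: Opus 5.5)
Your proposal is correct and follows essentially the same route as the paper: read off $x=\bar{x}$ from Lemma~\ref{lemma:contracts_exogenous}, then observe that $\bar{x}\geq 0$, with equality exactly when $z_v^*$ maximizes $u-v$. The paper adds a remark that $z_v^*=z_u^*$ in that case, but this is interpretive and not needed for the equivalence.

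One small imprecision, which the paper shares: when $\bar{\eta}=0$ the lemma does not pin down the zero-mass sophisticated outcome, so your claim that it fixes $y_v^S$ does not literally hold there. The conclusion $U(y_v^S)=0$ still follows from the menu constraints, since $0\leq U(y_v^S)\leq U(y_u^S)\leq U(y_u^N)=0$ once the naive consumer's perceived participation binds at the optimum.
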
 

Proposition~\ref{prop:exoiff} shows that naive consumers are generically exploited. Exploitation is absent only in the special case in which the short-run efficient consumption level \(z_v^*\) also maximizes the preference wedge \(u-v\). In that case, \(z_v^*\) maximizes both \(v-c\) and \(u-v\), and therefore also maximizes their sum \(u-c\). Hence, \(z_v^*=z_u^*\): although \(u\) and \(v\) may differ, they induce the same welfare-maximizing consumption level. Otherwise, \(\bar{x}>0\), and naive consumers receive strictly less long-run utility than sophisticated consumers. \\

Crucially, \(\bar{x}\) is independent of \(\bar{\eta}\). With a rich, non-linear contracting space, the monopolist can separately implement its preferred outcome for each type, so the distribution of cognitive states affects the relative profit earned from the two contracts but not their design. Thus, $\bar{x}$ provides the fixed-cognition benchmark against which we assess the disciplining effect of endogenous cognition.

\subsection{Endogenous Cognition}\label{subsec:endo}

We now endogenize cognition by assuming consumers must invest in order to become sophisticated. A directed menu generating value
of sophistication \(x\) induces a fraction \(F(x)\) of sophisticated consumers
and a fraction \(1-F(x)\) of naive consumers. The monopolist anticipates this
response when designing its menu. We solve its problem in two steps. First, we
characterize the profit-maximizing directed menu that implements a given
\(x\). Second, we determine the monopolist's optimal choice of \(x\).\\

Turning to the first step, for a given \(x\), the monopolist
solves
\begin{equation}\label{eqn:profits}
    \max_{\{\sigma^S,\sigma^N\}\in\mathcal{M}}
    F(x)\pi(y_v^S)
    +
    \bigl[1-F(x)\bigr]\pi(y_v^N)
\end{equation}
subject to
\(
    U(y_v^S)-U(y_v^N)=x 
\).
 That is, the problem is as in the exogenous benchmark but with an additional constraint ensuring that the implemented menu of contracts induces a depth of exploitation equal to $x$. The following lemma provides the endogenous-cognition counterpart to Lemma~\ref{lemma:contracts_exogenous}, characterizing the profit-maximizing menu conditional on implementing a given depth of exploitation.

\begin{lemma}\label{lemma:contracts_endogenous} 
Suppose that cognitive states are endogenous. Conditional on implementing a depth of exploitation \(x\): 
\begin{enumerate}[(a)] 
\item sophisticated consumers receive consumption \(z_u^*\) at price \(u(z_u^*)\); and 
\item naive consumers receive an outcome \(y_v^N=(\alpha_v^N,p_v^N)\), where 
\[ p_v^N = \int_0^1 u(z)\,d\alpha_v^N(z)+x, \] 
and 
\[ \alpha_v^N \in \arg\max_{\alpha\in\Delta(Z)} \int_0^1 \bigl[u(z)-c(z)\bigr]\,d\alpha(z) \] 
subject to \( \max \limits_{z\in Z}\bigl\{u(z)-v(z)\bigr\} - \int_0^1 \bigl[u(z)-v(z)\bigr]\,d\alpha(z) \geq x \). 
\end{enumerate} \end{lemma}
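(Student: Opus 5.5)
The plan is to fix $x$, relax the problem to the constraints that actually bind, solve the relaxed problem in closed form, and then check that its solution is a feasible directed menu. The objective in \eqref{eqn:profits} is $F(x)\pi(y_v^S)+[1-F(x)]\pi(y_v^N)$ with $F(x)$ fixed. The constraint $U(y_v^S)-U(y_v^N)=x$ links the two contracts, so I will parametrize the menu by $\underline{U}\equiv U(y_v^N)$ and write $U(y_v^S)=\underline U+x$. Since $\pi(y)=\int(u-c)\,d\alpha-U(y)$, the two profit terms become
\[
\pi(y_v^S)=\int(u-c)\,d\alpha_v^S-\underline U-x,\qquad \pi(y_v^N)=\int(u-c)\,d\alpha_v^N-\underline U.
\]

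The next step is to identify which constraints in $\mathcal M$ bind. Sophisticated participation requires $\underline U+x\ge 0$. The naive contract-choice and participation constraint is $U(y_u^N)\ge\max\{U(y_u^S),0\}$. The incentive constraints inside each direct contract are $U(y_u)\ge U(y_v)$ and $V(y_v)\ge V(y_u)$.

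For the naive contract, choose a bait $y_u^N=(\alpha_u,p_u)$. Its constraints are
\[
\int u\,d\alpha_u-p_u\ge \int u\,d\alpha_v^N-p_v^N
\quad\text{and}\quad
\int v\,d\alpha_v^N-p_v^N\ge\int v\,d\alpha_u-p_u.
\]
Subtracting them shows that this pair is feasible if and only if
\[
U(y_u^N)-U(y_v^N)\le\int(u-v)\,d\alpha_u-\int(u-v)\,d\alpha_v^N\le \max_z\{u-v\}-\int(u-v)\,d\alpha_v^N.
\]
The upper bound is attained by a point-mass bait at $\arg\max(u-v)$.

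For the sophisticated contract, take $y_u^S=y_v^S$, that is, a singleton contract. The naive comparison then reads $U(y_u^N)\ge \max\{\underline U+x,0\}$. Combining this with the previous bound gives the key necessary condition
\[
\max\{\underline U+x,0\}-\underline U\le \max_z\{u-v\}-\int(u-v)\,d\alpha_v^N .
\]
Profit is decreasing in $\underline U$. Sophisticated participation gives $\underline U\ge -x$, and at $\underline U=-x$ the left side of the necessary condition equals $x$. I therefore conjecture $\underline U=-x$, which yields $U(y_v^S)=0$ and $U(y_v^N)=-x$, together with the constraint
\[
\max_z\{u-v\}-\int(u-v)\,d\alpha_v^N\ge x.
\]

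I then need to show that lowering $\underline U$ below $-x$ is infeasible. This holds directly because sophisticated participation fails there. I also need to check that no better value arises from a non-singleton $\sigma^S$. A non-singleton $\sigma^S$ can only raise $U(y_u^S)\ge U(y_v^S)$, which tightens the naive comparison, so singletons are without loss.

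With $\underline U=-x$ fixed, the problem separates. The sophisticated part maximizes $\int(u-c)\,d\alpha_v^S$ without constraint, so $\alpha_v^S=\delta_{z_u^*}$ and the price is $u(z_u^*)$, which is part (a). The naive part maximizes $\int(u-c)\,d\alpha_v^N$ subject to the displayed constraint, with price $p_v^N=\int u\,d\alpha_v^N+x$ from $U(y_v^N)=-x$, which is part (b).

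To finish, I verify sufficiency. I construct the bait at $z_b\in\arg\max(u-v)$ with price $p_u$ chosen to satisfy both incentive constraints and $U(y_u^N)\ge0$; the slack in the displayed constraint guarantees that such a $p_u$ exists. I also check that the constraint set is nonempty and compact: $\delta_{z_v^*}$ is feasible for $x\le\bar x$, and the set is compact in the weak topology, so a maximizer exists.

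The main obstacle is the reduction step. I must show that the naive participation and contract-choice constraints, combined with the two internal incentive constraints, collapse exactly to the single inequality on $\alpha_v^N$. I also have to be sure that $\underline U=-x$ is optimal rather than merely feasible, by ruling out that a higher $\underline U$ ever relaxes the constraint in a profitable way. The necessary condition shows that raising $\underline U$ leaves it unchanged or tightens it while strictly lowering profit, which settles the question.
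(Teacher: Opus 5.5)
Your proof is correct and follows the same route as the paper's: set $U(y_v^S)=0$ so that the depth constraint pins $U(y_v^N)=-x$ and $p_v^N=\int_0^1 u(z)\,d\alpha_v^N(z)+x$, place the bait at a maximizer of $u-v$, and reduce the $V$-choice constraint to the single inequality on $\alpha_v^N$. Your parametrization by $\underline U$ (profit falls one-for-one in $\underline U$, while for every $\underline U\geq -x$ the constraint on $\alpha_v^N$ is unchanged) makes explicit what the paper only asserts when it says the sophisticated contract is ``unchanged from the exogenous benchmark.''

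There are two harmless slips. First, your ``if and only if'' for the naive within-contract pair drops the lower bound $U(y_u^N)\geq U(y_v^N)$; this bound is implied anyway by $U(y_u^N)\geq\underline U+x$ with $x\geq 0$. Second, $\delta_{z_v^*}$ shows the constraint set is nonempty only for $x\leq\bar{x}$, but for any implementable $x$ your own necessity argument already delivers a feasible $\alpha_v^N$.
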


As in the exogenous benchmark, sophisticated consumers receive the long-run efficient consumption level and are left with zero utility. The value-of-sophistication constraint therefore requires naive consumers to receive long-run utility \(U(y_v^N)=-x\). The monopolist implements this utility by setting \(p_v^N=\int_0^1 u(z)\,d\alpha_v^N(z)+x\).\\ 

Relative to the exogenous benchmark, the adjustment therefore occurs in the realized outcome of the naive contract. Conditional on \(x\), the monopolist can extract any increase in the long-run allocative welfare generated by \(y_v^N\) through a higher price, without changing the naive consumer's long-run utility or the depth of exploitation. It therefore chooses the naive consumer's consumption distribution to maximize long-run allocative welfare, subject to implementing \(x\). In this sense, the monopolist internalizes the allocative consequences of the naive consumer's realized outcome even though it does not directly value consumer welfare. Indeed, when the implementability constraint is slack, the naive consumer receives the long-run efficient allocation.\\

When the long-run efficient allocation does not satisfy the implementability constraint, the monopolist must distort consumption, giving rise to an allocative cost of exploitation. To formalize this cost, let \(W^*\equiv\max_{z\in Z}\{u(z)-c(z)\}\) denote maximal long-run allocative welfare, and define 

\[ \begin{aligned} L(x) \equiv \min_{\alpha\in\Delta(Z)} \quad & W^*-\int_0^1 \bigl[u(z)-c(z)\bigr]\,d\alpha(z) \\[-0.2em] &\hspace{-3.7em}\text{s.t.}\quad \max_{z\in Z}\bigl\{u(z)-v(z)\bigr\} -\int_0^1 \bigl[u(z)-v(z)\bigr]\,d\alpha(z) \geq x. \end{aligned} \] 

The constraint requires the gap between the preference wedges generated by the bait and realized outcomes to be at least \(x\). If the long-run efficient allocation does not generate a sufficiently large gap, the monopolist must distort the naive consumer's realized consumption away from long-run efficiency. Thus, \(L(x)\) is the smallest loss of long-run allocative welfare required to implement exploitation \(x\). The sophisticated contract generates profit \(W^*\), while the naive contract generates profit \(W^*+x-L(x)\). Hence, \(x-L(x)\) is the excess profit earned from a naive consumer relative to a sophisticated consumer.\\

We now turn to the monopolist's choice of \(x\). Let \(x_{\max}\equiv\max_{z\in Z}\{u(z)-v(z)\} -\min_{z\in Z}\{u(z)-v(z)\}\). Lemma~\ref{lemma:contracts_endogenous} implies that \(x\) is implementable if and only if \(x\in[0,x_{\max}]\).\footnote{The choice set is therefore compact, so an optimal depth of exploitation exists.} Using the optimal contracts conditional on \(x\) and the fact that \(W^*\) is independent of \(x\), the monopolist's problem is equivalent to \begin{equation}\label{eqn:optimal_x} \max_{x\in[0,x_{\max}]} \bigl[1-F(x)\bigr]\bigl[x-L(x)\bigr]. \end{equation} 
Let \(x^*\) denote a solution to \eqref{eqn:optimal_x}.
\footnote{Because $F$ is continuous and $x-L(x)$ is concave, the objective is upper semicontinuous. Compactness of the choice set therefore implies that the set of maximizers is nonempty and compact.} Thus, despite the model's generality, the monopolist's problem reduces to a one-dimensional choice of exploitation depth.\\

The reduced-form problem reveals the monopolist's trade-off. Conditional on remaining naive, a consumer generates excess profit \(x-L(x)\). Greater exploitation, however, induces more consumers to invest in cognition and reduces the reach of exploitation (i.e., $1 - F(x)$ decreases). Because the monopolist serves the entire market, it internalizes this impact on the reach of exploitation when choosing \(x\). This induces the monopolist to choose weakly less exploitation relative to the exogenous benchmark.

\begin{prop}\label{prop:sanity}
    Suppose that $\bar{x} > 0$. Relative to the exogenous benchmark, endogenous cognition disciplines the monopolist but never eliminates exploitation: every equilibrium depth of exploitation $x^*$ satisfies $0 < x^* \le \bar{x}$.
\end{prop}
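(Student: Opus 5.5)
The argument works entirely with the reduced-form problem \eqref{eqn:optimal_x}, with objective $\Pi(x)\equiv[1-F(x)][x-L(x)]$ on $[0,x_{\max}]$. The first step is to record the properties of $L$. $L(x)\ge 0$ for all $x$, $L$ is nondecreasing because the constraint tightens in $x$, and $L$ is convex because it is the value of a linear program over $\Delta(Z)$ with a right-hand side that is linear in $x$; this is the source of the concavity of $x-L(x)$ noted in the footnote. The key link to the exogenous benchmark is that $L(x)=0$ if and only if $x\le\bar x$. Indeed, $L(x)=0$ requires $\alpha$ to put all its mass on the unique maximizer $z_u^*$, so $L(x)=0$ holds exactly when $\max_z\{u-v\}-[u(z_u^*)-v(z_u^*)]\ge x$. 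I therefore need to identify this threshold with $\bar x$, and here I expect a subtlety: $\bar x$ is defined using $z_v^*$, not $z_u^*$.

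The resolution should come from a second route, which is to show that $x-L(x)\le \bar x$ for all $x$. Take any feasible $\alpha$. The excess profit $x-L(x)$ is at most
\[\max_z\{u-v\}-\int(u-v)\,d\alpha-W^*+\int(u-c)\,d\alpha=\max_z\{u-v\}+\int(v-c)\,d\alpha-W^*.\]
This is maximized by putting $\alpha$ on $z_v^*$, which gives the exogenous naive profit minus the sophisticated profit $W^*$. I would therefore rather work with $g(x)\equiv x-L(x)$ directly. The claim is that $g$ is concave, that $g$ attains its maximum value $\bar g$ at some point, and that $g$ is strictly increasing below that point. A cleaner formulation is
\[g(x)=\max_\alpha\Big\{x+\int(u-c)\,d\alpha-W^*\Big\}\quad\text{subject to}\quad \int(u-v)\,d\alpha\le M-x,\]
where $M\equiv\max_z\{u-v\}$. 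At the optimum the constraint binds whenever $x$ exceeds the slack region, which makes $g(x)=M-W^*+\max\{\int(v-c)\,d\alpha:\int(u-v)\,d\alpha=M-x\}$. The unconstrained maximizer of $v-c$ is $z_v^*$, which corresponds to $x=\bar x$. So $g$ is maximized at $x=\bar x$, and $g$ is strictly increasing on $[0,\bar x]$. Strictness holds because on the slack region $g(x)=x$, and on the binding region concavity of $g$ together with its unique peak at $\bar x$ gives strict monotonicity before the peak.

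\emph{Upper bound.} For $x>\bar x$, $g(x)\le g(\bar x)$ by the previous step, and $1-F(x)\le 1-F(\bar x)$ because $F$ is increasing. Hence $\Pi(x)\le\Pi(\bar x)$. To rule out $x^*>\bar x$ I need a strict inequality. $F$ is strictly increasing, so $1-F(x)<1-F(\bar x)$, and $g(\bar x)>0$ since $g(\bar x)\ge g(\text{small }x)=x>0$. It follows that $\Pi(x)<\Pi(\bar x)$ whenever $g(x)\ge0$; and if $g(x)<0$ then $\Pi(x)<0<\Pi(\bar x)$. Either way every maximizer satisfies $x^*\le\bar x$. The one point to check is that $1-F(\bar x)>0$, which follows from $\bar x\le x_{\max}\le\bar\kappa$.

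\emph{Lower bound.} $\Pi(0)=0$ because $g(0)=0$. For small $x>0$ we have $\Pi(x)=[1-F(x)]x>0$, since $F(x)<1$ for $x<\bar\kappa$ and $L(x)=0$ near zero. That $L(x)=0$ near zero follows from the slack region having positive length. When $z_u^*\ne z_v^*$ I would check this directly: the slack region is $x\le M-(u-v)(z_u^*)$, and this bound is positive unless $z_u^*$ maximizes $u-v$. If $z_u^*$ did maximize $u-v$, then $z_u^*$ would maximize both $u-c$ and $u-v$ and hence their difference $v-c$, forcing $z_v^*=z_u^*$ and $\bar x=0$, which contradicts the hypothesis. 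Thus $x^*>0$.

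The main obstacle is the precise shape of $g$: establishing rigorously that its peak sits exactly at $\bar x$ and that it is increasing before the peak. I would handle this through the reformulation in $\int(v-c)\,d\alpha$ above.
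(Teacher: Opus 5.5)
Your architecture matches the paper's: concavity of $R(x)=x-L(x)$ from the linear (mixing) structure of the program, $\bar x$ as the unique maximizer of $R$, and a comparison with the decreasing survival function $1-F$. The upper bound is fine. One nit: $1-F(\bar x)>0$ needs $\bar x<\bar\kappa$, not just $\bar x\le\bar\kappa$; but if $\bar x=\bar\kappa$ then $\bar x=x_{\max}$ and there is no $x>\bar x$ to rule out.

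\textbf{The gap.} Your lower bound relies on the claim that $L(x)=0$ for all small $x>0$. That claim is false in general. Your justification contains a sign error: a common maximizer of $u-c$ and $u-v$ maximizes their \emph{sum}, not their difference $v-c=(u-c)-(u-v)$. The argument in Proposition~\ref{prop:exoiff} works only because there the identity is $(v-c)+(u-v)=u-c$, with $z_v^*$ as the candidate.

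The paper's own leading specification is a counterexample. Take $u(z)=z$, $v(z)=0$, $c(z)=z/2$. Then $z_u^*=1$, which maximizes $u-v$, while $z_v^*=0$ and $\bar x=1>0$. The implementability constraint reads $\int z\,d\alpha\le 1-x$, so $L(x)=x/2>0$ for every $x>0$, and the slack region is just $\{0\}$.

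The same premise undermines two other steps:
\begin{enumerate}
\item Your justification of $g(\bar x)>0$ (``$g(\bar x)\ge g(\text{small }x)=x$'') uses it.
\item So does your opening claim that $L(x)=0$ iff $x\le\bar x$. In fact, whenever $z_u^*\neq z_v^*$, the slack threshold (your $M$ minus $u(z_u^*)-v(z_u^*)$) lies strictly below $\bar x$, and $L(\bar x)=W^*-[u(z_v^*)-c(z_v^*)]>0$.
\end{enumerate}
Relatedly, the target of your second route should be $x-L(x)\le \bar x-L(\bar x)$, not $x-L(x)\le\bar x$. Your computation does deliver the correct target, since $M+v(z_v^*)-c(z_v^*)-W^*=R(\bar x)$.

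\textbf{The repair.} It needs nothing new, and it is exactly how the paper argues. We have $R(0)=0$, because $\delta_{z_u^*}$ is feasible at $x=0$. $R$ is concave, and $\bar x>0$ is its unique maximizer, so $R(\bar x)>R(0)=0$. For $x\in(0,\bar x]$, concavity then gives $R(x)\ge (x/\bar x)R(\bar x)>0$. Hence $[1-F(x)]R(x)>0=\Pi(0)$ for all small $x>0$ (where $F(x)<1$), which rules out $x^*=0$. The same inequality $R(\bar x)>0$ supplies the strictness your upper bound needs.

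In short, drop the slack-region argument entirely. Let the strict monotonicity of $g$ on $[0,\bar x]$, which you correctly derive from concavity plus the unique peak, carry the lower bound. That derivation remains valid even when the slack region is degenerate.
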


Proposition~\ref{prop:sanity} establishes that cognitive investment acts as a deterrence mechanism. Under exogenous cognition, \(\bar{x}\) maximizes the excess profit earned from each naive consumer, independently of the distribution of cognitive types. With endogenous cognition, the monopolist must also account for the loss of reach induced by deeper exploitation. It consequently chooses weakly less exploitation than under the exogenous benchmark, protecting even consumers who ultimately remain naive. \\

Nonetheless, endogenous cognition does not eliminate exploitation. Naive consumers are still more profitable than their sophisticated counterparts and, consequently, the monopolist will always choose a positive depth of exploitation whenever profitable scope for exploitation exists. As such, endogenous cognition disciplines the market without eliminating the potential role for consumer-protection policy.

\section{Policy Analysis}\label{sec:PA}

The benchmark market is generically exploitative because consumers are
dynamically inconsistent, becoming aware of this inconsistency is cognitively
costly, and firms can profit from consumers' misperceptions of their future
behavior. We consider several consumer-protection policies, each of which targets one of these frictions. Competition limits firms' ability to appropriate
the rents generated by consumer mistakes. Sophistication-promoting policies
reduce cognitive frictions or otherwise help consumers recognize their dynamic
inconsistency. Preference nudges instead reduce the underlying conflict between
long-run objectives and short-run temptation. 
The targeted primitive differs across
interventions: ex-ante sophistication-promoting policies change the distribution
of cognitive costs, ex-post sophistication-promoting policies change the
probability that a consumer becomes sophisticated without investing, and
preference nudges change the fraction of consumers whose choices are dynamically
consistent.\\

We call an intervention effective if it successfully changes its intended target in the intended direction. For example, ex-ante sophistication-promoting policies are effective when they reduce cognitive costs, ex-post sophistication-promoting policies are effective when they facilitate sophistication directly, and preference nudges are effective when they reduce the prevalence of dynamic inconsistency. A desirable consumer-protection policy should translate this effectiveness into improved consumer outcomes. Because consumers in our framework endogenously invest in sophistication, evaluating these outcomes must account not only for the consumption allocations they receive but also for the cognitive resources they expend to achieve them. Our primary welfare measure is therefore consumer welfare, inclusive of cognitive costs. That is, our primary welfare measure coincides with ex-ante consumer utility.\\

The distinctive feature of our framework is precisely this endogeneity of sophistication. Depending on its target, a policy may directly affect consumers’ cognitive incentives or states, or it may alter the contracting environment and thereby affect cognition indirectly. In either case, consumers’ cognitive responses change firms’ contract-design incentives, while firms’ contractual responses change consumers' sophistication incentives. As such, policy generates an equilibrium interaction between contract design, exploitation, and cognition that is absent when consumers’ cognitive states are held fixed. Our analysis asks whether a policy that is effective and improves consumer outcomes when cognitive states are exogenous continues to do so once these behavioral and contractual responses are taken into account.\\

We show that each policy is well designed under the conventional
assumption of exogenous cognitive states: whenever the intervention is
effective, it improves consumer welfare. In contrast, this conclusion need not survive when
cognition is endogenous. Consumers may adjust their cognitive investment, and
firms may respond by changing the depth of exploitation. These equilibrium
responses can weaken or even overturn the policy's direct benefits. A complete
welfare evaluation must therefore account for both the policy's intended effect
and the equilibrium responses it induces.\\

Endogenous sophistication also creates difficulties in identifying whether stronger policies are more effective, and hence in performing program evaluation. This issue is particularly acute for ex-ante sophistication-promoting policies in our framework.\footnote{The empirical literature documents substantial heterogeneity in the effectiveness of interventions intended to improve financial decision-making. \cite{kaiser2024}, for example, find that many financial-literacy programs fail to improve financial behaviour even when contracts are held fixed.} Such interventions are intended to reduce the cost of becoming sophisticated. However, observed sophistication reflects not only cognitive costs but also the equilibrium value of self-protection. An effective policy may lower cognitive costs and induce firms to reduce exploitation. The resulting decline in the benefits of sophistication can then reduce equilibrium sophistication, even though the intervention improves consumer welfare. Conversely, an increase in sophistication may reflect greater exploitation rather than a successful reduction in cognitive frictions. Broadly, policy effectiveness, observed behavioral responses, and consumer welfare can come apart. Conditions that make welfare effects transparent need not make effectiveness identifiable from observed sophistication, and vice versa.\\

Throughout the remainder of this section, we restrict attention to environments in which exploitation arises in equilibrium: we assume $z_v^* \notin \arg \max \limits_{z \in Z} [u(z) - v(z)]$, so that $\bar{x} > 0$ (Propositions~\ref{prop:exoiff} and \ref{prop:sanity}). The analysis below therefore focuses on settings in which positive scope for exploitation remains and consumer-protection policy may be warranted. Moreover, whenever multiple equilibria exist, we select the one that minimizes the depth of exploitation.\footnote{This equilibrium is Pareto dominant in all contexts. This convention also ensures that none of our findings are generated by coordination failures associated with particular equilibrium selections.}

\subsection{Competition}\label{subsec:competition}

Historically, competition has been viewed as a natural market-based antidote to consumer exploitation. The conventional argument has two parts. First, competitive pressure transfers rents from firms to consumers, limiting firms' ability to profit from consumer mistakes and increasing consumer welfare. Second, firms may attract consumers by offering less distorted contracts or by exposing exploitative practices, thus de-biasing consumers. Competition should therefore benefit consumers by decreasing both the depth and reach of exploitation.\\

The behavioral literature has already qualified this conventional perspective. The argument that competition reduces the reach of exploitation is weakened by the \emph{curse of de-biasing}: firms may be unable to profit from exposing exploitative practices and correcting consumers' mistakes \citep{GabaixLaibson2006}. Moreover, profitable de-biasing may serve simply to soften competition and extract greater rents from those who remain naive \citep{johnen2020}. Nor must competition reduce the depth of exploitation. Competitive markets can sustain distortions \citep{dvm2004,HeidhuesKoszegi2010} and the provision of low-quality products \citep{ArmstrongChen2009} even when competition dissipates firms' rents. In such settings, competition may still benefit consumers through lower transparent prices, although these benefits may accrue disproportionately to sophisticated consumers \citep{GabaixLaibson2006,armstrong2012}. In other settings, competition can even reduce consumer welfare by increasing firms' use of obfuscation \citep{Spiegler2006,Carlin2009}, encouraging the provision of inferior products \citep{GampKrahmer2022}, or inducing additional biased consumers to make welfare-reducing purchases \citep{as2019}. We complement this literature by showing that competition can have adverse consequences through endogenous cognition and that a resulting reduction in reach need not indicate successful de-biasing but, rather, signal a greater depth of exploitation.\\

To analyze competition, we introduce $N \ge 2$ firms that simultaneously offer directed menus of contracts. After observing all menus, consumers make their cognitive-investment and contract-selection decisions as in the benchmark model.\footnote{Specifically, the return to cognition depends only on the long-run utility difference between the contracts selected under sophistication and naivet{\'e}. Unselected contracts therefore play no role, so our competition results are not driven mechanically by the presence of more contracts to evaluate.} Because firms can compete over a rich space of contracts, any $N \ge 2$ is sufficient to implement the perfectly competitive allocation, as in Bertrand competition.\footnote{Because each firm can offer contracts directed toward both cognitive states, competition cannot be softened through market segmentation. Moreover, continuity of the cognitive-cost distribution ensures that a sufficiently small contractual improvement leaves a positive mass of the targeted cognitive type. Thus, if a contract selected by either type failed to maximize that type's perceived utility subject to nonnegative profit, a rival could offer a slightly more attractive and strictly profitable contract to a positive mass of those consumers.} The following lemma summarizes consumer outcomes under such perfect competition.

\begin{lemma}\label{lemma:contracts_comp} 
Whether cognition is exogenous or endogenous, under competition, sophisticated consumers receive $\bigl(z_u^*,c(z_u^*)\bigr)$, whereas naive consumers receive $\bigl(z_v^*,c(z_v^*)\bigr)$. 
\end{lemma}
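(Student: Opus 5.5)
The argument is a Bertrand-style undercutting argument run separately for each cognitive type. We first pin down the contract that is maximally attractive to each type subject to zero profit, and then show that any other selected outcome invites profitable deviation. Because every firm can offer directed menus, we may treat each type's selected contract in isolation. Each type is attracted by its perceived long-run utility: $U(y_v)$ for sophisticated consumers, and $U(y_u)$ for naive ones. Profit, however, is always generated by the realized outcome $y_v$.

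\textbf{Step 1: maximal perceived utility under zero profit.} For a sophisticated consumer, perceived and realized outcomes coincide. We therefore maximize $\int (u-c)\,d\alpha - \pi(y_v)$ subject to $\pi(y_v)\ge 0$. The solution is $z_u^*$ at price $c(z_u^*)$; the pair $y_u=y_v$ trivially forms a direct contract.

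For a naive consumer, we maximize $U(y_u)$ subject to $\pi(y_v)\ge0$ and the direct-contract incentive constraints. The constraint $V(y_v)\ge V(y_u)$ gives
\[
U(y_u)\le V(y_v)+\bigl[u-v\bigr](y_u)\le V(y_v)+\max_{z}\{u(z)-v(z)\},
\]
with equality attainable by a bait at the wedge maximizer priced to make $V$ indifferent. We also have $V(y_v)\le \int (v-c)\,d\alpha_v - \pi(y_v)\le v(z_v^*)-c(z_v^*)$ when $\pi\ge0$. Hence the bound is achieved uniquely, by uniqueness of $z_v^*$, at $y_v=(z_v^*,c(z_v^*))$. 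We then check $U(y_u)\ge U(y_v)$, which holds since the bait maximizes the wedge.

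\textbf{Step 2: equilibrium forces these outcomes.} Suppose that in equilibrium a positive mass of either type selects a contract whose perceived utility falls short of the Step-1 maximum. Then some firm earns nonnegative profit on that contract, and a rival can offer the Step-1 contract with a slightly higher price. This offer is strictly profitable, and, per the paper's footnote on continuity of $F$, it still attracts a positive mass of that type. So perceived utilities must equal the maxima. Uniqueness of the maximizers, combined with the requirement that profits cannot be negative on any served type, then pins down the realized outcomes as stated.

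Existence follows because all firms offering the Step-1 menu earn zero, and no deviation attracts any type with positive profit. The naive consumer's perceived utility then equals the Step-1 maximum. Note that the naive type's realized $V$-maximal outcome is efficient under $v$. Since sophisticated consumers evaluate contracts by $U(y_v)$, and the sophisticated contract gives them $W^*\ge u(z_v^*)-c(z_v^*)$, they self-select correctly; the naive directedness constraint follows from Step 1.

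\textbf{Step 3: independence from the cognition regime.} The argument never used $\eta$ except through the positivity of each type's mass. Under exogenous cognition, this positivity is given. Under endogenous cognition, we have $x=W^*-[u(z_v^*)-c(z_v^*)]$, and the support assumption on $F$ guarantees that both masses are positive. Firms do not internalize $x$ because any unilateral deviation changes $x$ only through its own contracts, and it wins customers only by offering more perceived utility.

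The main obstacle I expect is Step 2 with endogenous cognition. A deviation alters the menu consumers face and hence $x$ and the type masses. I would handle this by noting that a small improvement changes $x$ only slightly, so by continuity of $F$ the targeted type's mass stays positive.
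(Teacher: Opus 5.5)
Your Step 1 and the overall structure match the paper's proof: first a Bertrand-style argument that each type's selected contract must maximize that type's perceived utility subject to nonnegative profit, then solving the two programs with binding constraints, the bait placed on a maximizer of $u-v$, and uniqueness of $z_u^*$ and $z_v^*$. The gap is in Step 2, at exactly the point you flag as the main obstacle.

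The deviation you specify is the Step-1 optimal contract offered at a slightly higher price. This is a \emph{discrete} change to the targeted type's realized outcome, not a small one. That outcome jumps from the candidate $y_v^\omega$ to approximately $(z_v^*,c(z_v^*))$ or $(z_u^*,c(z_u^*))$. Under endogenous cognition, $x$ therefore jumps too, however small the price increase or the gain in perceived utility. Your remedy, ``a small improvement changes $x$ only slightly, so by continuity of $F$ the targeted type's mass stays positive,'' does not apply to the deviation you actually propose. As written, nothing shows that a positive mass of the targeted type remains to take the deviating contract.

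The paper closes this step by making the deviation local. The rival offers the mixture $\alpha\sigma^\omega+(1-\alpha)\sigma^\ast$, where $\sigma^\ast$ is the zero-profit maximizer from your Step 1. Because $U$, $V$ and $\pi$ are affine in lotteries and prices, this mixture has three useful properties:
\begin{enumerate}[(i)]
\item it is still a direct contract;
\item it raises perceived utility by $(1-\alpha)$ times the shortfall, leaving room for a price increase that makes it strictly profitable when $\pi(\sigma^\omega)\geq0$ (this is where your otherwise unused remark about nonnegative profit is needed);
\item its realized outcome, and hence $x$, converges to the candidate's as $\alpha\to1$.
\end{enumerate}
Continuity of $F$ then preserves a positive mass of the targeted type.

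Alternatively, you can keep your discrete deviation, with the rival offering only that contract, and replace the continuity appeal with direct bounds. If the deviation targets naive consumers, it strictly beats every other contract in perceived utility. Sophisticated consumers therefore either take it themselves, giving $x_{\mathrm{new}}=0$, or realize strictly less than $U(y_u^{\mathrm{dev}})$, giving $x_{\mathrm{new}}<U(y_u^{\mathrm{dev}})-U(y_v^{\mathrm{dev}})=\bar x\leq x_{\max}\leq\bar\kappa$. In both cases $1-F(x_{\mathrm{new}})>0$. If the deviation targets sophisticated consumers, it beats every existing contract in realized utility. So either naive consumers also take it, or $x_{\mathrm{new}}>0$ and $F(x_{\mathrm{new}})>0$. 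Either repair completes the argument.
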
 

Firms compete for each cognitive type according to how that type evaluates contracts. Sophisticated consumers correctly anticipate their subsequent outcome choices, so competition implements the long-run efficient allocation and leaves them with the entire welfare. Naive consumers instead evaluate contracts according to the outcome they mistakenly expect to choose. Competition therefore maximizes their perceived utility subject to firms breaking even, but their realized consumption remains \(z_v^*\). Thus, competition eliminates firms' profits without necessarily maximizing naive consumers' long-run welfare. The resulting depth of exploitation is 
\[ 
x^C \equiv \max_{z\in Z}\bigl\{u(z)-c(z)\bigr\} - \bigl[u(z_v^*)-c(z_v^*)\bigr], 
\] 
which can remain strictly positive.\\

We first consider the benchmark in which cognitive states are exogenous. With cognitive states held fixed, competition does not change the consumption outcomes of either cognitive type relative to monopoly (see Lemma~\ref{lemma:contracts_exogenous}). Sophisticated consumers continue to consume $z_u^*$, while naive consumers continue to consume $z_v^*$. Competition therefore does not directly correct the allocative distortion generated by naive consumers' misperceptions. Instead, it changes how the welfare generated by these consumption outcomes is divided. Under competition, firms break even and consumers receive the entire welfare from their respective outcomes. Under monopoly, the firm extracts this welfare and additionally benefits from naive consumers' misperceptions. Competition therefore weakly reduces the depth of exploitation and increases consumer welfare, despite leaving the underlying allocative distortion unchanged.

\begin{prop}\label{prop:comexo} Suppose that cognitive states are exogenous, with the fraction of sophisticated consumers fixed at $\bar{\eta}\in[0,1]$. Then, $x^C\leq\bar{x}$, and consumer welfare is higher under competition than under monopoly. 
\end{prop}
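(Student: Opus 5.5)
The plan is to compare the two expressions for exploitation depth directly, and then to compare welfare cognitive type by cognitive type using Lemmas~\ref{lemma:contracts_exogenous} and~\ref{lemma:contracts_comp}.

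\textbf{Step 1: $x^C\le\bar x$.} Write
\[
\bar x - x^C = \max_{z}\{u(z)-v(z)\} - \max_{z}\{u(z)-c(z)\} + \bigl[v(z_v^*)-c(z_v^*)\bigr].
\]
Take $z_u^*$, which attains $\max_z\{u-c\}$. Then
\[
u(z_u^*)-c(z_u^*) = \bigl[u(z_u^*)-v(z_u^*)\bigr] + \bigl[v(z_u^*)-c(z_u^*)\bigr] \le \max_z\{u-v\} + \bigl[v(z_v^*)-c(z_v^*)\bigr],
\]
since $z_v^*$ maximizes $v-c$. Rearranging gives $x^C\le\bar x$. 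The argument is just the subadditivity of maxima, $\max(f+g)\le \max f+\max g$, applied with $f=u-v$ and $g=v-c$.

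\textbf{Step 2: welfare.} Because $\bar\eta$ is fixed, cognitive costs are either absent or identical across the two regimes, so consumer welfare reduces to
\[
\bar\eta\, U^S + (1-\bar\eta)\, U^N,
\]
where $U^S$ and $U^N$ are the realized long-run utilities of the sophisticated and naive types. I would compare each type separately.

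For sophisticated consumers, under monopoly Lemma~\ref{lemma:contracts_exogenous} gives $U=0$. Under competition Lemma~\ref{lemma:contracts_comp} gives $U=W^*=u(z_u^*)-c(z_u^*)\ge 0$. This is nonnegative because $c(0)=u(0)=0$.

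For naive consumers, the monopoly gives
\[
U = u(z_v^*)-v(z_v^*)-\max_z\{u-v\} = -\bar x,
\]
using that sophisticated utility is $0$ under monopoly. Competition gives
\[
U = u(z_v^*)-c(z_v^*) = W^*-x^C.
\]
The required inequality $W^*-x^C\ge-\bar x$ follows from $W^*\ge 0$ and Step~1.

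Both types are therefore weakly better off under competition, and so is the weighted sum for every $\bar\eta\in[0,1]$.

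If the statement's "higher" is meant strictly, I would use $W^*>0$ when it holds. Since $\bar x>0$ is assumed in the environment of interest, we have $z_v^*\neq z_u^*$ by the discussion after Proposition~\ref{prop:exoiff}. I would try to extract strictness from this, for example for naive consumers via $W^*-x^C+\bar x>0$. The main (minor) obstacle is exactly this strictness: it requires either $W^*>0$ or strict inequality in Step~1. Otherwise the weak version is the statement I would prove.
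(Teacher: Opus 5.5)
Your proof is correct and is essentially the paper's argument, which reads both regimes off Lemmas~\ref{lemma:contracts_exogenous} and~\ref{lemma:contracts_comp} and notes that, since allocations are unchanged, competition merely transfers the monopolist's nonnegative profit to consumers, with $\bar x-x^C=\max_{z}\{u(z)-v(z)\}+\max_{z}\{v(z)-c(z)\}-\max_{z}\{u(z)-c(z)\}\ge 0$ being exactly the monopolist's excess profit on a naive relative to a sophisticated consumer; your type-by-type comparison is the same computation, disaggregated by type. ``Higher'' is meant weakly here (compare the proof of Proposition~\ref{prop:comendo}), and your strictness worry resolves anyway: because $z_v^*$ uniquely maximizes $v-c$, the subadditivity in your Step~1 is strict exactly when $z_v^*\notin\arg\max_{z}\{u(z)-v(z)\}$, i.e.\ exactly when $\bar x>0$, so in that case naive consumers are strictly better off under competition.
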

 
Thus, when cognition is fixed, competition dissipates firms' rents, reduces
exploitation, and increases consumer welfare. It need not, however, eliminate the underlying behavioral distortion or give firms a positive incentive to de-bias consumers, consistent with prior findings under exogenous cognitive types \citep{GabaixLaibson2006}.\footnote{Indeed, although firms cannot directly transform naive consumers into sophisticated ones in our model, such a transformation would be payoff-irrelevant in equilibrium because competitive firms earn zero profits from both types.}\\

With endogenous cognition, the welfare comparison depends on whether competition raises or lowers exploitation relative to monopoly. Let \(x^*\) denote the equilibrium depth of exploitation under monopoly. The fractions of sophisticated consumers under competition and monopoly are then \(F(x^C)\) and \(F(x^*)\), respectively.\\ 

Two opposing forces determine the comparison between \(x^C\) and \(x^*\). On the one hand, competition transfers rents from firms to consumers, which tends to reduce exploitation. On the other hand, Bertrand competition forces firms to maximize the perceived attractiveness of their contracts. The resulting competitive contract determines the depth \(x^C\) as a by-product: a firm cannot reduce \(x^C\) while preserving the contract's perceived attractiveness and would therefore lose its naive customers to a rival if it offered a less exploitative contract. As such, unlike the monopolist, competitive firms do not choose exploitation to balance depth against reach. Consequently, \(x^C\) is independent of the cognitive-cost distribution \(F\), and competition can increase the depth of exploitation relative to monopoly and may even decrease consumer welfare. The following result characterizes precisely when competition unambiguously improves consumer outcomes.

\begin{prop}\label{prop:comendo}
Suppose that cognitive states are endogenous. Then, consumer welfare is higher
under competition than under monopoly for every cognitive-cost distribution
\(F \in \mathcal{F}\) if and only if, for every \(x\in[0,x^C]\),
\[
    \int_x^{x^C}
    \frac{x-L(x)}{s-L(s)}
    \,ds
    \leq
    \max_{z\in Z}\bigl\{u(z)-c(z)\bigr\}.
\]
\end{prop}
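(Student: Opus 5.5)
The plan is to compute consumer welfare in closed form under each regime and reduce the comparison to a one-dimensional integral inequality. We then handle sufficiency with the monopolist's optimality condition, and necessity by constructing a cost distribution $F$.

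\textbf{Step 1 (welfare formulas).} Under monopoly, Lemma~\ref{lemma:contracts_endogenous} gives sophisticated consumers long-run utility $0$ and naive consumers $-x^*$. A consumer with cost $\kappa$ therefore obtains $-\min\{\kappa,x^*\}$, so
\[
CW_M=-\int_0^{\bar\kappa}\min\{\kappa,x^*\}\,dF(\kappa)=-\int_0^{x^*}\bigl[1-F(s)\bigr]ds .
\]
Under competition, Lemma~\ref{lemma:contracts_comp} gives sophisticated consumers $W^*$ and naive consumers $u(z_v^*)-c(z_v^*)=W^*-x^C$. Hence
\[
CW_C=W^*-\int_0^{x^C}\bigl[1-F(s)\bigr]ds .
\]
Competition is therefore weakly better if and only if $\int_{x^*}^{x^C}[1-F(s)]\,ds\le W^*$. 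This holds automatically when $x^C\le x^*$, because $W^*\ge 0$.

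\textbf{Step 2 (sufficiency).} I will use two facts about $g(s)\equiv s-L(s)$. First, $g$ is concave with $g(0)=0$. Second, by Lemma~\ref{lemma:contracts_exogenous} together with Proposition~\ref{prop:sanity}, $g$ is maximized at $\bar x$, and Proposition~\ref{prop:comexo} gives $x^C\le\bar x$. Consequently $g$ is positive and nondecreasing on $(0,x^C]$, so the integrand in the statement is well defined.

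Let $x=x^*$, which satisfies $x^*>0$ by Proposition~\ref{prop:sanity}. Optimality of $x^*$ in \eqref{eqn:optimal_x} gives, for every $s\in[x^*,x^C]$,
\[
\bigl[1-F(s)\bigr]g(s)\le\bigl[1-F(x^*)\bigr]g(x^*)\le g(x^*),
\]
so that $1-F(s)\le g(x^*)/g(s)$. Integrating over $[x^*,x^C]$ and applying the hypothesis at $x=x^*$ yields the inequality of Step 1.

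\textbf{Step 3 (necessity; the main obstacle).} Suppose some $x\in[0,x^C]$ violates the condition. Then $x>0$, since at $x=0$ the integrand vanishes and the condition holds. I will construct $F_\varepsilon\in\mathcal F$ whose least monopoly optimum is exactly $x$ and for which $\int_x^{x^C}[1-F_\varepsilon]$ comes arbitrarily close to $\int_x^{x^C} g(x)/g(s)\,ds>W^*$. The construction is as follows.
\begin{itemize}
\item On $[0,x]$, let $F_\varepsilon$ rise linearly from $0$ to $\varepsilon$.
\item On $[x,x^C]$, set $1-F_\varepsilon(s)=(1-\varepsilon)\dfrac{g(x)}{g(s)}\bigl(1-\varepsilon(s-x)\bigr)$. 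This is strictly increasing in $F_\varepsilon$ because $g$ is nondecreasing and the extra factor is strictly decreasing.
\item Beyond $x^C$, let $F_\varepsilon$ increase continuously to $1$ at $\bar\kappa$, keeping $(1-F_\varepsilon)g$ below its value at $x$.
\end{itemize}

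Profit is then strictly below $(1-\varepsilon)g(x)$ everywhere except at $s=x$. On $[0,x)$ this uses $g(s)\le g(x)$ and $F_\varepsilon\le\varepsilon$, with strict inequality coming from $F_\varepsilon(s)<\varepsilon$ or $g(s)<g(x)$. On $(x,x^C]$ it comes from the factor $1-\varepsilon(s-x)<1$. So $x$ is the unique optimum, and the minimal-depth selection is not an issue.

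The delicate points are continuity and strict monotonicity of $F_\varepsilon$ at the junctions and keeping $F_\varepsilon(\bar\kappa)=1$. These are handled by the small $\varepsilon$ adjustments above. Letting $\varepsilon\to0$ gives $CW_C<CW_M$, which completes the "only if" direction.
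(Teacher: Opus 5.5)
Your welfare decomposition, your sufficiency argument, and your use of the extremal survival function $g(x)/g(s)$ on $[x,x^C]$ all match the paper's proof. Your extra factor $1-\varepsilon(s-x)$ is an improvement: the paper's $F_0$ makes every $s\in[x,x^C]$ tie and then appeals to a perturbation without checking that the selected depth stays at $x$. However, your justification on $[0,x)$ runs the wrong way. Since $F_\varepsilon(s)=\varepsilon s/x\le\varepsilon$, you have $1-F_\varepsilon(s)\ge1-\varepsilon$, so the ramp makes lower depths \emph{more} attractive than $x$. The bound $g(s)\le g(x)$ alone cannot offset this. If $g$ were constant on some $[x-\delta,x]$, every such $s$ would strictly beat $x$ for every $\varepsilon>0$. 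The monopoly optimum would then not be $x$, and the welfare reversal could fail.

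The missing idea is a strictly positive left slope of $g$ at $x$. First note that $x<x^C$, since the condition holds trivially at $x=x^C$; hence $x<\bar x$. Because $\bar x$ is the unique maximizer of the concave $g$ (as noted in the proof of Proposition~\ref{prop:sanity}), $g'_-(x)\ge[g(\bar x)-g(x)]/(\bar x-x)>0$. Concavity then gives $g(s)\le g(x)-g'_-(x)(x-s)$ for $s<x$, and therefore
\[
\bigl[1-F_\varepsilon(s)\bigr]g(s)\le(1-\varepsilon)g(x)-(x-s)\Bigl[(1-\varepsilon)g'_-(x)-\frac{\varepsilon g(x)}{x}\Bigr]-\frac{\varepsilon g'_-(x)(x-s)^2}{x}.
\]
This is strictly below $(1-\varepsilon)g(x)$ once $\varepsilon\le xg'_-(x)/[xg'_-(x)+g(x)]$. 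With this restriction, together with $\varepsilon<1/\bar x$, your construction goes through.

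Beyond $x^C$, which you only sketch, keep the same formula up to $\bar x$. There $g$ is nondecreasing, so $1-F_\varepsilon$ stays strictly decreasing and profit equals $(1-\varepsilon)g(x)[1-\varepsilon(s-x)]<(1-\varepsilon)g(x)$. Then let $1-F_\varepsilon$ fall continuously to zero at some $\bar\kappa>x_{\max}$; this is harmless because $g$ is nonincreasing on $[\bar x,x_{\max}]$.
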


To understand the condition, suppose that the monopolist chooses exploitation \(x\). For this choice to be optimal relative to any higher level \(s>x\), it must satisfy
\(
    \bigl[1-F(s)\bigr]\bigl[s-L(s)\bigr]
    \leq
    \bigl[1-F(x)\bigr]\bigl[x-L(x)\bigr]
\)
or, equivalently,
\(
    1-F(s)
    \leq
    \bigl[1-F(x)\bigr]
    \frac{x-L(x)}{s-L(s)}.
\)
Conditional on \(x\) being the monopoly optimum, this inequality bounds the mass of consumers who can remain naive at any higher level of exploitation \(s\). If the naive mass were larger, the monopolist would prefer \(s\), contradicting the optimality of \(x\). Hence, a cognitive-cost distribution that attains this bound at every \(s>x\) leaves as many consumers naive as is consistent with \(x\) being the monopoly optimum. Consequently, it generates the largest additional cognitive costs that can arise when competition raises exploitation from \(x\) to \(x^C\). The integral in Proposition~\ref{prop:comendo} aggregates these worst-case costs. The right-hand side is the long-run allocative welfare passed to a sophisticated consumer under competition. The condition requires these contractual gains from competition to dominate the additional cognitive costs even under this worst-case distribution. It is therefore both necessary and sufficient for competition to increase consumer welfare independently of \(F\). \\

The condition is more likely to be satisfied when competition generates greater long-run allocative welfare and when the outcome received by naive consumers is less exploitative. This suggests that competition is more likely to be unambiguously beneficial in markets for \emph{investment goods}, such as fitness or health, for which the underlying behavioral problem is underconsumption, than for \emph{temptation goods}, such as gambling or smoking, for which the underlying problem is overconsumption.\\

When this condition fails, there exist cognitive-cost distributions for which consumer welfare is lower under competition than under monopoly. Interestingly, whenever competition reduces consumer welfare, it necessarily holds that $x^C > x^*$ and, consequently, $F(x^C) \ge F(x^*)$. That is, the reach of exploitation necessarily \emph{decreases} whenever competition policy backfires. This creates a delicate problem for evaluating competition policies: upon observing that competition decreases the fraction of consumers exploited, conventional de-biasing arguments would suggest that consumers have become less susceptible to exploitation. In contrast, interpreted through the lens of our framework, this is strongly indicative of an increase in the depth of exploitation and a potential worsening of consumer outcomes.

\subsection{Sophistication-Promoting Policies}\label{subsec:exante}

Sophisticated consumers correctly anticipate their subsequent outcome choices and are therefore not exploited. This makes policies that promote sophistication a natural response to consumer naivet\'e. When cognitive states are exogenous, such policies simply reduce the fraction of consumers exposed to exploitation. With endogenous cognition, they also change consumers' incentives to invest in cognition and the monopolist's incentives to design exploitative contracts. We index sophistication-promoting interventions by a policy parameter \(\tau\in T\subseteq[0,1]\), where higher values correspond to stronger interventions.\footnote{Here, $\tau$ serves as a placeholder for a policy intervention; its precise interpretation is specified in the relevant section.}\\

We first consider the benchmark in which cognitive states are exogenous. In this case, it is natural to model a sophistication-promoting policy as operating directly on the fraction of sophisticates. Thus, let $\tau \in [0,1]$ denote a policy parameter and $\bar{\eta}(\tau)$ denote the fraction of sophisticates, where $\bar{\eta}(\cdot)$ strictly increases, so that a higher $\tau$ represents a more effective intervention. Recall that, by Lemma~\ref{lemma:contracts_exogenous}, the contracts received by sophisticated and naive consumers do not vary with their relative prevalence. Consequently, increasing $\tau$ reduces the reach of exploitation without changing its depth. Promoting sophistication therefore unambiguously benefits consumers in the exogenous benchmark.

\begin{prop}\label{prop:exoeasp} 
Suppose that consumers' cognitive states are exogenous. Then, consumer welfare increases as sophistication-promoting policies become more effective (i.e., as $\tau$ increases). 
\end{prop}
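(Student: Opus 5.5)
The plan is to write consumer welfare under exogenous cognition as an explicit function of $\bar{\eta}(\tau)$ and show that it is monotone. By Lemma~\ref{lemma:contracts_exogenous}, the monopolist's optimal directed menu does not depend on $\bar{\eta}$. Sophisticated consumers receive $(z_u^*,u(z_u^*))$, so their long-run utility is $U(y_v^S)=u(z_u^*)-u(z_u^*)=0$. Naive consumers receive $z_v^*$ at price $v(z_v^*)+\max_{z}\{u(z)-v(z)\}$, so their realized long-run utility is $u(z_v^*)-v(z_v^*)-\max_z\{u(z)-v(z)\}=-\bar{x}$. Cognitive states are exogenous, so no cognitive costs are incurred. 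Ex-ante consumer welfare is therefore
\[
CW(\tau)=\bar{\eta}(\tau)\cdot 0+\bigl[1-\bar{\eta}(\tau)\bigr](-\bar{x})=-\bigl[1-\bar{\eta}(\tau)\bigr]\bar{x}.
\]

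The monotonicity argument then runs as follows. By the definition of $\bar{x}$ as a maximum minus one of the values being maximized, $\bar{x}\ge 0$. Since $\bar{\eta}(\cdot)$ is strictly increasing, $1-\bar{\eta}(\tau)$ is strictly decreasing. Hence $CW(\tau)$ is weakly increasing in $\tau$.

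It is strictly increasing whenever $\bar{x}>0$, which is the maintained assumption of this section ($z_v^*\notin\arg\max\{u-v\}$; see Proposition~\ref{prop:exoiff}). If instead $\bar{x}=0$, welfare is constant at zero, so the weak statement still holds. Monotonicity is in $\tau$ directly, so no differentiability of $\bar{\eta}$ is needed.

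The only step requiring care is confirming that the menu is genuinely invariant to $\bar{\eta}$, including the boundary cases $\bar{\eta}\in\{0,1\}$ in which one contract goes unused. This follows because the profit objective is separable across types once the directed-menu constraints are shown not to bind in a way that links the two contracts. That separability is exactly the content of Lemma~\ref{lemma:contracts_exogenous}, which I would invoke rather than reprove.
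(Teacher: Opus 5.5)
Your proposal is correct and follows the paper's proof. Both write consumer welfare as $-\bigl[1-\bar{\eta}(\tau)\bigr]\bar{x}$ via Lemma~\ref{lemma:contracts_exogenous} and conclude from $\bar{\eta}$ strictly increasing and $\bar{x}>0$; your remarks on the $\bar{x}=0$ case and on the invariance of the menu to $\bar{\eta}$ are harmless extras.
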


With endogenous cognition, sophistication-promoting policies can operate either before or after consumers make their cognitive-investment decisions. As such, we distinguish between \emph{ex-ante policies}, which reduce the cost of cognitive investment, and \emph{ex-post policies}, which improve the mapping from cognitive-investment decisions into cognitive states. Figure~\ref{fig:pstvsante} illustrates this distinction.

\begin{figure}
    \centering
    \begin{tikzpicture}[
    >=stealth,
    every node/.style={align=center}
]

\begin{scope}[xshift=0cm]

\node at (3,2.3) {\small Ex-ante policy};

\node (kappa) at (0,0) {$\kappa$};
\node (kappatau) at (2,0) {$\kappa_{\tau}$};

\draw[->, thick]
    (kappa) -- node[above] {\small policy} (kappatau);

\node (e1) at (4,1) {$e=1$};
\node (e0) at (4,-1) {$e=0$};

\draw[->, thick]
    (kappatau) -- node[above, sloped] {\small pay $\kappa_{\tau}$} (e1);

\draw[->, thick]
    (kappatau) -- node[below, sloped] {\small pay $0$} (e0);

\node (S) at (6,1) {$S$};
\node (N) at (6,-1) {$N$};

\draw[->, thick] (e1) -- (S);
\draw[->, thick] (e0) -- (N);

\end{scope}

\begin{scope}[xshift=8.5cm]

\node at (3,2.3) {\small Ex-post policy};

\node (kappar) at (0,0) {$\kappa$};

\node (e1r) at (2,1) {$e=1$};
\node (e0r) at (2,-1) {$e=0$};

\draw[->, thick]
    (kappar) -- node[above, sloped] {\small pay $\kappa$} (e1r);

\draw[->, thick]
    (kappar) -- node[below, sloped] {\small pay $0$} (e0r);

\node (Sr) at (6,1) {$S$};
\node (Nr) at (6,-1) {$N$};

\draw[->, thick] (e1r) -- (Sr);

\draw[->, thick]
    (e0r) -- node[above, sloped] {$\tau$} (Sr);

\draw[->, thick]
    (e0r) -- node[below, sloped] {$1-\tau$} (Nr);

\end{scope}

\end{tikzpicture}

    \caption{Ex-ante and ex-post sophistication-promoting policies. Ex-ante
    policies lower $\kappa_{\tau}$, the cost of cognitive investment. Ex-post policies instead
    increase the probability that a consumer becomes sophisticated after the
    investment decision.}
    \label{fig:pstvsante}
\end{figure}

\subsubsection{Ex-Ante Sophistication-Promoting Policies}\label{ssec:EA}

Ex-ante policies reduce cognitive frictions at the contract-evaluation stage. Examples include financial education, decision aids, and access to professional or technological advice. We model these interventions as reductions in consumers' cognitive costs. Let $\bigl\{F(\cdot;\tau)\bigr\}_{\tau\in T}$ denote a family of policy-induced cognitive-cost distributions, where $\tau$ indexes policy intensity. Because ex-ante policies are intended to facilitate cognitive investment, we call such an intervention \emph{effective} if stronger policies induce first-order stochastically lower cognitive costs: $\tau'\geq\tau$ implies $F(\kappa;\tau')\geq F(\kappa;\tau)$ for every $\kappa$. It follows that, holding contracts (and the corresponding depth of exploitation) fixed, an effective increase in policy intensity weakly increases sophistication. Correspondingly, the monopolist chooses the optimal depth of exploitation to solve the problem in \eqref{eqn:optimal_x}, with $F(x)$ replaced by $F(x; \tau)$. \\

In the exogenous benchmark, a policy that increases the fraction of sophisticated consumers necessarily improves consumer outcomes because the depth of exploitation does not respond to the policy. The analogous conclusion does not follow directly from an improved distribution of cognitive costs when cognition is endogenous. What matters is not only \emph{whether} cognitive costs fall, but also \emph{how} the policy reshapes the distribution of cognitive costs across consumers. The following condition defines what we call a progressive change in this composition. 

\begin{definition}
A policy intervention is \emph{progressive} if, for every \(\kappa > \kappa^\prime\) such that
\(1-F(\kappa^\prime;\tau)>0\),
\[
    \frac{1-F(\kappa;\tau)}{1-F(\kappa^\prime;\tau)}
\]
is non-increasing in \(\tau\).
\end{definition}

A progressive policy thins the upper tail of the cognitive-cost distribution. For any two thresholds $\kappa>\kappa^\prime$, the share of consumers with costs above $\kappa$, relative to the share with costs above $\kappa^\prime$, falls as the policy strengthens. In turn, the population remaining naive becomes more responsive to exploitation: a further increase in \(x\) produces a larger proportional reduction in the fraction of naive consumers. This strengthens the discipline imposed by cognitive investment, as deeper exploitation more rapidly erodes the monopolist’s relatively more profitable naive-consumer base. The monopolist consequently chooses a lower equilibrium depth of exploitation, reinforcing the direct benefits of lower cognitive costs. The following proposition formalizes these effects.

\begin{prop}\label{prop:enoeasp}
Suppose that consumers' cognitive states are endogenous and that
sophistication-promoting policies are both effective and progressive. Then, as policy intensifies (i.e., $\tau$ increases), the equilibrium
depth of exploitation, \(x^*(\tau)\), decreases and consumer
welfare increases.
\end{prop}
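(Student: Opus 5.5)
The argument has two parts: a monotone comparative-statics step showing that \(x^*(\tau)\) falls as \(\tau\) rises, and a welfare accounting that combines lower depth with lower cognitive costs.

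\textbf{Step 1: depth of exploitation.} Write \(G(x;\tau)\equiv 1-F(x;\tau)\) and \(\Pi(x)\equiv x-L(x)\), so the monopolist maximizes \(G(x;\tau)\Pi(x)\) over \([0,x_{\max}]\). By Proposition~\ref{prop:sanity}, every maximizer is strictly positive, so we may restrict attention to \(x>0\), where \(\Pi(x)>0\) and \(G(x;\tau)>0\) (the latter by the support assumption on \(\bar\kappa\)).

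Take \(\tau'>\tau\), let \(x=x^*(\tau)\) be the minimal maximizer at \(\tau\), and consider any \(s>x\). Optimality at \(\tau\) gives
\[
\frac{G(s;\tau)}{G(x;\tau)}\le\frac{\Pi(x)}{\Pi(s)}.
\]
Progressivity with \(\kappa=s>\kappa'=x\) then yields
\[
\frac{G(s;\tau')}{G(x;\tau')}\le\frac{G(s;\tau)}{G(x;\tau)}\le\frac{\Pi(x)}{\Pi(s)},
\]
so \(s\) does no better than \(x\) at \(\tau'\). This is a single-crossing (log-supermodularity) argument in the sense of Milgrom--Shannon.

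The main obstacle is that this shows only weak improvement. Under the equilibrium selection of the minimal maximizer, \(x^*(\tau')\le x\) still follows:
\begin{itemize}
\item If \(x^*(\tau')>x\), then the chain of inequalities must hold with equality at \(s=x^*(\tau')\).
\item That makes \(x\) also a maximizer at \(\tau'\), and \(x\) is smaller than \(x^*(\tau')\).
\item This contradicts minimality of \(x^*(\tau')\).
\end{itemize}
One technical check is needed: whether \(G(s;\tau)\) can be zero. It cannot for \(s\le x_{\max}\le\bar\kappa\) in the interior, but I would verify the boundary case \(s=x_{\max}\) separately.

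\textbf{Step 2: consumer welfare.} Sophisticated consumers receive zero utility by Lemma~\ref{lemma:contracts_endogenous}, and naive consumers receive \(-x\). A consumer with cost \(\kappa\) therefore obtains \(-\min\{\kappa,x\}\), and consumer welfare is
\[
CW(\tau)=-\int \min\{\kappa,x^*(\tau)\}\,dF(\kappa;\tau).
\]
This is a two-step comparison.
\begin{itemize}
\item Holding \(\tau\) fixed, the integrand \(\min\{\kappa,x\}\) is increasing in \(x\). So the lower depth \(x^*(\tau')\le x^*(\tau)\) weakly raises welfare.
\item Holding \(x\) fixed, \(\min\{\kappa,x\}\) is nondecreasing in \(\kappa\), and effectiveness means \(F(\cdot;\tau')\) first-order stochastically dominates lower costs. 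Hence the expectation weakly falls and welfare weakly rises.
\end{itemize}
Chaining the two comparisons gives \(CW(\tau')\ge CW(\tau)\), with both effects reinforcing each other.
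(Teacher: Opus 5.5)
Your proposal is correct and essentially matches the paper's proof. The paper reads progressivity as decreasing differences of $\log[1-F(x;\tau)]$ in $(x,\tau)$ and invokes monotone comparative statics for the least maximizer, which is what your direct single-crossing argument proves by hand. Its welfare formula $-\int_0^{x^*(\tau)}[1-F(s;\tau)]\,ds$ is your $-\int\min\{\kappa,x^*(\tau)\}\,dF(\kappa;\tau)$ rewritten, with the same two reinforcing effects: effectiveness lowers the survival function and progressivity lowers the upper limit.

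One small correction: $\Pi(x)>0$ need not hold for $x>\bar x$. Either restrict to $[0,\bar x]$ as the paper does (justified by Proposition~\ref{prop:sanity}), or note that any $s$ with $\Pi(s)\le 0$ cannot beat $x$ at $\tau'$.
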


Proposition~\ref{prop:enoeasp} establishes that progressivity is sufficient for an effective policy to improve consumer welfare. Not all effective policies are progressive, however, and consequently are not guaranteed to improve consumer outcomes. Consider an effective policy that disproportionately lowers the costs of consumers who initially have relatively low cognitive costs. In the absence of the policy, these consumers discipline the monopolist because their cognitive-investment decisions are particularly responsive to exploitation. As their costs fall, however, they become sophisticated even at relatively low levels of exploitation. The residual pool on the margin of remaining naive consequently becomes more concentrated among high-cost types and, thus, is less responsive to further increases in exploitation. This weakens the discipline imposed by cognitive investment and may induce the monopolist to increase the equilibrium depth of exploitation. The resulting increase in exploitation can outweigh the direct benefits of lower cognitive costs, causing consumer welfare to fall as the policy strengthens.\footnote{\citet{KosfeldSchuwer2017} likewise show that effective consumer education can reduce welfare after firms adjust. In their model, educating some consumers reduces firms' revenue from hidden charges, inducing higher prices for other consumers and costly avoidance of those charges. \citet{CarlinManso2011} instead show that education may induce financial institutions to renew obfuscation more frequently, undoing consumer learning and reducing welfare. Our mechanism differs from both: the policy changes which consumers remain susceptible to exploitation and may thereby induce the firm to exploit them more aggressively.} The following example demonstrates this possibility.\\


\begin{example} 
Suppose that $u(z)=z$, $v(z)=0$, and $c(z)=(1/2)z$, and that the policy induces the family of cognitive-cost distributions 
\[ F(\kappa;\tau) = \begin{cases} \displaystyle (1-\tau) \frac{\kappa^2} {\kappa^2+(1-\kappa)^2} +\tau\kappa, & 0\leq\kappa\leq\frac12,\\[1.2em] \displaystyle \frac{\kappa^2} {\kappa^2+(1-\kappa)^2}, & \frac12<\kappa\leq1, \end{cases} \qquad \tau\in[0,1]. \]
The policy is effective: $F(\kappa;\tau)$ is non-decreasing in $\tau$ for every $\kappa$, so stronger policy induces first-order stochastically lower cognitive costs. As Figure~\ref{fig:badcosts} shows, however, the policy is not progressive: it lowers cognitive costs among consumers in the lower half of the distribution while leaving the costs of those in the upper half unchanged. The remaining naive population consequently becomes less responsive to exploitation, inducing the monopolist to increase $x^*(\tau)$ which can also lead to decreases in consumer welfare. 
\end{example}

\begin{figure}[ht!]
    \centering

\begin{tikzpicture}

\pgfmathsetmacro{\xminstar}{0.396608}

\begin{groupplot}[
    group style={
        group size=3 by 1,
        horizontal sep=1.25cm
    },
    width=0.30\textwidth,
    height=0.40\textwidth,
    axis line style={black},
    tick style={black},
    title style={font=\small},
    label style={font=\small},
    tick label style={font=\scriptsize}
]


\nextgroupplot[
    xlabel={$\kappa$},
    title={$F(\kappa;\tau)$},
    xmin=0,
    xmax=1,
    ymin=0,
    ymax=1.02,
    xtick={0,0.5,1},
    ytick={0,0.2,0.4,0.6,0.8,1},
    legend style={
        font=\scriptsize,
        draw=none,
        fill=none,
        at={(0.03,0.97)},
        anchor=north west,
        row sep=-2pt
    }
]

\addplot[
    black,
    solid,
    thick,
    domain=0:1,
    samples=200
]
{x^2/(x^2+(1-x)^2)};
\addlegendentry{$\tau=0$}

\addplot[
    black,
    dashed,
    thick,
    domain=0:0.5,
    samples=100
]
{0.5*x^2/(x^2+(1-x)^2)+0.5*x};
\addlegendentry{$\tau=0.5$}

\addplot[
    black,
    dashed,
    thick,
    domain=0.5:1,
    samples=100,
    forget plot
]
{x^2/(x^2+(1-x)^2)};

\addplot[
    black,
    dotted,
    very thick,
    domain=0:0.5,
    samples=100
]
{x};
\addlegendentry{$\tau=1$}

\addplot[
    black,
    dotted,
    very thick,
    domain=0.5:1,
    samples=100,
    forget plot
]
{x^2/(x^2+(1-x)^2)};

\nextgroupplot[
    xlabel={$\tau$},
    title={$x^*(\tau)$},
    xmin=0,
    xmax=1,
    ymin=0.39,
    ymax=0.51,
    xtick={0,0.2,0.4,0.6,0.8,1},
    ytick={0.40,0.45,0.50}
]

\addplot[
    black,
    solid,
    thick,
    parametric,
    variable=\s,
    domain=\xminstar:0.5,
    samples=250
]
({
    (2*\s^4-4*\s^3+5*\s^2-4*\s+1)
    /(8*\s^5-18*\s^4+20*\s^3-11*\s^2+2*\s)
},
{\s});

\nextgroupplot[
    xlabel={$\tau$},
    title={$\mathrm{CS}(\tau)$},
    xmin=0,
    xmax=1,
    ymin=-0.377,
    ymax=-0.355,
    xtick={0,0.2,0.4,0.6,0.8,1},
    ytick={-0.375,-0.370,-0.365,-0.360}
]

\addplot[
    black,
    thick,
    parametric,
    variable=\s,
    domain=\xminstar:0.5,
    samples=250
]
({
    (2*\s^4-4*\s^3+5*\s^2-4*\s+1)
    /(8*\s^5-18*\s^4+20*\s^3-11*\s^2+2*\s)
},
{
    (
        (
            (2*\s^4-4*\s^3+5*\s^2-4*\s+1)
            /(8*\s^5-18*\s^4+20*\s^3-11*\s^2+2*\s)
        )*\s^2/2
    )
    -
    (
        (
            1+
            (2*\s^4-4*\s^3+5*\s^2-4*\s+1)
            /(8*\s^5-18*\s^4+20*\s^3-11*\s^2+2*\s)
        )*\s/2
    )
    -
    (
        (
            (2*\s^4-4*\s^3+5*\s^2-4*\s+1)
            /(8*\s^5-18*\s^4+20*\s^3-11*\s^2+2*\s)
            -1
        )/4
    )*ln(2*\s^2-2*\s+1)
});

\end{groupplot}
\end{tikzpicture}

    \caption{A non-progressive sophistication-promoting policy. The policy
    weakens the responsiveness of the remaining naive population, inducing
    greater exploitation and lower consumer welfare.}
    \label{fig:badcosts}
\end{figure}

The same equilibrium responses that weaken the link between policy effectiveness and consumer welfare also complicate program evaluation; that is, the process of determining whether a given intervention is effective. In the exogenous benchmark, program evaluation is simple in principle: one need only determine whether the intervention increases the fraction of sophisticated consumers. With endogenous cognition, however, the equilibrium fraction of sophisticated consumers is $\eta^*(\tau)=F\bigl(x^*(\tau);\tau\bigr)$, which reflects both the policy's effect on cognitive costs and the equilibrium response of exploitation. Under an effective and progressive policy, these effects work in opposite directions: lower cognitive costs directly promote sophistication, whereas lower exploitation makes costly cognitive investment less necessary. Consequently, observed sophistication may fall in response to an effective and welfare-improving policy. If the endogeneity of cognitive states is ignored, such a beneficial intervention may therefore be misclassified as ineffective. The following example illustrates this possibility.

\begin{example}
Suppose that $u(z)=z$, $v(z)=0$, and $c(z)=1/2 z$, and that the policy induces a family of cognitive-cost distributions with support $[0,1]$ given by
\[ F(\kappa;\tau) = 1-(1-\kappa)e^{-\tau\kappa^2}, \qquad \kappa,\tau\in[0,1]. \]
The policy is effective and progressive. As shown in Figure~\ref{fig:eta}, stronger policy reduces the equilibrium depth of exploitation and increases consumer welfare. The reduction in exploitation is sufficiently large, however, so that the equilibrium fraction of sophisticated consumers decreases with policy intensity. 
\end{example}  

\begin{figure}[ht!]
    \centering
    \begin{tikzpicture}
\begin{groupplot}[
    group style={
        group size=4 by 1,
        horizontal sep=1.6cm
    },
    width=0.27\textwidth,
    height=5.8cm,
    tick label style={font=\footnotesize},
    label style={font=\small},
    title style={font=\small},
    xtick={0, 0.5, 1},
    xticklabels={$0$, $0.5$, $1$},
    every axis plot/.append style={thick}
]
 
\nextgroupplot[
    xmin=0, xmax=1,
    ymin=0, ymax=1,
    ytick={0, 0.5, 1},
    xlabel={$\kappa$},
    title={$F(\kappa;\tau)$}
]
\addplot[black, solid]  table {cdf_tau_c_halfz_1.dat};
\addplot[black, dashed] table {cdf_tau_c_halfz_0p5.dat};
\addplot[black, dotted] table {cdf_tau_c_halfz_0.dat};

\node[font=\scriptsize, anchor=west] at (axis cs:0.08, 0.90) {$\tau=1$};
\node[font=\scriptsize, anchor=west] at (axis cs:0.08, 0.80) {$\tau=0.5$};
\node[font=\scriptsize, anchor=west] at (axis cs:0.08, 0.68) {$\tau=0$};

\nextgroupplot[
    xmin=0, xmax=1,
    xlabel={$\tau$},
    title={$x^{*}(\tau)$}
]
\addplot[black] table {xstar_tau_c_halfz.dat};

\nextgroupplot[
    xmin=0, xmax=1,
    xlabel={$\tau$},
    title={$\mathrm{CS}(\tau)$}
]
\addplot[black] table {cs_tau_c_halfz.dat};
 
\nextgroupplot[
    xmin=0, xmax=1,
    xlabel={$\tau$},
    title={$F(x^{*}(\tau);\tau)$}
]
\addplot[black] table {Fxstar_tau_c_halfz.dat};

\end{groupplot}
\end{tikzpicture}

    \caption{A progressive sophistication-promoting policy. Exploitation falls
    and consumer welfare rises, while equilibrium sophistication decreases.}
    \label{fig:eta}
\end{figure}

\subsubsection{Ex-Post Sophistication-Promoting Policies}\label{ssec:EP}

Ex-post policies improve the mapping from consumers' cognitive-investment decisions into their eventual cognitive states. Such interventions operate after consumers decide whether to invest in cognition but before their contract choices become final. Examples include salient disclosures, compliance checks, and cooling-off periods.
We model this improved mapping in a reduced form as follows: a consumer who does not invest in cognition nevertheless becomes sophisticated with probability $\tau\in[0,1]$. Thus, a higher $\tau$ represents a stronger and more effective ex-post sophistication-promoting policy.\\

For a given depth of exploitation $x$, a consumer who invests in cognition becomes sophisticated with certainty, whereas a consumer who does not invest remains naive with probability $1-\tau$. As such, the benefit of cognitive investment becomes $(1-\tau)x$. Thus the effective fraction of sophisticates is $\eta(x) \equiv F((1-\tau)x)+\tau (1-F((1-\tau)x)$, and the monopolist chooses the optimal depth of exploitation to solve \eqref{eqn:optimal_x} with $F(x)$ replaced with $\eta(x)$. A stronger policy reduces this benefit and crowds out private cognitive investment.\footnote{Related crowding out of private cognition appears in \citet{ArmstrongVickersZhou2009}: a price cap compresses price dispersion, reducing consumers' return to acquiring price information and thereby weakening price competition. In \citet{Grubb2015}, bill-shock alerts directly supply information about whether a penalty applies that consumers could otherwise obtain through costly attention to past usage. Our mechanism differs in that reduced cognitive investment reduces the naive-consumer base and may thereby induce deeper exploitation.} We call this response \emph{cognitive moral hazard}: consumers have less reason to invest in cognition since the policy may protect them anyway.\footnote{A real-world analogue of this behavioral response appears in \citet{patel2021}, who find that introducing free returns increased both product variety per order and subsequent returns, consistent with ex-post protection encouraging less selective ex-ante purchasing.} The following proposition identifies a condition under which cognitive moral hazard induces greater equilibrium exploitation.

\begin{prop}\label{prop:expost_result} 
Suppose that $F$ admits density $f$ and $\frac{\kappa f(\kappa)}{1-F(\kappa)}$ is increasing in $\kappa$.\footnote{The condition that \(\kappa f(\kappa)/[1-F(\kappa)]\) is increasing is the standard \emph{increasing generalized failure rate} (IGFR) condition. It is weaker than an increasing hazard rate: every distribution with an increasing hazard rate is IGFR, while some distributions with decreasing hazard rates are also IGFR. The condition is widely used in pricing and supply-chain applications; see \citet{Lariviere2006}.} Then, as the ex-post sophistication-promoting policy intensifies (i.e., as $\tau$ increases), the equilibrium depth of exploitation, $x^*$, increases. 
\end{prop}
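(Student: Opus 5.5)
The plan is to reduce the monopolist's problem under the ex-post policy to a one-parameter family of objectives, and then apply a monotone comparative statics (Topkis-type) argument in which the IGFR condition delivers the needed single-crossing property.

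\textbf{Setting up the objective.} Given depth $x$ and policy $\tau<1$, the mass of consumers who remain naive is
\[
1-\eta(x)=(1-\tau)\bigl[1-F((1-\tau)x)\bigr].
\]
The case $\tau=1$ is degenerate, since then nobody is naive, and I would set it aside. By Lemma~\ref{lemma:contracts_endogenous}, the monopolist therefore solves
\[
\max_{x\in[0,x_{\max}]}\,(1-\tau)\bigl[1-F((1-\tau)x)\bigr]\bigl[x-L(x)\bigr].
\]
The factor $(1-\tau)$ is a positive constant and can be dropped. Write $g(x)\equiv x-L(x)$ and $\lambda\equiv 1/(1-\tau)$, which is strictly increasing in $\tau$. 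The problem becomes maximizing
\[
\Phi(x,\lambda)\equiv\bigl[1-F(x/\lambda)\bigr]\,g(x).
\]

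\textbf{Restricting to where logs are defined.} As in the proof of Proposition~\ref{prop:sanity}:
\begin{enumerate}[(i)]
\item $g$ is concave with $g(0)=0$, since $L(0)=0$.
\item $g$ is maximized at $\bar{x}>0$, so $g>0$ on $(0,\bar{x}]$.
\item At $x$ slightly above $0$ the objective is strictly positive, because $1-F$ is near $1$. So the maximum value is strictly positive, and every maximizer lies in the set where $g>0$ and $F(x/\lambda)<1$.
\end{enumerate}
The support assumption on $\bar{\kappa}$ guarantees that $1-F(x/\lambda)>0$ throughout $[0,x_{\max}]$. On the relevant set I can therefore work with
\[
\log\Phi=\log\bigl[1-F(x/\lambda)\bigr]+\log g(x).
\]
The second term does not depend on $\lambda$. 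Hence it suffices to show that $\log[1-F(x/\lambda)]$ has increasing differences in $(x,\lambda)$.

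\textbf{Key step: increasing differences from IGFR.} Differentiate with respect to $x$, writing $k=x/\lambda$:
\[
\frac{\partial}{\partial x}\log\bigl[1-F(x/\lambda)\bigr]
=-\frac{f(k)}{\lambda\,[1-F(k)]}
=-\frac{1}{x}\cdot\frac{k\,f(k)}{1-F(k)}.
\]
For fixed $x>0$, raising $\lambda$ lowers $k$. Because $\kappa f(\kappa)/[1-F(\kappa)]$ is increasing, this lowers the generalized failure rate, so the derivative becomes less negative. Thus $\partial_x\log\Phi$ is increasing in $\lambda$.

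Integrating this derivative in $x$ gives increasing differences: for $x'>x$, the difference $\log\Phi(x',\lambda)-\log\Phi(x,\lambda)$ is increasing in $\lambda$. This integration uses only the absolute continuity of $F$ and requires no differentiability of $L$, because the $g$ term cancels in the difference.

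\textbf{Conclusion and main obstacle.} Topkis's theorem (or the Milgrom--Shannon single-crossing theorem) then implies that the set of maximizers is increasing in $\lambda$ in the strong set order. Therefore the smallest maximizer, which is the equilibrium selected by the paper's convention, weakly increases in $\lambda$, and hence in $\tau$.

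I expect the main care to lie in the boundary and selection details rather than in the computation itself. Specifically, I need to confirm that maximizers never sit where $g\le 0$ or where $1-F=0$, so that the log transformation is legitimate. I also need to check that the strong set order delivers monotonicity of the minimal selection. Finally, if strict increase is wanted, I would add a first-order argument at interior optima, where the strict IGFR monotonicity makes the marginal condition strictly shift.
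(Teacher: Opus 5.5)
Your proposal is correct and follows essentially the same route as the paper's proof. Both drop the factor $(1-\tau)$, pass to logs so that $\log[x-L(x)]$ is independent of the policy, obtain increasing differences of $\log[1-F((1-\tau)x)]$ from the IGFR condition, and conclude by monotone comparative statics applied to the selected smallest maximizer. The only difference is cosmetic: the paper differentiates $\log[1-F((1-\tau)x')]-\log[1-F((1-\tau)x)]$ directly in $\tau$, obtaining $\bigl[\gamma((1-\tau)x')-\gamma((1-\tau)x)\bigr]/(1-\tau)\ge 0$ with $\gamma(\kappa)=\kappa f(\kappa)/[1-F(\kappa)]$, whereas you differentiate in $x$ under $\lambda=1/(1-\tau)$ and then integrate, which is equivalent.
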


The condition in Proposition~\ref{prop:expost_result} concerns the elasticity of the population remaining naive with respect to the return to cognitive investment. A stronger ex-post policy lowers this return. When the elasticity is increasing, this decrease moves the cognitive-investment threshold into a region where the naive population is less responsive to exploitation, weakening the discipline imposed on the monopolist. The monopolist consequently increases the equilibrium depth of exploitation.\footnote{This mechanism is related to the progressive ex-ante policies considered above. From the monopolist's perspective, an ex-post policy replaces \(1-F(x)\) with \(1-F((1-\tau)x)\); under the IGFR condition, this transformation thickens the relative upper tail, whereas a progressive ex-ante policy thins it. Accordingly, progressivity guarantees that exploitation decreases, while IGFR guarantees that exploitation increases.}\\

An increase in exploitation, however, does not imply that consumer welfare necessarily falls, because the ex-post policy also directly protects a larger fraction of consumers. Welfare falls only if the increase in \(x^*\) is sufficiently large to outweigh this direct protection, so that consumers who remain naive face greater effective exposure to exploitation. To see this, note that aggregate consumer welfare can be written as 
\[ 
CS(\tau) = -\int_0^{(1-\tau)x^*(\tau)} \bigl[1-F(\kappa)\bigr]\,d\kappa. 
\] 
Hence, $(1-\tau)x^*(\tau)$ is a sufficient statistic for consumer welfare: consumer welfare increases if $(1-\tau)x^*(\tau)$ decreases. If exploitation increases enough to outweigh the decline in $1-\tau$, then $(1-\tau)x^*(\tau)$ rises and consumer welfare can fall. The following example demonstrates this possibility.

\begin{example} Suppose $\kappa\sim U[0,1]$, $u(z)=z$, $v(z)=0$, and \[ c(z)= \begin{cases} \tfrac{z}{2}, & z<\tfrac{3}{4},\\[2pt] \tfrac{4}{5}z-\tfrac{9}{40}, & z\geq\tfrac{3}{4}. \end{cases} \] As Figure~\ref{fig:expost} shows, for low values of $\tau$ the policy backfires: $(1-\tau)x^*(\tau)$ increases, so consumer welfare falls despite the direct protective effect of the intervention. Once $\tau$ exceeds $\tau^*=\tfrac{23}{43}$, exploitation reaches the corner $\bar{x}=1$ and consumer welfare recovers. \end{example}
\begin{figure}[ht!]
\centering

\pgfplotsset{
  expost/.style={
    width        = 0.47\textwidth,
    height       = 0.36\textwidth,
    xmin         = 0,
    xmax         = 1,
    xtick        = {0, 0.2, 23/43, 0.8, 1},
    xticklabels  = {$0$, $0.2$, $\tau^*$, $0.8$, $1$},
    xlabel       = {$\tau$},
    axis lines   = left,
    axis line style   = {thin, black},
    tick style        = {thin, black},
    tick align        = outside,
    ticklabel style   = {font=\small},
    label style       = {font=\small},
    legend cell align = left,
    legend style      = {
      font      = \small,
      draw      = none,
      fill      = none,
      row sep   = 2pt,
    },
  }
}
 
\begin{tikzpicture}
\begin{axis}[
  expost,
  ymin  = 0,
  ymax  = 1.05,
  ytick = {0, 0.25, 0.5, 0.75, 1},
  yticklabels = {$0$, $0.25$, $0.5$, $0.75$, $1$},
  legend style = {
    font         = \small,
    draw         = none,
    fill         = none,
    fill opacity = 0.85,
    text opacity = 1,
    at           = {(0.05,1)},
    anchor       = north west,
    row sep      = 1pt,
  },
]

  \addplot[
    red!75!black,
    densely dashed,
    domain=0:1,
    samples=500,
  ]
  {
    ifthenelse(
      x < 23/43,
      17/40 + 3*x/40,
      1-x
    )
  };
  \addlegendentry{$(1{-}\tau)x^*(\tau)$}

  \addplot[
    blue!80!black,
    domain=0:1,
    samples=500,
  ]
  {
    ifthenelse(
      x < 23/43,
      1/(2*(1-x)) - 3/40,
      1
    )
  };
  \addlegendentry{$x^*(\tau)$}

  \addplot[
    gray!60,
    dotted,
    line width=0.8pt,
  ]
  coordinates {
    (23/43,0)
    (23/43,1.05)
  };

\end{axis}
\end{tikzpicture}%
\hfill%
\begin{tikzpicture}
\begin{axis}[
  expost,
  ymin  = -0.38,
  ymax  = 0.01,
  ytick = {0, -0.1, -0.2, -0.3},
  yticklabels = {$0$, $-0.1$, $-0.2$, $-0.3$},
  legend style = {
    font         = \small,
    draw         = none,
    fill         = white,
    fill opacity = 0.85,
    text opacity = 1,
    at           = {(0.05,0.98)},
    anchor       = north west,
    row sep      = 1pt,
  },
]

  \addplot[
    blue!80!black,
    domain=0:1,
    samples=500,
  ]
  {
    ifthenelse(
      x < 23/43,
      -(17/40 + 3*x/40)
        + 0.5*(17/40 + 3*x/40)^2,
      -(1-x) + 0.5*(1-x)^2
    )
  };
  \addlegendentry{$CS(\tau)$}

  \addplot[
    gray!60,
    dotted,
    line width=0.8pt,
  ]
  coordinates {
    (23/43,-0.38)
    (23/43,0.01)
  };

\end{axis}
\end{tikzpicture}

\caption{%
  Ex-post sophistication policy. The left panel depicts equilibrium
  exploitation $x^*(\tau)$ (solid) and the sufficient statistic
  $(1-\tau)x^*(\tau)$ (dashed). The right panel depicts consumer
  surplus $CS(\tau)$. The vertical lines mark the threshold
  $\tau^*=\tfrac{23}{43}$. For $\tau<\tau^*$, the policy backfires:
  exploitation rises and consumer surplus falls. Once exploitation
  reaches the corner $\bar{x}=1$, consumer surplus recovers.
}
\label{fig:expost}
\end{figure}

A natural question is whether an effective ex-post sophistication-promoting policy can be guaranteed to increase consumer welfare. The answer depends on how strongly excess profits respond to deeper exploitation. Recall that $x-L(x)$ is the monopolist's excess profit from a naive consumer. Since $L$ is convex, $L'(x)$ exists almost everywhere. At every point of differentiability for which $x-L(x)>0$, define 
\[ 
\mathcal{E}(x) \equiv \frac{x\bigl[1-L'(x)\bigr]}{x-L(x)} 
\] 
as the elasticity of excess profits with respect to exploitation. We say that excess profits have \emph{decreasing elasticity} if $\mathcal{E}(x)\geq\mathcal{E}(x')$ for any $x<x'$ at which $\mathcal{E}$ is well-defined.

\begin{prop}\label{prop:elasticity} 
Consumer welfare increases for every cognitive-cost distribution $F \in \mathcal{F}$ as an ex-post sophistication-promoting policy becomes more effective (i.e., as $\tau$ increases) if and only if excess profits have decreasing elasticity. 
\end{prop}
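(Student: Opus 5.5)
The plan is to reduce the problem to a one-dimensional monotone comparative statics exercise in the \emph{effective} return to cognition, $y\equiv(1-\tau)x$. Write $\beta\equiv 1-\tau$ and $G(x)\equiv x-L(x)$, which is concave because $L$ is convex. Under the ex-post policy the naive mass is $1-\eta(x)=\beta\,[1-F(\beta x)]$. So the monopolist's problem \eqref{eqn:optimal_x} becomes, up to the positive constant $\beta$,
\[
\max_{y\in[0,\beta x_{\max}]}\ \Phi(y;\beta)\equiv\bigl[1-F(y)\bigr]\,G(y/\beta).
\]
By the displayed expression for $CS(\tau)$, consumer welfare is a decreasing function of $y^*(\tau)=(1-\tau)x^*(\tau)$ alone. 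Hence the statement is equivalent to the following claim: $y^*$ is non-decreasing in $\beta$ for every $F\in\mathcal{F}$ if and only if $\mathcal{E}$ is non-increasing. For the minimal selection, note that minimal $x^*$ corresponds to minimal $y^*$.

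\emph{Sufficiency.} By the analogue of Proposition~\ref{prop:sanity}, the optimum lies where $G(y/\beta)>0$, so on that region I can work with $\log\Phi$. Formally, and a.e. because $G$ is concave,
\[
\partial_y\log\Phi(y;\beta)=-\frac{f(y)}{1-F(y)}+\frac{\mathcal{E}(y/\beta)}{y}.
\]
Raising $\beta$ lowers $y/\beta$. If $\mathcal{E}$ is non-increasing, this weakly raises the second term for every $y$. Thus $\log\Phi$ has increasing differences in $(y,\beta)$. To avoid needing a density, I will state this without derivatives: for $y'>y$, the difference $\log G(y'/\beta)-\log G(y/\beta)=\int_y^{y'}\mathcal{E}(s/\beta)\,ds/s$ is non-decreasing in $\beta$. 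This uses the absolute continuity of the concave function $G$. The feasible set $[0,\beta x_{\max}]$ is ascending in the strong set order in $\beta$. Topkis's theorem then gives that the minimal maximizer $y^*(\beta)$ is non-decreasing in $\beta$, so $y^*$ is non-increasing in $\tau$ and $CS$ is non-decreasing.

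\emph{Necessity.} Suppose $\mathcal{E}(x_1)<\mathcal{E}(x_2)$ for some $x_1<x_2$. Concavity of $G$ makes $G'$ monotone, and $G$ is absolutely continuous. Using these, I will upgrade this to an interval $[a,b]$ with $0<a<b<\bar x$, on which $G>0$ and $G'>0$, and on which $\int_{a}^{b'}\mathcal{E}(s)\,ds/s$ fails the decreasing-elasticity comparison in a strict, integrated sense. I then fix $\beta_0$ and construct $F$ by setting
\[
1-F(y)=\frac{K}{G(y/\beta_0)}\qquad\text{for }y\in[\beta_0 a,\beta_0 b].
\]
This makes $\Phi(\cdot;\beta_0)$ constant on that interval, and it is decreasing there because $G'>0$. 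Outside the interval, I will let $1-F$ fall steeply enough above $\beta_0 b$, and stay close enough to $1$ below $\beta_0 a$, that these regions are strictly worse. I will then smooth the construction so that $F$ is continuous and strictly increasing with the required support, so $F\in\mathcal{F}$. For $\beta$ slightly below $\beta_0$ (slightly higher $\tau$), the log-objective on the interval becomes $\log K+\log G(y/\beta)-\log G(y/\beta_0)$. Because elasticity is increasing there, this is strictly increasing in $y$. The maximizer therefore jumps from (at most) the minimal point $\beta_0 a$ to near $\beta_0 b$, so $y^*$ rises and consumer welfare falls.

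The main obstacle is this necessity construction. It requires checking that the flat region remains globally optimal under small perturbations of $\beta$. It also requires that the equilibrium selection (minimal depth) does not rescue monotonicity, and that a pointwise failure of decreasing elasticity, which is defined only a.e., can be converted into a strict failure on an interval. I would handle the last point first, using the one-sided derivatives of the concave $G$ to define $\mathcal{E}$ everywhere and comparing $\log G(y/\beta)-\log G(y/\beta_0)$ at two points directly rather than via derivatives.
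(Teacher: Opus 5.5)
Your proposal is correct and follows essentially the paper's own argument. Both use the change of variables to the effective return $(1-\tau)x$, the observation that consumer welfare depends only on that quantity, log-increasing differences with $\mathcal{E}$ as the cross term for sufficiency, and an adversarially constructed $F$ for necessity; your integral form $\int_y^{y'}\mathcal{E}(s/\beta)\,ds/s$ is a slightly cleaner substitute for the paper's cross-partial because it avoids assuming differentiability. The one real difference is that you build a flat region of indifference where the paper uses two points $q'<q''$ with an $\varepsilon$ of slack, and a single violating pair does not make $\mathcal{E}$ increasing on an interval, so you need the two-point comparison you already flag rather than strict monotonicity of $\log G(y/\beta)-\log G(y/\beta_0)$ on $[\beta_0 a,\beta_0 b]$.
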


When $\mathcal{E}(x)$ is decreasing, the proportional return to deeper exploitation declines with $x$. The monopolist therefore does not increase exploitation sufficiently rapidly to offset the policy-induced reduction in the private return to cognitive investment. Consequently, $(1-\tau)x^*(\tau)$ decreases and the policy's direct protective effect dominates.

\subsection{Preference Nudges}\label{subsec:PN}

The final friction underlying exploitation in our benchmark is consumers' dynamic inconsistency itself. Preference nudges target this friction by attenuating the conflict between consumers' long-run objectives and short-run temptation. Examples include defaults, reminders, and commitment prompts that help consumers act according to their long-run preferences.\footnote{Defaults and reminders are discussed in \cite{thaler2008nudge}, while \cite{AllcottSunstein2015} study policies designed to correct `internalities'.} We model such policies as replacing temptation utility $v$ with long-run utility $u$ at the outcome-selection stage for a fraction $\tau\in[0,1]$ of consumers. The nudge therefore removes these consumers' preference conflict and makes their choices dynamically consistent. Because their short-run behavior is now aligned with their long-run preferences, their cognitive states are irrelevant. The remaining fraction $1-\tau$ continues to choose according to $v$ (i.e., remain dynamically inconsistent), so their cognitive investment determines whether these consumers correctly anticipate their subsequent behavior.\footnote{Allowing $\tau<1$ accommodates heterogeneity in responsiveness to preference nudges. It also allows dynamically consistent and dynamically inconsistent consumers to coexist in the market, showing that the model's central mechanism does not depend on all consumers being dynamically inconsistent.} Thus, a higher $\tau$ represents a stronger and more effective preference nudge.\\

Preference nudges alter the monopolist's contracting problem in two related ways. First, dynamically consistent and naive consumers are behaviorally indistinguishable at the contract-selection stage because both evaluate contracts according to $u$. The monopolist therefore cannot screen these consumers through their initial contract choices. Second, the $u$-optimal outcome offered by the contract becomes relevant for realized profits: dynamically consistent consumers subsequently select this outcome, whereas naive consumers instead choose according to $v$. These features apply under both exogenous and endogenous cognition, although with different implications for exploitation.\\

We first consider the benchmark in which cognitive states are exogenous among
dynamically inconsistent consumers. In this case, the monopolist's problem is 
\begin{equation}\label{eqn:nudge}
    \max \limits_{\{\sigma^S, \sigma^N\} \in \mathcal{M}} (1 - \tau)\bar{\eta} \pi(y_v^S) + \tau \pi(y_u^N) + (1 - \tau)(1 - \bar{\eta}) \pi(y_v^N).
\end{equation}

The following proposition establishes that the presence of dynamically consistent consumers only serve to discipline the monopolist in this case.

\begin{prop}\label{prop:exoN} 
Suppose that a fraction $\tau$ of consumers is dynamically consistent and that cognitive states are exogenous among the remaining $1-\tau$ dynamically inconsistent consumers, of whom a fraction $\bar{\eta}\in[0,1]$ is sophisticated. Then, consumer welfare increases as preference nudges become more effective (i.e., as $\tau$ increases). 
\end{prop}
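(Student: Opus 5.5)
The plan is to reduce the monopolist's problem \eqref{eqn:nudge} to a one-parameter family of problems for the naive contract. Then a revealed-preference (monotone comparative statics) argument shows that the depth of exploitation is non-increasing in $\tau$. Consumer welfare then follows from a simple formula.

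\textbf{Step 1 (sophisticated contract and binding constraints).} As in Lemma~\ref{lemma:contracts_exogenous}, the sophisticated contract can be taken to be the singleton $\{(z_u^*,u(z_u^*))\}$. This gives sophisticates zero utility and extracts $W^*$. Its $U$-value is $0$, so it does not attract naive or dynamically consistent consumers provided $U(y_u^N)\ge 0$.

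Write $y_u^N=(\alpha_u,p_u)$ and $y_v^N=(\alpha_v,p_v)$. Let $W(\alpha)\equiv\int(u-c)\,d\alpha$ and $D(\alpha)\equiv\int(u-v)\,d\alpha$, so that $\pi(y)=W(\alpha)-U(y)$. The naive-contract part of the objective is
\[
\tau\bigl[W(\alpha_u)-U(y_u^N)\bigr]+a\bigl[W(\alpha_v)-U(y_v^N)\bigr],\qquad a\equiv(1-\tau)(1-\bar\eta).
\]
Raising both prices equally preserves the within-contract incentive constraints, so participation binds: $U(y_u^N)=0$. Put $x\equiv -U(y_v^N)$. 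The constraint $U(y_u)\ge U(y_v)$ becomes $x\ge 0$, which is also the sophisticates' selection constraint. The constraint $V(y_v)\ge V(y_u)$ becomes $x\le D(\alpha_u)-D(\alpha_v)$. Since the objective is increasing in $x$, this last constraint binds.

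\textbf{Step 2 (separation).} Substituting the binding constraint, the problem becomes
\[
\max_{\alpha_u,\alpha_v}\ \tau W(\alpha_u)+a D(\alpha_u)+a\bigl[W(\alpha_v)-D(\alpha_v)\bigr],
\]
subject to $D(\alpha_u)\ge D(\alpha_v)$. Since $W-D=\int(v-c)$, the $\alpha_v$-part is maximized uniquely at $z_v^*$ for every $\tau$.

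The constraint $D(\alpha_u)\ge D(z_v^*)$ is slack at any optimum. Indeed, $\alpha_u=z_v^*$ is feasible. Moreover, for any candidate that lowers $D$ below $D(z_v^*)$, a mixture with a maximizer of $D$ raises the objective whenever $\tau W+aD$ is otherwise favorable. I would verify this carefully, possibly simply keeping the constraint, since it only truncates feasible sets in a $\tau$-independent way.

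So for $a>0$, $\alpha_u$ maximizes $rW(\alpha_u)+D(\alpha_u)$ with $r\equiv\tau/a$, which is strictly increasing in $\tau$. The depth is $x(\tau)=D(\alpha_u)-D(z_v^*)$. The case $a=0$, i.e.\ $\tau=1$ or $\bar\eta=1$, gives no naive consumers and is handled trivially.

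\textbf{Step 3 (comparative statics and welfare).} Take $r'>r$ with optima $\alpha_u'$ and $\alpha_u$. Adding the two optimality inequalities gives $(r'-r)\bigl[W(\alpha_u')-W(\alpha_u)\bigr]\ge 0$. Substituting back then gives $D(\alpha_u')\le D(\alpha_u)$. Hence $x(\tau)$ is non-increasing for every selection of optima, including the paper's minimal-depth selection.

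Consumer welfare equals $(1-\tau)\bar\eta\cdot 0+\tau\,U(y_u^N)+a\,U(y_v^N)=-(1-\tau)(1-\bar\eta)\,x(\tau)$. This is the product of a non-increasing nonnegative factor $x(\tau)$ and the strictly decreasing factor $1-\tau$, so welfare is non-decreasing in $\tau$. It is strictly increasing whenever $x(\tau)>0$.

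\textbf{Main obstacle.} The delicate step is Step 2: confirming that the binding-constraint substitution and the side constraint $D(\alpha_u)\ge D(z_v^*)$ do not interact with $\tau$ in a way that breaks the revealed-preference argument. I expect this to go through because the feasible set for $\alpha_u$ does not depend on $\tau$, so the two-point revealed-preference comparison applies verbatim. A secondary point is to double-check that no gain can come from letting the sophisticated contract's $u$-outcome matter. It cannot, since sophisticates already receive the efficient surplus-extracting outcome.
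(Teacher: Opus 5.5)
Your proposal is correct and follows essentially the same route as the paper's proof. Both make participation and the $V$-constraint bind, set $\alpha_v=\delta_{z_v^*}$, and reduce the bait choice to maximizing $D+\lambda W$ with $\lambda=\tau/[(1-\tau)(1-\bar\eta)]$ increasing in $\tau$. Both then apply the two-point revealed-preference argument and conclude from $CS(\tau)=-(1-\tau)(1-\bar\eta)x(\tau)$.

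The step you leave open is closed in the paper differently from your sketch. Mixing toward $\arg\max D$ does not work, since it can lower $\tau W+aD$. Instead, write $\tau W(\alpha_u)+aD(\alpha_u)=(\tau+a)D(\alpha_u)+\tau\int(v-c)\,d\alpha_u$. Then any $\alpha_u$ with $D(\alpha_u)<D(z_v^*)$ is strictly beaten by $\delta_{z_v^*}$. This domination is also what justifies fixing $\alpha_v=\delta_{z_v^*}$ in the coupled problem, so your ``keep the constraint'' fallback needs it as well.
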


Recall that, when \(\tau=0\), the contract's \(u\)-optimal outcome serves only as bait because naive consumers ultimately choose according to \(v\). As the fraction of dynamically consistent consumers increases, however, more consumers choose this outcome, giving the bait allocation direct welfare and profit consequences. The monopolist therefore places greater weight on its efficiency, so maintaining any given depth of exploitation requires greater allocative distortions elsewhere in the contract. Because distortions become increasingly costly at the margin, further exploitation becomes less profitable, inducing the monopolist to reduce the depth of exploitation. When cognitive states are exogenous, this within-contract distortion is the only equilibrium force through which preference nudges affect the monopolist's choice of exploitation.\\

With endogenous cognition, the within-contract distortion is still present and its effect on the depth--reach tradeoff must be accounted for.\footnote{In this case, the problem is the same as in \eqref{eqn:nudge}, but with $\bar{\eta}$ replaced with $F(x)$.} Because interacting with a naive consumer is more profitable than with a sophisticated one, the monopolist has an incentive to preserve naivet\'e and this is precisely what limits its desire to exploit. By increasing within-contract distortions, a stronger preference nudge erodes this excess profit and thereby weakens that limiting force.\footnote{\citet{MurookaSchwarz2019} identify a related force in an automatic-renewal market. A policy that facilitates switching disproportionately drives rational consumers away, reducing the firm's return from serving them and making the exploitation of naive consumers relatively more attractive. In our model, the policy instead reduces the excess profit from preserving consumers' naivet\'e, which may induce the monopolist to exploit the remaining naive consumers more aggressively.} Thus, more effective preference nudges may actually increase exploitation.\footnote{\citet{Spiegler2015} likewise emphasizes that firms' equilibrium responses can reverse the intended effects of behavioral nudges. His reversals arise from separate procedural models of default bias, mistaken predictions of future usage, and limited attention, rather than from endogenous sophistication or the depth--reach feedback studied here.}\\

To formalize this trade-off, let $\Delta(\tau)$ denote the additional allocative distortion required to implement a given depth of exploitation when a fraction $\tau$ of consumers is dynamically consistent, where $\Delta(\tau)$ is increasing in $\tau$. For a given depth of exploitation $x$, the monopolist's excess profit from a naive dynamically-inconsistent consumer is $x-L\bigl(x+\Delta(\tau)\bigr)$. A stronger preference nudge has two opposing effects on the monopolist's incentive to exploit consumers. First, because $L$ is convex, the marginal return to deeper exploitation, $1-L'\bigl(x+\Delta(\tau)\bigr)$, falls as $\Delta(\tau)$ increases. Second, the level of excess profit, $x-L\bigl(x+\Delta(\tau)\bigr)$, also falls. The condition provided in the following proposition ensures that within-contract distortions dominate the erosion of the naive-consumer profit base, inducing the monopolist to reduce exploitation as policy strengthens.

\begin{prop}\label{prop:endoN}
Suppose that cognitive states are endogenous among the \(1-\tau\) fraction of
dynamically inconsistent consumers. Suppose further that, for every \(x\),
\begin{equation}\label{eqn:nudge_good}
    \frac{1-L'(x+\Delta)}
         {x-L(x+\Delta)}
\end{equation}
is decreasing in \(\Delta\) for all \(\Delta\geq 0\) for which the expression
is well defined. Then, as the fraction of dynamically consistent consumers, $\tau$, increases, the equilibrium depth of exploitation,
\(x^*(\tau)\), decreases, and consumer welfare increases for every cognitive-cost distribution \(F \in \mathcal{F}\).
\end{prop}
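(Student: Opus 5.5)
The argument splits into a monotone comparative statics step for $x^*(\tau)$ and a welfare step that reads consumer welfare off the equilibrium depth.

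\textbf{Reduced-form problem.} Following the text preceding the statement, I take the monopolist's problem under the nudge to reduce, as in \eqref{eqn:optimal_x}, to choosing $x$. Dynamically consistent consumers always select $y_u^N$. Setting $U(y_u^N)=0$ and giving sophisticates the efficient contract (Lemma~\ref{lemma:contracts_endogenous}), the objective becomes
\[
\Pi(x;\tau)=(1-\tau)\bigl[1-F(x)\bigr]\bigl[x-L(x+\Delta(\tau))\bigr]+\text{terms independent of }x.
\]
This is the reduced form with excess profit $x-L(x+\Delta(\tau))$. The factor $(1-\tau)$ is positive for $\tau<1$, so it does not affect the maximizer. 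At $\tau=1$ no consumer is inconsistent and the claim is trivial.

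\textbf{Comparative statics.} I use a log-supermodularity/single-crossing argument in $(x,\tau)$. On the relevant region, where $x-L(x+\Delta)>0$ (any optimum has this property, as in Proposition~\ref{prop:sanity}), write
\[
\log\Pi=\log(1-\tau)+\log\bigl[1-F(x)\bigr]+\log\bigl[x-L(x+\Delta(\tau))\bigr].
\]
The cross-partial in $(x,\Delta)$ comes only from the last term. Its $x$-derivative is
\[
\frac{1-L'(x+\Delta)}{x-L(x+\Delta)},
\]
which is decreasing in $\Delta$ by assumption \eqref{eqn:nudge_good}. Since $\Delta(\tau)$ is increasing, this $x$-derivative is decreasing in $\tau$. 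So $\log\Pi$ has decreasing differences in $(x,\tau)$, and this holds without any differentiability of $F$, because $\log[1-F(x)]$ is additively separable from $\tau$.

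I will avoid relying on derivatives of $L$, which is only convex. Instead I integrate:
\[
\log\bigl[x'-L(x'+\Delta)\bigr]-\log\bigl[x-L(x+\Delta)\bigr]=\int_x^{x'}\frac{1-L'(s+\Delta)}{s-L(s+\Delta)}\,ds,
\]
using absolute continuity of the convex $L$. This difference is decreasing in $\Delta$ for $x'>x$.

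Topkis's theorem then gives that the set of maximizers is decreasing in $\tau$ in the strong set order. Hence the minimal maximizer $x^*(\tau)$, which is the selected equilibrium, is non-increasing. I also need to check that feasibility does not break this. The feasible set $[0,x_{\max}(\tau)]$ may shrink with $\tau$, but a descending interval is compatible with Topkis, since the lower bound is fixed at $0$.

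\textbf{Welfare.} I express consumer welfare in terms of $x^*$.
\begin{itemize}
\item Dynamically consistent consumers receive $U(y_u^N)=0$.
\item Sophisticates receive $0-\kappa$.
\item Naive inconsistent consumers receive $-x$.
\end{itemize}
So
\[
CS(\tau)=-(1-\tau)\int_0^{x^*(\tau)}\bigl[1-F(\kappa)\bigr]\,d\kappa,
\]
mirroring the ex-post formula. This is increasing in $\tau$ whenever $x^*$ is non-increasing: both factors $(1-\tau)$ and the integral weakly fall. This holds for every $F\in\mathcal{F}$.

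\textbf{Main obstacle.} The hardest step is justifying the reduced form. Specifically, I must show that with a $\tau$ mass of consistent consumers, the optimal naive contract sets $U(y_u^N)=0$ and that the required distortion enters exactly as $L(x+\Delta(\tau))$. That is how $\Delta$ is introduced in the text, so I would take it as given there and verify only that it is admissible in the Topkis step.
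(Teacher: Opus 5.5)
Your conditional comparative static and your welfare formula are correct and match the paper's. The conditional step is the decreasing differences of $\log[x-L(x+\Delta)]$ in $(x,\Delta)$, which you handle by integrating rather than differentiating. But the argument has a genuine gap exactly at the step you set aside.

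The distortion is not a primitive function of $\tau$. It is $\Delta(\alpha_u)=\max_{z}\{u(z)-v(z)\}-\int_0^1[u(z)-v(z)]\,d\alpha_u(z)$, determined by the bait lottery. The monopolist chooses that lottery jointly with $x$, and it now affects profit directly because dynamically consistent consumers consume it. Binding perceived participation and rewriting the $V$-constraint gives the problem $\max_{x,\alpha_u}\{\tau A(\alpha_u)+(1-\tau)[1-F(x)][x-L(x+\Delta(\alpha_u))]\}$, with $A(\alpha_u)=\int_0^1[u(z)-c(z)]\,d\alpha_u(z)$.

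The term $\tau A(\alpha_u)$ that you discard as ``independent of $x$'' is not independent of the choice variables. The optimal $\Delta$ therefore depends on $F$ as well as on $\tau$. In the paper's own example, the monopolist chooses $\Delta$ to maximize $\tau A(\Delta)+(1-\tau)V(\Delta)$, and the chosen $\Delta$ jumps at $\tau^*$. The sentence in the text calling $\Delta(\tau)$ increasing is informal exposition, not a hypothesis of the proposition. So your Topkis step establishes only that $\underline{x}(\Delta)\equiv\min\arg\max_x[1-F(x)][x-L(x+\Delta)]$ is nonincreasing in $\Delta$. That is one ingredient, not the result.

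The missing piece is the outer problem over baits. Let $V(\Delta)\equiv\max_x[1-F(x)][x-L(x+\Delta)]$; it is nonincreasing because $L$ is increasing. For $\tau<1$, write the bait problem as $\max_{\alpha_u}\{V(\Delta(\alpha_u))+\lambda(\tau)A(\alpha_u)\}$, with $\lambda(\tau)=\tau/(1-\tau)$ increasing. For $\tau''>\tau'$, revealed preference gives $A''\ge A'$ and $V(\Delta'')\le V(\Delta')$. It does \emph{not} give $\Delta''\ge\Delta'$, because optimal baits need not be unique and $V$ can be flat.

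\begin{enumerate}
\item If $\Delta''\ge\Delta'$, your conditional result yields $\underline{x}(\Delta'')\le\underline{x}(\Delta')$.
\item If $\Delta''<\Delta'$, monotonicity of $V$ forces $V(\Delta'')=V(\Delta')$. This tie must be resolved through the smallest-depth selection over all optimal menus. The paper argues that the bait chosen at $\tau'$ remains optimal at $\tau''$. Alternatively, since $L$ is increasing, $\underline{x}(\Delta')$ is then also a maximizer at $\Delta''$, so $\underline{x}(\Delta'')\le\underline{x}(\Delta')$.
\end{enumerate}

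Without this step, or some other proof that the selected distortion moves in the right direction for every $F\in\mathcal{F}$, neither the monotonicity of $x^*(\tau)$ nor the welfare conclusion that rests on it is established.
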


Thus, under the condition \eqref{eqn:nudge_good}, a more effective preference nudge increases consumer welfare through three channels: more consumers act according to their long-run preferences, dynamically inconsistent consumers who remain naive are exploited less, and this lower depth of exploitation reduces costly cognitive investment. When the condition fails, however, the monopolist exploits those who remain naive more aggressively, which also increases defensive and wasteful cognitive investment. This response can be strong enough to overturn the policy's direct benefits of promoting dynamic consistency, leading to a reduction in aggregate consumer welfare. The following example illustrates this possibility.

\begin{example} Suppose that \(\kappa\sim U[0,1]\), \(u(z)=z\), \(v(z)=0\), and \[ c(z)= \begin{cases} \frac{1}{2}z & z<\frac{1}{2},\\[3pt] \frac{3}{2}z-\frac{1}{2} & z\geq\frac{1}{2}. \end{cases} \] Maximal long-run allocative welfare is \(W^*=\frac14\), attained at \(z=\frac12\), and the associated allocative-loss function is \(L(x)=0\) for \(x\leq\frac12\) and \(L(x)=\frac12x-\frac14\) for \(x>\frac12\). Writing \(\Delta=1-z_u\), the monopolist chooses \(\Delta\) to maximize \(\tau A(\Delta)+(1-\tau)V(\Delta)\), where \(A(\Delta)=\frac12\min\{\Delta,1-\Delta\}\) and \(V(\Delta)=\max_x(1-x)[x-L(x+\Delta)]\). Before $\tau^* \equiv (4-\sqrt{2})/7$, a stronger nudge gradually changes the bait outcome and reduces exploitation. 
As \(\tau\) crosses \(\tau^*\), the monopolist switches to \(\Delta^*(\tau)=\frac12\), causing exploitation to jump from \(\sqrt{2}/4\) to \(1/2\). After the threshold, exploitation remains at \(x^*(\tau)=1/2\), while consumer welfare increases as the fraction of dynamically inconsistent consumers shrinks. \end{example}
  
     \begin{figure}[ht!]
\centering

\begin{tikzpicture}

\pgfmathsetmacro{\taustar}{(4-sqrt(2))/7}
\pgfmathsetmacro{\xminus}{sqrt(2)/4}
\pgfmathsetmacro{\csminus}{-(5+11*sqrt(2))/112}
\pgfmathsetmacro{\csplus}{-3*(3+sqrt(2))/56}

\begin{axis}[
    width=0.48\textwidth,
    height=0.40\textwidth,
    at={(0,0)},
    anchor=south west,
    xlabel={$\tau$},
    ylabel={},
    title={$x^*(\tau)$},
    title style={font=\small},
    xmin=0,
    xmax=1,
    ymin=0,
    ymax=0.58,
    ytick={0,{sqrt(2)/4},0.5},
    yticklabels={$0$,$\frac{\sqrt{2}}{4}$,$\frac{1}{2}$},
    xtick={0,{(4-sqrt(2))/7},0.6,0.8,1},
    xticklabels={$0$,$\tau^*$,$0.6$,$0.8$,$1$},
    grid=both,
    thick
]

\addplot[
    black,
    thick,
    domain=0:\taustar,
    samples=150
]
{0.5-x/(4*(1-x))};

\addplot[
    black,
    thick,
    domain=\taustar:1,
    samples=2
]
{0.5};

\draw[dashed, gray]
    (axis cs:\taustar,0)
    --
    (axis cs:\taustar,0.55);

\addplot[
    black,
    only marks,
    mark=*,
    mark size=2pt
]
coordinates {(\taustar,\xminus)};

\addplot[
    black,
    only marks,
    mark=o,
    mark size=2pt,
    fill=white
]
coordinates {(\taustar,0.5)};

\addplot[
    black,
    only marks,
    mark=o,
    mark size=2pt,
    fill=white
]
coordinates {(1,0.5)};

\addplot[
    black,
    only marks,
    mark=*,
    mark size=2pt
]
coordinates {(1,0)};

\end{axis}

\begin{axis}[
    width=0.48\textwidth,
    height=0.40\textwidth,
    at={(0.50\textwidth,0)},
    anchor=south west,
    xlabel={$\tau$},
    ylabel={},
    title={$\mathrm{CS}(\tau)$},
    title style={font=\small},
    xmin=0,
    xmax=1,
    ymin=-0.40,
    ymax=0.02,
    ytick={-0.375,-0.25,-0.125,0},
    yticklabels={$-\frac{3}{8}$,$-\frac{1}{4}$,$-\frac{1}{8}$,$0$},
    xtick={0,{(4-sqrt(2))/7},0.6,0.8,1},
    xticklabels={$0$,$\tau^*$,$0.6$,$0.8$,$1$},
    grid=both,
    thick
]

\addplot[
    black,
    thick,
    domain=0:\taustar,
    samples=150
]
{
-(1-x)*
(
    (0.5-x/(4*(1-x)))
    -0.5*(0.5-x/(4*(1-x)))^2
)
};

\addplot[
    black,
    thick,
    domain=\taustar:1,
    samples=150
]
{-3/8*(1-x)};

\draw[dashed, gray]
    (axis cs:\taustar,-0.40)
    --
    (axis cs:\taustar,0);

\draw[dotted, lightgray]
    (axis cs:0,0)
    --
    (axis cs:1,0);

\addplot[
    black,
    only marks,
    mark=*,
    mark size=2pt
]
coordinates {(\taustar,\csminus)};

\addplot[
    black,
    only marks,
    mark=o,
    mark size=2pt,
    fill=white
]
coordinates {(\taustar,\csplus)};

\end{axis}

\end{tikzpicture}

\caption{Preference nudges, equilibrium exploitation, and consumer welfare. The left panel plots the equilibrium depth of exploitation $x^*(\tau)$. The right panel depicts consumer welfare $\mathrm{CS}(\tau)$ as the nudge intensity $\tau$ increases.}  
\label{fig:nudge}
\end{figure}

\section{Conclusion}

We develop a general contracting framework in which consumers can invest costly resources to recognize and avoid their behavioral mistakes. Because the return to sophistication is determined by the consequences of remaining naive, firms' contracts affect not only how deeply naive consumers are exploited but also how many consumers remain susceptible to exploitation. Firms therefore face a trade-off between the depth and reach of exploitation: deeper exploitation raises profit from each consumer who remains naive, but induces greater cognitive investment and reduces the exploitable population. We show that this trade-off disciplines exploitation in the unregulated market and creates a common channel through which competition, sophistication-promoting interventions, and preference nudges can produce unintended consequences.\\

The central lesson for policy evaluation is that neither the depth nor the reach of exploitation can be interpreted in isolation. A reduction in reach need not indicate successful de-biasing. It may instead arise because deeper exploitation induces more consumers to undertake costly defensive investment in sophistication. Conversely, an increase in reach need not indicate policy failure. A policy may reduce exploitation so substantially that cognitive investment becomes less valuable, causing more consumers to remain naive even as their welfare improves. In other words, observed sophistication reflects both consumers' cognitive investment and the exploitation against which cognition protects them. Changes in the cognitive composition of a market are informative about consumer outcomes only when interpreted jointly with changes in exploitation depth.\\

The same logic changes welfare accounting. Cognitive investment consumes real resources and, in our setting, is undertaken defensively to avoid exploitation. A market with many sophisticated consumers may therefore perform poorly precisely because consumers must expend substantial resources protecting themselves. Policy evaluation must account jointly for the losses imposed on consumers who remain naive, the reach of those losses, and the cognitive costs incurred by consumers who avoid them. More generally, an intervention that reduces one margin of exploitation may weaken the discipline operating through the other. Thus, the relevant policy question is not simply whether exploitation reaches more or fewer consumers, but how the intervention changes the relationship between depth and reach and, through that relationship, firms' incentives to exploit.\\

Although we develop the framework for dynamically inconsistent consumers, its central logic extends to other behavioral markets. Existing work on endogenous cognition largely concerns information about external objects, such as prices, product characteristics, valuations, or economic states. An important direction for future research is to study endogenous awareness of internal determinants of behavior. Consumers may invest in understanding whether present bias, projection bias, inertia, salience, loss aversion, or other behavioral forces will govern their choices when a market interaction unfolds. Because firms design the contracts, products, and choice environments in which these forces become consequential, market conduct may determine both the value of awareness and the population that remains unaware. The depth--reach trade-off may therefore provide a common framework for studying markets in which firms influence not only the consequences of consumer mistakes, but also consumers' incentives to recognize them.

\bibliographystyle{jpe}
\bibliography{cleaned_references}

\appendix 
\section{Cognitive Equilibrium and the Revelation Principle} \label{app:cognitive_equilibrium}

In this appendix we define the solution concept \emph{cognitive equilibrium}
and establish that it is without loss of generality to restrict attention to
direct contracts and directed menus. Throughout, ties are resolved according
to the equilibrium selection under consideration. Equivalently, all statements
may be read as applying to a fixed measurable selection from the relevant
argmax correspondences.\\

Consider a two-period game played by a consumer. In period $t=1$ the
consumer chooses an action $a^{1}\in\mathcal A_{1}$. In period $t=2$ the
consumer chooses $a^{2}\in\mathcal A_{2}(a^{1})$, where the feasible set may
depend on the period-1 action. The profile $(a^{1},a^{2})$ determines a
distribution over outcomes $\rho(a^{1},a^{2})\in\Delta(Z\times\mathbb R)$.
For $f\in\{u,v\}$, profile $(a^1,a^2)$ is evaluated as
\[
J_f(a^1,a^2)
\equiv
\int_{Z\times\mathbb R}[f(z)-p] \,d\rho(a^1,a^2).
\]

\begin{definition}[Cognitive Equilibrium]
Fix a two-period game $(U,V,F,\rho,\mathcal A_1,\mathcal A_2)$. A
\emph{cognitive equilibrium} consists of period-2 choice rules
$a_f^2(a^1)$ for $f\in\{u,v\}$, period-1 choices $a_S^1,a_N^1$, and a
fraction of sophisticated consumers $\eta$ such that:
\begin{enumerate}
\item for every $f\in\{u,v\}$ and $a^1\in\mathcal A_1$,
\[
a_f^2(a^1)\in \arg\max_{a^2\in\mathcal A_2(a^1)}J_f(a^1,a^2);
\]
\item
\[
a_S^1\in\arg\max_{a^1\in\mathcal A_1}J_u(a^1,a_v^2(a^1)),
\qquad
a_N^1\in\arg\max_{a^1\in\mathcal A_1}J_u(a^1,a_u^2(a^1));
\]
\item the equilibrium fraction of sophisticated consumers is $\eta=F(x)$, where
\[
x
=
J_u(a_S^1,a_v^2(a_S^1))-J_u(a_N^1,a_v^2(a_N^1)),
\]
defines the depth of exploitation.
\end{enumerate}
Sophisticated consumers thus anticipate their actual $V$-maximizing
continuation choice, whereas naive consumers forecast the continuation choice
that would maximize $U$.
\end{definition}

\begin{lemma}[Revelation Principle]\label{lem:revelation}
For every cognitive equilibrium of any two-period game, there exists a directed
menu of direct contracts that generates the same type-contingent distributions
over realized outcomes and the same value of sophistication $x$.
\end{lemma}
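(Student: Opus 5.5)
The plan is to build the directed menu directly out of the equilibrium objects, compressing each period-1 action into the lottery over outcomes it induces under each relevant continuation rule. Fix a cognitive equilibrium $(a_u^2(\cdot),a_v^2(\cdot),a_S^1,a_N^1,\eta)$. For any period-1 action $a^1$ and $f\in\{u,v\}$, let $y_f(a^1)$ be the outcome whose consumption lottery $\alpha_f(a^1)$ is the $Z$-marginal of $\rho(a^1,a_f^2(a^1))$ and whose price is the expected price under that distribution. Since $U$ and $V$ are linear in the joint distribution, we have $J_g(a^1,a_f^2(a^1))=U(y_f(a^1))$ for $g=u$ and $=V(y_f(a^1))$ for $g=v$. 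The same holds for $\pi$, since profit is also linear. Note that a price lottery collapses to its mean without changing any payoff.

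Next, define the direct contracts
\[
\sigma^S=\{y_u(a_S^1),y_v(a_S^1)\},\qquad \sigma^N=\{y_u(a_N^1),y_v(a_N^1)\}.
\]
The direct-contract inequalities are inherited from condition 1 of the definition. Since $a_u^2(a^1)$ maximizes $J_u$ over $\mathcal A_2(a^1)$, and $a_v^2(a^1)$ is feasible, we get $U(y_u)\ge U(y_v)$; symmetrically, $V(y_v)\ge V(y_u)$.

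For the directed-menu constraints, we use condition 2. Sophisticates evaluate $\sigma^j$ by $U(y_v(a_j^1))=J_u(a_j^1,a_v^2(a_j^1))$. Optimality of $a_S^1$ therefore gives $U(y_v^S)\ge U(y_v^N)$, and naive optimality gives $U(y_u^N)\ge U(y_u^S)$. The participation parts ($\ge 0$) are the step needing care. I would handle them by assuming, as in the paper's setup where every menu contains $\bar\sigma$, that the game includes an outside action yielding $(0,0)$ under either continuation. Its $J_u$ value is then $0$ under both forecasts, so optimality of $a_S^1$ and $a_N^1$ delivers $U(y_v^S)\ge0$ and $U(y_u^N)\ge0$.

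Finally, realized outcomes coincide: both types actually play $a_v^2$. Sophisticates therefore realize $\rho(a_S^1,a_v^2(a_S^1))$, whose payoff-relevant reduction is $y_v^S$, and naives realize $y_v^N$. Moreover $x=J_u(a_S^1,a_v^2(a_S^1))-J_u(a_N^1,a_v^2(a_N^1))=U(y_v^S)-U(y_v^N)$, so $\eta=F(x)$ is unchanged and profits are identical.

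The main obstacle is the claim of ``same distributions over realized outcomes'' when $\rho$ randomizes over prices jointly with consumption. Collapsing prices to conditional means preserves every payoff-relevant quantity but not the literal joint distribution. I would either interpret outcomes up to payoff equivalence, or permit outcomes to be joint lotteries on $Z\times\mathbb R$ (all objectives remain linear, so nothing changes). Tie-breaking among the chosen items is handled by the fixed selection stipulated at the start of the appendix.
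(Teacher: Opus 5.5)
Your proposal is correct and follows essentially the paper's construction: compress the equilibrium period-1 actions into the outcomes they induce under the $u$- and $v$-continuations, get directness from condition 1, contract-selection incentive compatibility from condition 2, and participation from the outside option. The differences are minor. The paper uses the singleton $\sigma^S=\{y_v^S\}$, so naive incentive compatibility needs the extra step $J_u(a_S^1,a_u^2(a_S^1))\ge U(y_v^S)$, which your two-outcome $\sigma^S$ avoids. Your reduction of price lotteries to their means is also more explicit than the paper's treatment of $\rho$ as a lottery-valued outcome.
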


\begin{proof}
Fix a cognitive equilibrium. Let
\[
y_v^S\equiv\rho(a_S^1,a_v^2(a_S^1)),
\qquad
y_u^N\equiv\rho(a_N^1,a_u^2(a_N^1)),
\qquad
y_v^N\equiv\rho(a_N^1,a_v^2(a_N^1)).
\]
The notation treats a distribution over primitive outcomes as a single
lottery-valued outcome, exactly as in the main text.\\

Construct the sophisticated contract as the singleton
$\sigma^S=\{y_v^S\}$. It is direct because the same outcome is selected under
$U$ and $V$. Construct the naive contract as
$\sigma^N=\{y_u^N,y_v^N\}$. By optimality of $a_u^2(a_N^1)$ under $U$ and
$a_v^2(a_N^1)$ under $V$,
\[
U(y_u^N)\ge U(y_v^N),
\qquad
V(y_v^N)\ge V(y_u^N),
\]
so $\sigma^N$ is direct.\\

It remains to verify incentive compatibility at the contract selection stage:  Because $a_N^1$ maximizes the naive
consumer's perceived long-run payoff, $U(y_u^N)
=J_u(a_N^1,a_u^2(a_N^1))
\ge J_u(a_S^1,a_u^2(a_S^1)) $. Since $a_u^2(a_S^1)$ maximizes $U$ following $a_S^1$,
$J_u(a_S^1,a_u^2(a_S^1))\ge U(y_v^S)$. Hence
$U(y_u^N)\ge U(y_v^S)$. Likewise, because $a_S^1$ maximizes the
sophisticated consumer's correctly anticipated long-run payoff, $U(y_v^S)
=J_u(a_S^1,a_v^2(a_S^1))
\ge J_u(a_N^1,a_v^2(a_N^1))
=U(y_v^N) $.\\

Participation follows from the availability of the outside option in the
original game. Therefore $\{\sigma^S,\sigma^N\}$ is a directed menu.
Sophisticated consumers select $\sigma^S$ and realize $y_v^S$; naive
consumers select $\sigma^N$, expect $y_u^N$, and realize $y_v^N$. The
replicated depth of exploitation is
$U(y_v^S)-U(y_v^N)$, which equals the original $x$ by construction.
\end{proof}

\section{Proofs of Section~\ref{sec:results}}
\begin{proof}[Proof of Lemma~\ref{lemma:contracts_exogenous}] We first characterize optimal contracts when the monopolist observes the consumer's cognitive state, and then show that the resulting contracts form a directed menu.
\begin{lemma}\label{lemma:hypo} Suppose the monopolist observes $\omega\in\{S,N\}$. The profit-maximizing realized outcome is $(z_u^*,u(z_u^*))$ for a sophisticated consumer. For a naive consumer it is \[ y_v^N = \left( z_v^*, v(z_v^*)+\max_{z\in Z}\{u(z)-v(z)\} \right), \] supported by a bait outcome $y_u^N=(z',u(z'))$, where $z'\in\arg\max_{z\in Z}\{u(z)-v(z)\}$. \end{lemma}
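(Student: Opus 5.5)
We treat each cognitive state separately, since with $\omega$ observed the monopolist faces a single-type problem. For each type we first derive an upper bound on realized profit from the constraints, and then exhibit a contract that attains the bound.

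\emph{Sophisticated consumer.} The monopolist offers a direct contract whose realized outcome $y_v=(\alpha,p)$ must satisfy participation under correct anticipation, $U(y_v)\ge 0$, i.e.\ $p\le\int u\,d\alpha$. Profit is then
\[
\pi(y_v)=p-\int c\,d\alpha\le\int (u-c)\,d\alpha\le W^*=u(z_u^*)-c(z_u^*).
\]
Uniqueness of $z_u^*$ forces equality only at $\alpha=\delta_{z_u^*}$, and then $p=u(z_u^*)$. The bound is attained by the singleton contract $\{(z_u^*,u(z_u^*))\}$, which is trivially direct.

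\emph{Naive consumer.} A direct contract $\{y_u,y_v\}$ must satisfy three conditions:
\begin{itemize}
\item perceived participation, $U(y_u)\ge 0$;
\item the $V$-choice constraint, $V(y_v)\ge V(y_u)$;
\item the $U$-choice constraint, which can be ignored when bounding profit.
\end{itemize}
Write $y_u=(\beta,q)$ and $y_v=(\alpha,p)$. The $V$-constraint gives
\[
p\le\int v\,d\alpha-\int v\,d\beta+q\le\int v\,d\alpha+\int (u-v)\,d\beta,
\]
where the second inequality uses $q\le\int u\,d\beta$. Since $\int(u-v)\,d\beta\le\max_z\{u(z)-v(z)\}$, it follows that
\[
\pi(y_v)\le\int (v-c)\,d\alpha+\max_z\{u-v\}\le v(z_v^*)-c(z_v^*)+\max_z\{u-v\},
\]
with equality only at $\alpha=\delta_{z_v^*}$ (by uniqueness of $z_v^*$). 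This bound is attained by the contract $y_u=(z',u(z'))$ and $y_v=(z_v^*,\,v(z_v^*)+u(z')-v(z'))$. It remains to verify directness:
\begin{itemize}
\item $V(y_v)=V(y_u)$ holds by construction;
\item $U(y_u)=0$;
\item $U(y_v)=u(z_v^*)-v(z_v^*)-[u(z')-v(z')]\le 0$ by the choice of $z'$, so $U(y_u)\ge U(y_v)$.
\end{itemize}

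The main subtlety is in the naive bound. One must notice that the binding constraints are only the $V$-incentive constraint and perceived participation, and that chaining them turns the bait's preference wedge into an additive price term. Once this is seen, the rest is direct.

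For the remainder of Lemma~\ref{lemma:contracts_exogenous}, one checks that these two separately optimal contracts form a directed menu. The sophisticated consumer gets $U(y_v^S)=0\ge U(y_v^N)$, and the naive consumer gets $U(y_u^N)=0\ge U(y_u^S)=0$, with ties resolved in favor of the directed choice. Since the menu attains the observed-type optimum for each type, it solves the exogenous problem.
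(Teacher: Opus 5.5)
Your proof is correct and follows essentially the same route as the paper: you drop the $U$-choice constraint, chain perceived participation with the $V$-incentive constraint so the bait's wedge $u-v$ becomes an additive price term, maximize $\int(v-c)\,d\alpha$ and $\int(u-v)\,d\beta$ separately, and then verify the omitted constraint. Your upper-bound-and-attainment phrasing, in place of the paper's ``constraints bind at the optimum,'' is slightly tidier, and it also delivers uniqueness of the realized outcome directly from uniqueness of $z_u^*$ and $z_v^*$.
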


\begin{proof} For a sophisticated consumer, the monopolist may offer a single outcome. Individual rationality requires $p\leq u(z)$, so profit is at most $u(z)-c(z)$. It is maximized at $z_u^*$, with price $u(z_u^*)$. \\

For a naive consumer, write the two outcomes as $y_f^N=(\alpha_f^N,p_f^N)$, $f\in\{u,v\}$. The monopolist solves \[ \max_{\alpha_u^N,\alpha_v^N,p_u^N,p_v^N} \quad p_v^N-\int_0^1 c(z)\,d\alpha_v^N(z) \] subject to \[ U(y_u^N)\geq 0, \qquad U(y_u^N)\geq U(y_v^N), \qquad V(y_v^N)\geq V(y_u^N). \] At an optimum, perceived participation binds:  $ p_u^N = \int_0^1 u(z)\,d\alpha_u^N(z)$. The $V$-incentive constraint then implies $p_v^N \leq \int_0^1 v(z)\,d\alpha_v^N(z) + \int_0^1 \bigl[u(z)-v(z)\bigr]\,d\alpha_u^N(z)  $.  It binds because only $p_v^N$ enters realized profit. Consequently, profit is \[ \int_0^1 \bigl[v(z)-c(z)\bigr]\,d\alpha_v^N(z) + \int_0^1 \bigl[u(z)-v(z)\bigr]\,d\alpha_u^N(z), \] which is maximized by concentrating $\alpha_v^N$ on $z_v^*$ and $\alpha_u^N$ on a maximizer $z'$ of $u-v$. Substitution yields the stated prices. Finally, \[ U(y_u^N)-U(y_v^N) = \max_{z\in Z}\{u(z)-v(z)\} - \bigl[u(z_v^*)-v(z_v^*)\bigr] \geq 0, \] so the remaining incentive constraint is satisfied. \end{proof}

The two full-information contracts form a directed menu. Both intended contract-selection outcomes give perceived utility zero. Sophisticated consumers weakly prefer their contract because it gives realized long-run utility zero, while the naive contract gives realized long-run utility $-\bar{x}\leq 0$. Naive consumers perceive utility zero from their intended contract and no more than zero from the sophisticated contract. Thus, the menu attains the full-information upper bound and is optimal without observability. 
\end{proof}

\begin{proof}[Proof of Proposition~\ref{prop:exoiff}] Lemma~\ref{lemma:contracts_exogenous} gives $U(y_v^S)=0$ and \(   U(y_v^N) = u(z_v^*)-v(z_v^*) - \max_{z\in Z}\{u(z)-v(z)\} = -\bar{x}. \) Hence, the market is exploitative if and only if $\bar{\eta}<1$ and $\bar{x}>0$, equivalently, \(   z_v^* \notin \arg\max_{z\in Z}\{u(z)-v(z)\}. \) Indeed, if $z_v^*$ maximizes $u-v$, then it maximizes both $v-c$ and $u-v$, and therefore their sum $u-c$. Uniqueness of the maximizer of $u-c$ implies $z_v^*=z_u^*$. Conversely, if $z_v^*$ does not maximize $u-v$, the defining difference for $\bar{x}$ is strictly positive. \end{proof}

\begin{proof}[Proof of Lemma~\ref{lemma:contracts_endogenous} and Proposition~\ref{prop:sanity}] Fix an implementable $x$. The sophisticated contract is unchanged from the exogenous benchmark and generates profit \(   W^* \equiv \max_{z\in Z}\{u(z)-c(z)\}. \) Since $U(y_v^S)=0$, the  sophistication constraint requires $U(y_v^N)=-x$, or \(   p_v^N = \int_0^1 u(z)\,d\alpha_v^N(z)+x. \) \\

As before, perceived participation of the naive consumer binds, and the bait lottery can be concentrated on a maximizer of $u-v$. The $V$-choice constraint is then equivalent to \(   \max_{z\in Z}\{u(z)-v(z)\} - \int_0^1 \bigl[u(z)-v(z)\bigr]\,d\alpha_v^N(z) \geq x. \) The realized profit from the naive contract is \( x + \int_0^1 \bigl[u(z)-c(z)\bigr]\,d\alpha_v^N(z) = W^*+x-L(x). \) This proves Lemma~\ref{lemma:contracts_endogenous}. 
\end{proof}

\begin{proof}[Proof of Proposition~\ref{prop:sanity}]
Let \(R(x)\equiv x-L(x)\). Its equivalent maximization representation is
\[
W^*+R(x)
=
\max_{\alpha\in\Delta(Z)}
\left\{
x+\int_0^1 [u(z)-c(z)]\,d\alpha(z)
:
\max_{z\in Z}\{u(z)-v(z)\}
-\int_0^1 [u(z)-v(z)]\,d\alpha(z)
\geq x
\right\}.
\]
The objective and constraint are affine in \((x,\alpha)\), and lotteries
can be mixed. Mixing optimal lotteries therefore shows that \(R\) is
concave. Moreover, \(L(0)=0\), so \(R(0)=0\), and, as established
above, \(\bar{x}\) is the unique maximizer of \(R\).\\

Because \(\bar{x}>0\) uniquely maximizes \(R\), \( R(\bar{x})>R(0)=0 \). Concavity therefore implies that, for every \(x\in(0,\bar{x})\), \(R(x) > 0\). Hence, \( [1-F(x)]R(x)>0 \) for every \(x\in(0,\bar{x})\), whereas the monopolist's objective equals zero at \(x=0\). Thus, no equilibrium depth can equal zero.\\

Finally, \(R\) is decreasing to the right of its maximizer \(\bar{x}\),
while \(1-F\) is decreasing and nonnegative. Therefore, no
\(x>\bar{x}\) can outperform \(\bar{x}\).

\end{proof}

\section{Proofs for Section~\ref{sec:PA}}

\subsubsection*{C.1\quad Competition}

\begin{proof}[Proof of Lemma~\ref{lemma:contracts_comp}] We first show that, for each cognitive type present in the market, every contract selected with positive probability maximizes that type's perceived utility subject to nonnegative profit. Suppose instead that a type selects a contract $\sigma^\omega$ that does not solve this problem. Consider a stochastic contract that assigns probability $\alpha$ to $\sigma^\omega$ and probability $1-\alpha$ to a zero-profit contract that maximizes the type's perceived utility. For every $\alpha<1$, the stochastic contract gives the type strictly greater perceived utility than $\sigma^\omega$. After increasing its price by a sufficiently small amount, a rival can therefore make the contract strictly profitable while preserving this strict preference. \\

Under exogenous cognition, the deviating contract attracts all consumers of the relevant type. Under endogenous cognition, as $\alpha\to1$, the stochastic contract and its implied benefit of cognition converge to those in the candidate equilibrium. Continuity of $F$ therefore ensures that, for $\alpha$ sufficiently close to one, a positive mass of the relevant type remains and selects the deviating contract. The deviation consequently earns strictly positive profit, contradicting equilibrium. Hence, every contract selected with positive probability solves the relevant perceived-utility maximization problem subject to nonnegative profit. \\

For sophisticated consumers, perceived and realized contract values coincide. Their problem is to maximize $\int u(z)\,d\alpha(z)-p$ subject to $p\geq\int c(z)\,d\alpha(z)$. The profit constraint binds, leaving the objective $\int[u(z)-c(z)]\,d\alpha(z)$. By uniqueness of $z_u^*$, sophisticated consumers therefore receive $y_v^S=(z_u^*,c(z_u^*))$. \\

Naive consumers instead maximize $U(y_u^N)$ subject to the within-contract choice constraints $U(y_u^N)\geq U(y_v^N)$ and $V(y_v^N)\geq V(y_u^N)$, and nonnegative realized profit $p_v^N\geq\int c(z)\,d\alpha_v^N(z)$. As in Lemma~\ref{lemma:hypo}, the anticipated outcome is provided most cheaply by concentrating $\alpha_u^N$ on some $z'\in\arg\max_z\{u(z)-v(z)\}$. The $V$-choice and zero-profit constraints bind, so $p_v^N=\int c(z)\,d\alpha_v^N(z)$ and $p_u^N=v(z')-\int v(z)\,d\alpha_v^N(z)+p_v^N$. Substitution leaves the choice of $\alpha_v^N$ to maximize $\int[v(z)-c(z)]\,d\alpha_v^N(z)$. By uniqueness of $z_v^*$, naive consumers therefore receive $y_v^N=(z_v^*,c(z_v^*))$, supported by the bait outcome $y_u^N=(z',v(z')-v(z_v^*)+c(z_v^*))$.
\end{proof}

\begin{proof}[Proof of Proposition~\ref{prop:comendo}]
Write $R(s)=s-L(s)$. Fix a monopoly optimum $x\in[0,x^C]$. Optimality relative
to every $s\in(x,x^C]$ requires
\[
[1-F(s)]R(s)\le[1-F(x)]R(x),
\]
and therefore, whenever $R(s)>0$,
\[
1-F(s)\le[1-F(x)]\frac{R(x)}{R(s)}
\le\frac{R(x)}{R(s)}.
\]
It follows that
\[
\int_x^{x^C}[1-F(s)]\,ds
\le
\int_x^{x^C}\frac{x-L(x)}{s-L(s)}\,ds.
\]
Thus the stated condition is sufficient for
$CS^C-CS^M=W^*-\int_x^{x^C}[1-F(s)]ds\ge0$ for every $F$.\\

 For necessity, suppose the condition fails at some $x\in(0,x^C]$ with $R(x)>0$. Since $R$ is nondecreasing on $[0,x^C]\subseteq[0,\bar x]$, define \[ 1-F_0(s)=\frac{R(x)}{R(s)}, \qquad s\in[x,x^C], \] and extend $F_0$ by setting $F_0(s)=0$ below $x$ and $F_0(s)=1$ above $x^C$. This makes every $s\in[x,x^C]$ yield the same monopoly objective $R(x)$, and \[ \int_x^{x^C}[1-F_0(s)]\,ds = \int_x^{x^C}\frac{R(x)}{R(s)}\,ds >W^*. \] The survival function can be perturbed arbitrarily little into one generated by a continuous, strictly increasing distribution with full support. Since the inequality is strict, it continues to hold after a sufficiently small perturbation. Hence, competition reduces consumer welfare for some admissible cognitive-cost distribution, establishing necessity.
\end{proof}

\subsubsection*{C.2\quad Sophistication-Promoting Policies}

\begin{proof}[Proof of Proposition~\ref{prop:exoeasp}]
With exogenous cognitive states and sophisticated share $\tau$, consumer
welfare is $ CS(\tau)=-(1-\bar{\eta}(\tau))\bar x $.
It is strictly increasing in $\tau$.
\end{proof}

\begin{proof}[Proof of Proposition~\ref{prop:enoeasp}]
Let $R(x)=x-L(x)$. The monopolist chooses
\[
x^*(\tau)\in\arg\max_{x\in[0,\bar x]}
[1-F(x;\tau)]R(x).
\]
Where the objective is positive, it has the same maximizers as
\begin{equation}\label{eq:aux}
    \log[1-F(x;\tau)]+\log R(x).
\end{equation}
 
By progressivity, for every $x>x'$, the ratio
$[1-F(x;\tau)]/[1-F(x';\tau)]$ is nonincreasing in $\tau$. Taking logs,
this implies that $\log[1-F(x;\tau)]$ has decreasing differences in $(x,\tau)$. The term
$\log R(x)$ is independent of $\tau$, so objective~\eqref{eq:aux} also has
decreasing differences. Monotone comparative statics therefore implies that
the greatest and least selections from the argmax correspondence are
nonincreasing in $\tau$; under our selection, $x^*(\tau)$ is
nonincreasing.\\

Consumer welfare is
\[
CS(\tau)=-\int_0^{x^*(\tau)}[1-F(s;\tau)]\,ds.
\]
Because the policy is effective, $F(s;\tau)$ is nondecreasing in $\tau$, so the survival
function $1-F(s;\tau)$ is nonincreasing. Progressivity makes the upper limit
$x^*(\tau)$ nonincreasing. Both changes weakly reduce the nonnegative integral,
so consumer welfare is nondecreasing in policy intensity.
\end{proof}

\begin{proof}[Proof of Proposition~\ref{prop:expost_result}]

Let $S(\kappa)=1-F(\kappa)$ and $R(x)=x-L(x)$. Under an ex-post policy $\tau$,
the monopolist chooses $x^*(\tau)\in\arg\max_{x\in[0,\bar x]}S((1-\tau)x)R(x) $.
Where positive, the log-objective is $ H(x,\tau)=\log S((1-\tau)x)+\log R(x)$.
Let $h(\kappa)=f(\kappa)/S(\kappa)$ and
$g(\kappa)=\kappa h(\kappa)$. \\

For \(x'>x\), consider \[ H(x',\tau)-H(x,\tau) = \log\frac{S((1-\tau)x')}{S((1-\tau)x)} + \log\frac{R(x')}{R(x)}. \] The second term is independent of \(\tau\). Because \(F\) admits a density, the first term is absolutely continuous in \(\tau\), with derivative almost everywhere given by \[ x'h((1-\tau)x')-xh((1-\tau)x) = \frac{ g((1-\tau)x')-g((1-\tau)x) }{1-\tau}. \] This expression is nonnegative because \(x'>x\) and \(g\) is increasing. Hence, \(H(x',\tau)-H(x,\tau)\) is nondecreasing in \(\tau\), so \(H\) has increasing differences in \((x,\tau)\). Monotone comparative statics therefore implies that the greatest and least optimal exploitation levels are nondecreasing in \(\tau\). Thus, under our selection, \(x^*(\tau)\) is nondecreasing.\\

\end{proof}

\begin{proof}[Proof of Proposition~\ref{prop:elasticity}]
Let \(R(x)=x-L(x)\) and define its elasticity
\[
\mathcal E(x)=\frac{xR'(x)}{R(x)}
=\frac{x[1-L'(x)]}{x-L(x)}
\]
at points where \(R(x)>0\) and \(R'\) exists. Let
\(q\equiv(1-\tau)x\) be a change of variables. Moreover, define
\(S(x)\equiv1-F(x)\), so that the monopolist's problem can be rewritten as
\[
q^*(\tau)\in\arg\max_{q\in[0,(1-\tau)\bar x]}
S(q)R\!\left(\frac{q}{1-\tau}\right).
\]
Consumer welfare is
\[
CS(\tau)=-\int_0^{q^*(\tau)}S(s)\,ds,
\]
so it is nondecreasing whenever \(q^*(\tau)\) is nonincreasing. 

Set \(a=(1-\tau)^{-1}\), which is increasing in \(\tau\). Applying a
log transformation to the monopolist's objective, the log-objective
becomes $\log S(q)+\log R(aq).$
The cross-partial with respect to \(q\) and \(a\) is
\[
\frac{\partial^2\log R(aq)}{\partial a\,\partial q}
=
\frac{1}{aq}
\frac{d\mathcal E(x)}{d\log x}\bigg|_{x=aq}.
\]
Hence, the log-objective has decreasing differences in \((q,a)\)
whenever \(\mathcal E(x)\) is nonincreasing in \(x\). Monotone
comparative statics then implies that the smallest maximizer
\(q^*(\tau)\) is nonincreasing for every survival function \(S\),
proving sufficiency.\\

For necessity, suppose that \(\mathcal E\) is not nonincreasing. By
the cross-partial calculation above, there exist \(q'<q''\) and
\(a'<a''\) such that \(\log R(aq)\) violates decreasing differences
on \([q',q'']\times[a',a'']\):
\[
\log R(a''q'')-\log R(a''q')
>
\log R(a'q'')-\log R(a'q').
\]
Choose \(\varepsilon>0\) smaller than this strict violation and choose
\(S(q')\) and \(S(q'')\) so that
\[
\log S(q'')-\log S(q')
=
-\bigl[\log R(a'q'')-\log R(a'q')\bigr]-\varepsilon.
\]
The monopolist then strictly prefers \(q'\) to \(q''\) at \(a'\).
Moreover,
\begin{align*}
&\log S(q'')+\log R(a''q'')
-\log S(q')-\log R(a''q')\\
&\quad=
\bigl[\log R(a''q'')-\log R(a''q')\bigr]
-\bigl[\log R(a'q'')-\log R(a'q')\bigr]
-\varepsilon
>0,
\end{align*}
so \(q''\) is strictly preferred to \(q'\) at \(a''\). \\

The two prescribed survival probabilities can be extended to a
nonincreasing survival function \(S\) that makes \(q'\) the unique
maximizer at \(a'\) and \(q''\) the unique maximizer at \(a''\). If
this survival function is not continuous, strictly decreasing, and
of full support, it can be approximated arbitrarily closely by
survival functions satisfying these properties: Because both
comparisons are strict, the reversal persists under all sufficiently
close approximations. Hence, there exists a cognitive-cost
distribution \(F=1-S\) in \(\mathcal F\) for which the monopolist's
selected choice increases as policy intensity rises. Since consumer
welfare is strictly decreasing in the selected choice, consumer
welfare falls for this distribution. Consequently, consumer welfare
cannot be nondecreasing for every cognitive-cost distribution
\(F\in\mathcal F\) unless \(\mathcal E\) is nonincreasing.
\end{proof}

\subsubsection*{C.3\quad Preference Nudges}

\begin{proof}[Proof of Proposition~\ref{prop:exoN}]
We first characterize the monopolist's optimal contracts. Under a nudge of
intensity $\tau$, a fraction $\tau$ of consumers is dynamically consistent.
Among the remaining fraction $1-\tau$ of dynamically inconsistent consumers,
a fraction $\eta$ is sophisticated and a fraction $1-\eta$ is naive.

\begin{lemma}\label{lemma:exoN}
Suppose that cognitive states are exogenous. The profit-maximizing menu contains
the sophisticated contract $y^S=(z_u^*,u(z_u^*))$ and a second contract chosen
by both naive dynamically inconsistent consumers and dynamically consistent
consumers. The latter contract induces
\[
y_u^N
=
\left(\alpha_u^N,\int u(z)\,d\alpha_u^N(z)\right)
\quad\text{and}\quad
y_v^N
=
\left(
z_v^*,
v(z_v^*)+\int [u(z)-v(z)]\,d\alpha_u^N(z)
\right),
\]
where $\alpha_u^N$ may be any lottery supported on
\[
Z(\tau)
\equiv
\arg\max_z
\left\{
\tau [u(z)-c(z)]
+
(1-\tau)(1-\eta)[u(z)-v(z)]
\right\}.
\]
Sophisticated dynamically inconsistent consumers choose $y^S$. Naive
dynamically inconsistent consumers and dynamically consistent consumers choose
the second contract, with the former implementing $y_v^N$ and the latter
implementing $y_u^N$.
\end{lemma}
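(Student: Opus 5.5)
The plan is to mirror the proof of Lemma~\ref{lemma:hypo}: first solve a relaxed problem in which the monopolist treats sophisticated consumers separately, then verify that the resulting contracts form a directed menu. Because dynamically consistent and naive consumers both evaluate contracts by $U$ and believe they will pick the $U$-optimal outcome, both are assumed to select the same contract $\sigma^N=\{y_u^N,y_v^N\}$. Realized profit from this contract is $\tau\pi(y_u^N)+(1-\tau)(1-\eta)\pi(y_v^N)$, and the sophisticated share $(1-\tau)\eta$ contributes $\pi(y_v^S)$.

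For sophisticated consumers, the argument of Lemma~\ref{lemma:hypo} applies verbatim. A single outcome with $p\le u(z)$ is optimal, so the monopolist offers $(z_u^*,u(z_u^*))$.

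For the pooled contract, I would show that the perceived participation constraint binds, so $p_u^N=\int u\,d\alpha_u^N$, and that the $V$-choice constraint binds, so $p_v^N=\int v\,d\alpha_v^N+\int(u-v)\,d\alpha_u^N$. Substituting these prices, the objective becomes
\[
\tau\int[u-c]\,d\alpha_u^N+(1-\tau)(1-\eta)\Bigl(\int[v-c]\,d\alpha_v^N+\int[u-v]\,d\alpha_u^N\Bigr).
\]
This separates in $\alpha_u^N$ and $\alpha_v^N$. The $\alpha_v^N$ part is maximized at the point mass on $z_v^*$, and the $\alpha_u^N$ part is maximized by any lottery supported on $Z(\tau)$. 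This delivers the stated formulas.

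The step I expect to be the main obstacle is justifying that both constraints bind. Raising $p_u^N$ now reduces the utility of dynamically consistent consumers but also raises profit from them. It relaxes the $V$-constraint and tightens the $U$-constraint $U(y_u^N)\ge U(y_v^N)$. I would argue as follows.
\begin{enumerate}[(i)]
\item Raising $p_u^N$ and $p_v^N$ jointly by the same constant preserves both within-contract constraints and strictly increases profit. Hence participation $U(y_u^N)\ge 0$ binds.
\item With $p_u^N$ pinned down, raising $p_v^N$ alone only loosens the $U$-constraint and increases profit. Hence the $V$-constraint binds.
\item Finally, I would check that $U(y_u^N)\ge U(y_v^N)$ holds at the solution. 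It reduces to $\int(u-v)\,d\alpha_u^N\ge u(z_v^*)-v(z_v^*)$, and I must confirm that the maximizer over $Z(\tau)$ satisfies it. If it does not, the point mass on $z_v^*$ as bait attains a weakly higher objective, so any maximizer with a violated constraint can be replaced by one satisfying it, or the constraint is shown to be automatically nonbinding.
\end{enumerate}

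Last, I would verify that the menu is directed, exactly as after Lemma~\ref{lemma:hypo}. Sophisticated consumers get realized utility $0$ from $y^S$ and at most $0$ from the pooled contract. The pooled contract realizes $U(y_v^N)=\int(u-v)\,d\alpha_v^N-\int(u-v)\,d\alpha_u^N\le0$ by step (iii). Naive and dynamically consistent consumers perceive $0$ from both contracts, so ties resolve as intended. Since the relaxed (observable-$S$) bound is attained, the menu is optimal.
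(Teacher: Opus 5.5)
Your proposal is correct and follows the paper's proof essentially step for step: the observable-sophistication upper bound, the binding participation and $V$-constraints, the separable objective, and the final checks of the omitted constraint and of contract selection. Your joint price shift in (i) is also a slightly cleaner way than the paper's to show that participation binds.

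In (iii), however, you should commit to the strict version the paper uses, because ``can be replaced by one satisfying it'' only shows that some maximizer is feasible. Write the bait objective as $[\tau+(1-\tau)(1-\eta)]\int(u-v)\,d\alpha_u^N+\tau\int(v-c)\,d\alpha_u^N$ and use that $z_v^*$ maximizes $v-c$. Then any bait with $\int(u-v)\,d\alpha_u^N<u(z_v^*)-v(z_v^*)$ does strictly worse than $\delta_{z_v^*}$. Hence every lottery supported on $Z(\tau)$ satisfies $U(y_u^N)\ge U(y_v^N)$, which is exactly what the ``any lottery'' clause requires.
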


\begin{proof}
We first derive an upper bound by allowing the monopolist to observe whether a
consumer is sophisticated and then show that the resulting contracts remain
implementable when sophistication is unobserved.\\

A sophisticated dynamically inconsistent consumer correctly anticipates her
outcome-stage behavior. The monopolist can therefore offer her the efficient
contract $y^S=(z_u^*,u(z_u^*))$, which implements $z_u^*$ and extracts her
entire welfare. Profit from each sophisticated consumer is consequently
$\pi_S=u(z_u^*)-c(z_u^*)$.\\

It remains to characterize the contract selected by naive dynamically
inconsistent consumers and dynamically consistent consumers. The two types are
behaviorally indistinguishable at the contract-selection stage because both
evaluate contracts according to $u$, and hence select the same contract. At the
outcome stage, however, a dynamically consistent consumer continues to evaluate
outcomes according to $u$ and implements $y_u^N$, whereas a naive dynamically
inconsistent consumer evaluates outcomes according to $v$ and implements
$y_v^N$.\\

Profit from sophisticated consumers is independent of the second contract.
Omitting this constant, the monopolist solves
\[
\max_{\alpha_u^N,\alpha_v^N,p_u^N,p_v^N}
\left\{
\tau\left[p_u^N-\int c(z)\,d\alpha_u^N(z)\right]
+
(1-\tau)(1-\eta)
\left[p_v^N-\int c(z)\,d\alpha_v^N(z)\right]
\right\}
\]
subject to $U(y_u^N)\geq0$, $V(y_v^N)\geq V(y_u^N)$, and
$U(y_u^N)\geq U(y_v^N)$.\\

We ignore (and later verify) the last constraint. The remaining constraints are
$p_u^N\leq\int u(z)\,d\alpha_u^N(z)$ and
\[
p_v^N
\leq
p_u^N
+
\int v(z)\,d\alpha_v^N(z)
-
\int v(z)\,d\alpha_u^N(z).
\]
Both bind at the optimum: Raising $p_u^N$ directly increases profit from
dynamically consistent consumers and relaxes the upper bound on $p_v^N$;
conditional on $p_u^N$, raising $p_v^N$ increases profit from naive consumers.
Thus, $p_u^N=\int u(z)\,d\alpha_u^N(z)$ and
\[
p_v^N
=
\int u(z)\,d\alpha_u^N(z)
+
\int v(z)\,d\alpha_v^N(z)
-
\int v(z)\,d\alpha_u^N(z).
\]

Substitution reduces the problem to
\[
\max_{\alpha_u^N,\alpha_v^N}
\left\{
\tau\int [u(z)-c(z)]\,d\alpha_u^N(z)
+
(1-\tau)(1-\eta)
\left[
\int [u(z)-v(z)]\,d\alpha_u^N(z)
+
\int [v(z)-c(z)]\,d\alpha_v^N(z)
\right]
\right\}.
\]
The two allocations can be optimized separately. Since \(z_v^*\) maximizes \(v(z)-c(z)\), the optimal \(\alpha_v^N\) assigns unit mass to \(z_v^*\). The optimal \(\alpha_u^N\) places all its mass on \(Z(\tau)\). Moreover, every optimal bait allocation satisfies \[ \int B(z)\,d\alpha_u^N(z)\geq B(z_v^*). \] To see this, write \(C(z)\equiv v(z)-c(z)\), so that \(A(z)=B(z)+C(z)\). The bait objective is therefore \[ \bigl[\tau+(1-\tau)(1-\eta)\bigr] \int B(z)\,d\alpha_u^N(z) + \tau\int C(z)\,d\alpha_u^N(z). \] Because \(z_v^*\) maximizes \(C\), any lottery generating a lower expected value of \(B\) than \(B(z_v^*)\) yields a strictly lower objective than \(\delta_{z_v^*}\). Hence, \[ \int B(z)\,d\alpha_u^N(z)\geq B(z_v^*). \] It follows that \(p_u^N=\int u(z)\,d\alpha_u^N(z)\) and \(p_v^N=v(z_v^*)+\int[u(z)-v(z)]\,d\alpha_u^N(z)\), as stated.\\

It remains to verify the omitted constraint. By construction,
$U(y_u^N)=0$, while
\[
U(y_v^N)
=
u(z_v^*)-v(z_v^*)
-
\int [u(z)-v(z)]\,d\alpha_u^N(z)
=
-x(\tau),
\]
where
\[
x(\tau)
\equiv
\int [u(z)-v(z)]\,d\alpha_u^N(z)
-
\bigl[u(z_v^*)-v(z_v^*)\bigr]
\geq0
\]
is the induced exploitation depth. Hence $U(y_u^N)\geq U(y_v^N)$, so the
omitted constraint is satisfied. Finally, both $y^S$ and $y_u^N$ provide zero
utility at the contract-selection stage. The menu therefore remains
implementable when the monopolist cannot observe sophistication.
\end{proof}

We now establish the comparative statics of exploitation. Define
$A(z)\equiv u(z)-c(z)$ and $B(z)\equiv u(z)-v(z)$. By
Lemma~\ref{lemma:exoN}, $\alpha_u^N(\tau)$ is supported on
\[
Z(\tau)
=
\arg\max_z
\left\{
\tau A(z)+(1-\tau)(1-\eta)B(z)
\right\}.
\]
For $\tau<1$, this is equivalent to maximizing
$B(z)+\lambda(\tau)A(z)$, where
$\lambda(\tau)\equiv\tau/[(1-\tau)(1-\eta)]$ is increasing in $\tau$. Thus, a
stronger nudge places greater relative weight on the allocative welfare
generated by the bait outcome and less relative weight on its usefulness for
exploiting naive consumers.\\

Take $\tau''>\tau'$, choose $z'\in Z(\tau')$ and
$z''\in Z(\tau'')$, and write $\lambda'=\lambda(\tau')$ and
$\lambda''=\lambda(\tau'')$. Optimality gives
$B(z')+\lambda'A(z')\geq B(z'')+\lambda'A(z'')$ and
$B(z'')+\lambda''A(z'')\geq B(z')+\lambda''A(z')$. Adding these inequalities
and using $\lambda''>\lambda'$ yields $A(z'')\geq A(z')$; substitution into
the first inequality then gives $B(z'')\leq B(z')$. Therefore,
\[
A_\tau
\equiv
\int A(z)\,d\alpha_u^N(\tau)(z)
\quad\text{is nondecreasing, whereas}\quad
B_\tau
\equiv
\int B(z)\,d\alpha_u^N(\tau)(z)
\quad\text{is nonincreasing.}
\]

The exploitation depth imposed on a naive consumer is
$x(\tau)=B_\tau-B(z_v^*)$. Because $B(z_v^*)$ is independent of $\tau$ and
$B_\tau$ is nonincreasing, $x(\tau)$ is nonincreasing. \\

Dynamically consistent and sophisticated dynamically inconsistent consumers
receive zero utility, whereas each naive dynamically inconsistent consumer
receives utility $-x(\tau)$. Aggregate consumer welfare is therefore
\[
CS(\tau)=-(1-\tau)(1-\eta)x(\tau).
\]
Both $1-\tau$ and $x(\tau)$ are nonnegative and nonincreasing in $\tau$.
Their product is therefore nonincreasing, so consumer welfare is
nondecreasing.
\end{proof}

\begin{proof}[Proof of Proposition~\ref{prop:endoN}]
Let \(S(x)\equiv1-F(x)\), \(D(z)\equiv u(z)-v(z)\), and \(D^{\max}\equiv\max_{z\in Z}D(z)\). For a bait allocation \(\alpha_u\), define \(A(\alpha_u)\equiv\int_0^1[u(z)-c(z)]\,d\alpha_u(z)\) and \(\Delta(\alpha_u)\equiv D^{\max}-\int_0^1D(z)\,d\alpha_u(z)\).\\

By the same arguments used to derive the benchmark contracting problem, conditional on \((x,\alpha_u)\), sophisticated consumers generate profit \(W^*\), dynamically consistent consumers generate profit \(A(\alpha_u)\), and naive consumers generate profit \(W^*+x-L\bigl(x+\Delta(\alpha_u)\bigr)\). Up to the constant \((1-\tau)W^*\), the monopolist's problem is therefore
\[
\max_{x,\alpha_u}
\left\{
\tau A(\alpha_u)
+
(1-\tau)S(x)
\left[x-L\bigl(x+\Delta(\alpha_u)\bigr)\right]
\right\}.
\]

For a fixed \(\Delta\), define \(G(x,\Delta)\equiv S(x)[x-L(x+\Delta)]\), let \(\underline{x}(\Delta)\equiv\min\arg\max_xG(x,\Delta)\), and define \(V(\Delta)\equiv\max_xG(x,\Delta)\). Wherever \(G(x,\Delta)>0\), the problem has the same maximizers as \(\log S(x)+\log[x-L(x+\Delta)]\), and
\[
\frac{\partial}{\partial x}\log\bigl[x-L(x+\Delta)\bigr]
=
\frac{1-L'(x+\Delta)}{x-L(x+\Delta)}.
\]
By the condition in the proposition, this expression is nonincreasing in \(\Delta\). The conditional objective therefore has decreasing differences in \((x,\Delta)\), so monotone comparative statics implies that \(\underline{x}(\Delta)\) is nonincreasing in \(\Delta\). Moreover, because \(L\) is increasing, \(G(x,\Delta)\) is nonincreasing in \(\Delta\) for every \(x\), and hence \(V(\Delta)\) is nonincreasing in \(\Delta\).\\

For \(\tau<1\), divide the monopolist's objective by \(1-\tau\) and let \(\lambda(\tau)\equiv\tau/(1-\tau)\). The remaining problem over bait allocations is
\[
\max_{\alpha_u}
\left\{
V\bigl(\Delta(\alpha_u)\bigr)
+
\lambda(\tau)A(\alpha_u)
\right\}.
\]
Consider \(\tau''>\tau'\), and let selected optimal bait allocations induce \((A'',\Delta'')\) and \((A',\Delta')\), respectively. Because \(\lambda(\tau)\) increases with \(\tau\), standard revealed-preference arguments imply that \(A''\geq A'\) and \(V(\Delta'')\leq V(\Delta')\).\\

If \(\Delta''\geq\Delta'\), the conditional comparative static gives \(\underline{x}(\Delta'')\leq\underline{x}(\Delta')\). If instead \(\Delta''<\Delta'\), monotonicity of \(V\) gives \(V(\Delta'')\geq V(\Delta')\). Hence, \(V(\Delta'')=V(\Delta')\) and \(A''=A'\), so both bait allocations are optimal at both policy intensities. In particular, the bait allocation selected at \(\tau'\) remains optimal at \(\tau''\). Because the equilibrium selection chooses the smallest exploitation depth among all profit-maximizing menus, the selected depth at \(\tau''\) cannot exceed that at \(\tau'\). Thus, in either case, \(x^*(\tau'')\leq x^*(\tau')\). At \(\tau=1\), no consumers are dynamically inconsistent, and the smallest-depth convention selects \(x^*(1)=0\). Therefore, \(x^*(\tau)\) is nonincreasing in \(\tau\).\\

Finally, consumer welfare is
\[
CS(\tau)
=
-(1-\tau)\int_0^{x^*(\tau)}[1-F(s)]\,ds.
\]
Because both \(1-\tau\) and \(x^*(\tau)\) are nonincreasing in \(\tau\), while the integrand is nonnegative, the consumer-welfare loss is nonincreasing in policy intensity. Hence, consumer welfare is nondecreasing in \(\tau\) for every cognitive-cost distribution \(F\).
\end{proof}

\end{document}